\def\fontsettingup{2} 
\newtheorem{theorem}{Theorem}
\newtheorem{claim}[theorem]{Claim}
\newtheorem*{claim*}{Claim}
\newtheorem{condition}[theorem]{Condition}
\newtheorem{example}[theorem]{Example}
\newtheorem{lemma}[theorem]{Lemma}
\newtheorem{proposition}[theorem]{Proposition}
\newtheorem{corollary}[theorem]{Corollary}
\theoremstyle{definition}
\newtheorem{definition}[theorem]{Definition}
\newtheorem{remark}[theorem]{Remark}
\newtheorem*{remark*}{Remark}
\newtheorem{assumption}{Assumption}
\newtheorem{conjecture}[theorem]{Conjecture}
  \def\*#1{\mathbf{#1}} 
  \def\+#1{\mathcal{#1}} 
  \def\-#1{\mathrm{#1}} 
  \def\^#1{\mathbb{#1}} 
  \def\!#1{\mathfrak{#1}} 
  \def\*#1{\boldsymbol{#1}} 
  \def\+#1{\mathcal{#1}} 
  \def\-#1{\mathrm{#1}} 
  \def\^#1{\mathbb{#1}} 
  \def\!#1{\mathfrak{#1}} 
\def\oPr{\mathbf{Pr}}
\renewcommand{\Pr}[2][]{ \ifthenelse{\isempty{#1}}
  {\oPr\left[#2\right]}
  {\oPr_{#1}\left[#2\right]} } 
\def\oE{\mathbb{E}}
\newcommand{\E}[2][]{ \ifthenelse{\isempty{#1}}
  {\oE\left[#2\right]}
  {\oE_{#1}\left[#2\right]} }
\def\oVar{\mathrm{Var}}
\newcommand{\Var}[2][]{ \ifthenelse{\isempty{#1}}
  {\oVar\left[#2\right]}
  {\oVar_{#1}\left[#2\right]} }
\def\oEnt{\mathbf{Ent}}
\newcommand{\Ent}[2][]{ \ifthenelse{\isempty{#1}}
  {\oEnt\left[#2\right]}
  {\oEnt_{#1}\left[#2\right]} }
\newcommand{\DTV}[2]{\-D_{\mathrm{TV}}\left({#1},{#2}\right)}
\newcommand{\e}{\mathrm{e}}
\renewcommand{\epsilon}{\varepsilon}
\renewcommand{\emptyset}{\varnothing}
\newcommand{\norm}[1]{\left\Vert#1\right\Vert}
\newcommand{\set}[1]{\left\{#1\right\}}
\newcommand{\tuple}[1]{\left(#1\right)} 
\newcommand{\inner}[2]{\left\langle #1,#2\right\rangle}
\newcommand{\tp}{\tuple}
\newcommand{\ol}{\overline}
\newcommand{\abs}[1]{\left\vert#1\right\vert}
\newcommand{\ftp}[1]{\left\lfloor#1\right\rfloor}
\newcommand{\Cor}{\Psi^{\mathrm{sym}}}
\newcommand{\cor}{\Psi^{\-{cor}}}
\newcommand{\Inf}{\Psi}
\newcommand{\diag}{\-{diag}}
\newcommand{\minor}[1]{{\color{red} #1}}
\title{Spectral Independence Beyond Total Influence\\
 on Trees and Related Graphs}
\author{Xiaoyu Chen\thanks{State Key Laboratory for Novel Software Technology, New Cornerstone Science Laboratory, Nanjing University, 163 Xianlin Avenue, Nanjing, Jiangsu, China. \textnormal{E-mails: \url{chenxiaoyu233@smail.nju.edu.cn}, \url{yinyt@nju.edu.cn}, \url{zhangxy@smail.nju.edu.cn}}}
\and 
Xiongxin Yang\thanks{School of Information Science and Technology, Northeast Normal University, 2555 Jingyue Street, Changchun, Jilin, China. \textnormal{E-mail: \url{yangxx500@nenu.edu.cn}}}
\and
Yitong Yin\footnotemark[1]
\and 
Xinyuan Zhang\footnotemark[1]
}
\date{}
\begin{document}
\maketitle
\begin{abstract}
We study how to establish  \emph{spectral independence}, a key concept in sampling, without relying  on total influence bounds,
by applying an \emph{approximate inverse} of the influence matrix.
Our method gives constant upper bounds on spectral independence for two foundational  Gibbs distributions known to have unbounded total influences: 
  \begin{itemize}
  \item The monomer-dimer model on graphs with large girth (including trees).
     Prior to our work, such results were only known for graphs with constant maximum degrees or infinite regular trees, 
     as shown by Chen, Liu, and Vigoda (STOC '21).
  \item The hardcore model on trees with fugacity $\lambda < \mathrm{e}^2$.
    This  remarkably surpasses the well-known $\lambda_r>\e-1$ lower bound for the reconstruction threshold on trees,
    significantly improving upon the current threshold $\lambda < 1.3$,
    established in a prior work by Efthymiou, Hayes, \v{S}tefankovi\v{c}, and Vigoda (RANDOM '23).
  \end{itemize}
Consequently, we establish optimal $\Omega(n^{-1})$  spectral gaps of the Glauber dynamics  for these models on arbitrary trees, 
regardless of the maximum degree $\Delta$.
%
%
 \end{abstract}

\section{Introduction}
%
Sampling from the Gibbs distribution is a fundamental computational problem that has drawn significant interest across various fields, including mathematics, statistical physics, and computer science.
Typical approaches to this problem involve the use of \emph{Markov chain Monte Carlo} (MCMC) methods.
In these methods,  a Markov chain is designed to converge to the desired stationary distribution,
with the goal of rapid convergence, ideally with a mixing time that is polynomially bounded or even near-linear.
However, despite the widespread use of MCMC methods, analyzing the mixing time of Markov chains remains a formidable challenge.
%

To tackle this challenge, an important tool called \emph{spectral independence} was introduced in a seminal work~\cite{anari2020spectral}. This method builds upon recent advancements in the theory of \emph{high-dimensional expanders}~\cite{anari2019logconcave,cryan2019modified,alev2020improved}.

For a distribution $\mu$ over $2^{[n]}$, the influence matrix $\Inf_\mu$ quantifies the correlation between variables and is defined as
\begin{align*}
\forall i,j \in [n], \quad \Inf_\mu(i,j) :=
\begin{cases}
    \Pr[\mu]{j \mid i} - \Pr[\mu]{j \mid \overline{i}} & \text{ if } \Pr[\mu]{i} \in (0, 1),\\
    0 & \text{ otherwise.}
\end{cases}
\end{align*}
The distribution $\mu$ is said to be \emph{$\eta$-spectrally independent} if the maximum eigenvalue $\lambda_{\max}(\Inf_\mu)$ is upper bounded by some $\eta > 0$.
This concept of spectral independence has become highly influential since its introduction to the field.
It has proven to be a powerful tool  for analyzing the mixing time of Markov chains~\cite{anari2019logconcave,cryan2019modified,anari2020spectral,chen2020rapid,chen2021optimal,chen2021rapid,chen2022holant,chen2022optimal,anari2022entropic,chen2022localization,chen2023nearlinear,chen2023strong}.
Moreover, this concept tightly captures the local expansion of high-dimensional expanders~\cite{anari2020spectral} and is intrinsically connected to other well-known mixing properties, such as decay of correlation~\cite{anari2020spectral,chen2020rapid,chen2021optimal,chen2023strong}, coupling~\cite{liu2021coupling,feng2022rapid,blanca2022mixing,chen2023nearlinear,chen2023matching,chen2023strong}, and zero-freeness~\cite{anari2021fractionally,chen2022holant}. 
The notion of spectral independence has also stimulated the development  of several new approaches for analyzing the mixing time of Markov chains, including entropic independence~\cite{anari2022entropic}, field dynamics~\cite{chen2021rapid}, and localization schemes~\cite{chen2022localization}.



However, establishing spectral independence for Gibbs distributions is far from trivial.
In recent years, numerous studies have delved into this endeavor, developing a variety of approaches including correlation decay, stability of polynomial, coupling of local Markov chain, recursive coupling, and (matrix) trickle-down~\cite{chen2020rapid,chen2022holant,liu2021coupling,blanca2022mixing,chen2023nearlinear,oppenheim2018local,abdolazimi2021matrix}.
While these methods have provided valuable insights, it is worth noting that most of them focus on establishing upper bounds for the \emph{total influence}, represented by the infinite norm of influence matrix:
\[
\norm{\Psi_\mu}_\infty := \max_i \sum_j \abs{\Psi_\mu(i,j)},
\]
rather than directly bounding the  maximum eigenvalue $\lambda_{\max}(\Psi_\mu)$. 
This focus on the total influence may not always yield accurate estimates of the spectral independence $\lambda_{\max}(\Psi_\mu)$, especially in critical scenarios.

There are foundational classes of Gibbs distributions conjectured to be spectrally independent, yet proving this conjecture is challenging due to their unbounded total influence and the difficulties in applying more direct approaches for spectral independence. 
Notable examples include the {monomer-dimer model} (matchings) and the {hardcore model} (independent sets) on trees in certain non-uniqueness regimes~\cite{chen2020rapid,chen2021optimal,efthymiou2023optimal}.

It is then important to ask the following question:

\vspace{1ex}
{\centering \it
How can we establish the spectral independence when the total influence is unbounded?
\par}
\vspace{1ex}

One notable method for upper bounding $\lambda_{\max}(\Psi_\mu)$ without relying on total influence is the (matrix) trickle-down method \cite{oppenheim2018local, abdolazimi2021matrix, wang2023sampling}.
In this method, matrix upper bounds $\set{M_{\mu^x}}$ (with $\Psi_{\mu^x} \preceq M_{\mu^x}$) are inductively constructed for the influence matrices $\set{\Psi_{\mu^x}}$ of conditional distributions~$\mu^x$,
with the hope that the resulting upper bound $\lambda_{\max}(M_\mu)\ge \lambda_{\max}(\Psi_\mu)$ is sufficiently tight and relatively easy to analyze.
While this method has proven useful for obtaining spectral independence without total influence bounds, 
particularly for matroid bases~\cite{oppenheim2018local} and edge-colorings \cite{abdolazimi2021matrix, wang2023sampling},
a significant drawback is the highly non-intuitive construction of the upper bound matrices $\{M_{\mu^x}\}$, making it challenging to apply  generally.

On the other hand, when the graphical model becomes a tree,
recent work~\cite{efthymiou2023optimal} introduced a novel inductive approach to bounding  the spectral independence for the hardcore model on trees. 
Compared to other approaches, this inductive method is more intuitive to work with, while directly upper bounding the spectral independence without relying on the total influence bounds.
However, as noted in~\cite{efthymiou2023optimal}, this approach heavily relies on the inductive structure inherent in trees, thereby limiting the potential for generalization to  graphs with cycles.


In this paper, we apply a new direct approach for establishing spectral independence, 
based on an \emph{approximate inverse}  of the influence matrix.
The method is particularly intuitive on trees while exhibiting promising potential for generalization to non-trees.

We instantiate our method on the monomer-dimer model and the hardcore model.
The Gibbs distributions of the monomer-dimer model and the hardcore model on an undirected graph $G=(V,E)$ with a fugacity $\lambda > 0$ are defined as follows:
\begin{itemize}
\item The Gibbs distribution $\mu$ of the \emph{monomer-dimer model} on graph $G$ with fugacity $\lambda$, is supported over the set of matchings in graph $G$. The measure of each matching $M$ is given by $\mu(M) \propto \lambda^{\abs{M}}$.
\item The Gibbs distribution $\mu$ of the \emph{hardcore model} on graph $G$ with fugacity $\lambda$, is supported over the set of independent sets in graph $G$. The measure of each independent set $I$ is given by $\mu(I) \propto \lambda^{\abs{I}}$.
\end{itemize}

\paragraph{The monomer-dimer model.}
%
The mixing time for the monomer-dimer model stands as a foundational problem within MCMC theory.
 The seminal works of Jerrum and Sinclair \cite{jerrum1989approximating, sinclair1988randomised, jerrum1996markov, jerrum2003counting} established an $O(n^2m\log n)$ mixing time for the monomer-dimer model with $n$ vertices and $m$ edges.
   The only notable improvement since then has been  the $O(\exp(\Delta^{O(1)})\cdot m \log m)$ mixing time for graphs with bounded maximum degree $\Delta$,
   achieved in a breakthrough of Chen, Liu, and Vigoda~\cite{chen2021optimal}.
   It is widely believed that establishing spectral independence is a key step towards settling the problem~\cite{chen2021optimal,liu2023spectral}.

As noted in~\cite{chen2021optimal} and~\cite{liu2023spectral},
the monomer-dimer model on the infinity $\Delta$-regular tree $\mathbb{T}_{\Delta}$ has the total influence $\norm{\Psi_\mu}_\infty=\Theta(\sqrt{\lambda \Delta})$, which may become arbitrarily large as the degree $\Delta$ grows.
However, in contrast, the spectral independence remains bounded as $\lambda_{\max}(\Psi_\mu)=O_{\lambda}(1)$ on $\mathbb{T}_{\Delta}$,  regardless of $\Delta$. 
Furthermore, the following was conjectured.
\begin{conjecture}[\cite{chen2021optimal}]\label{conjecture:monomer-dimer-SI}
For any graph $G=(V,E)$ and any $\lambda>0$,
the Gibbs distribution $\mu$ of the monomer-dimer model on $G$ with fugacity $\lambda$ has 
$\lambda_{\max}(\Psi_\mu)=O_{\lambda}(1)$.
\end{conjecture}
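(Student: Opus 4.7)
My plan is to bound $\lambda_{\max}(\Psi_\mu)$ by constructing a sparse \emph{approximate inverse} of $\Psi_\mu$, supported only on pairs of edges sharing an endpoint. The underlying principle is that if one can find a sparse $A$ with $A\Psi_\mu$ close to $I$ in operator norm, then $\Psi_\mu$ is close to $A^{-1}$, and $\lambda_{\max}(\Psi_\mu)$ is controlled by $1/\sigma_{\min}(A)$. This converts the nonlocal task of estimating the spectrum of the dense matrix $\Psi_\mu$ into the local task of analyzing a sparse $A$ that only sees one-step, adjacent-edge correlations.

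For the tree case, the construction of $A$ is guided by the Heilmann-Lieb recursion for the matching polynomial. That recursion shows that, for two distinct edges $e,e'$ at adjacency-distance $k$ in a tree, the pairwise influence $\Psi_\mu(e,e')$ admits a telescoping factorization along the unique path of adjacent edges $e = e_0 \sim e_1 \sim \cdots \sim e_k = e'$. Exploiting this, I would define a sparse \emph{propagator} $B$ supported on adjacent-edge pairs, with entries equal to the one-step influences across shared endpoints, and take $A = I - B$ (possibly after a diagonal similarity); on a tree the identity $(I-B)\Psi_\mu = I$ holds exactly. The heart of the argument is then to show that the spectral radius of $B$, after a suitable symmetrization using the diagonal of the monomer-dimer tree recursion fixed point, is bounded away from $1$ \emph{uniformly in the maximum degree $\Delta$}. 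A naive row-sum bound on $B$ scales like $\sqrt{\lambda \Delta}$ and fails, reflecting the known unboundedness of total influence; the needed bound must instead come from a Perron-Frobenius eigenvector benchmarked against the infinite regular tree $\mathbb{T}_\Delta$.

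For general graphs with cycles, the exact identity $(I-B)\Psi_\mu = I$ degrades to an approximate one, with error arising from walks that revisit vertices and so escape the single-path factorization. A clean way to handle this is to lift the problem to Weitz's self-avoiding-walk tree, where the tree identity is restored, and then transfer the spectral bound back to $G$ via a PSD-comparison between the influence operator on $G$ and an aggregate of influence operators on SAW-trees rooted at each edge. I anticipate two main obstacles: first, proving $\rho(B) \leq 1 - c(\lambda)$ uniformly in $\Delta$, which is the central quantitative step and requires delicate use of the monomer-dimer tree recursion; and second, transferring from trees (or large-girth graphs, where cycle contributions are suppressed by $\rho(B)^{\Omega(\mathrm{girth})}$) to arbitrary graphs, where the SAW-tree machinery introduces exponential blow-up that must be absorbed by the PSD-comparison step.
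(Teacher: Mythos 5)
Before anything else, note that the target statement is a \emph{conjecture}, not a theorem of this paper. The paper proves it only for trees (\Cref{thm:matching-tree}) and for graphs of girth $g = \Omega(\sqrt{\lambda\Delta}\log(\lambda\Delta))$ (\Cref{thm:matching-SI-large-girth}), and it moreover exhibits a multigraph of girth $2$ on which $\lambda_{\max}(\Psi_\mu) = \Omega(\sqrt{\Delta})$ (\Cref{thm:matching-SI-lb}), forcing a restriction to simple graphs (\Cref{con:mathcing-O(1)-SI}). Your second flagged obstacle --- the exponential blow-up of the SAW-tree on a general graph --- is exactly what is left open: the paper's transfer only works when the girth is large enough that cycle contributions decay geometrically, and you propose no mechanism beyond the words ``PSD-comparison'' to absorb the blow-up at small girth. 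So the plan, as sketched, cannot prove the statement in the generality stated.

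There is also a concrete error in your tree-case construction. You take $A = I - B$ with $B$ the one-step influence propagator on adjacent edge pairs and assert $(I-B)\Psi_\mu = I$ on a tree. This is false. On a star with center vertex of degree $3$ and fugacity $\lambda$, write $c := \lambda/(1+2\lambda)$; then $\Psi(e_i,e_j) = -c$ for $i \neq j$ and $B(e_i,e_j) = -c$, so $\bigl((I-B)\Psi\bigr)(e_1,e_1) = 1 - 2c^2 \neq 1$ and $\bigl((I-B)\Psi\bigr)(e_1,e_2) = -c^2 \neq 0$; no diagonal similarity repairs this, since the product is not even diagonal. The product-distance factorization $\Psi(e,f) = \Psi(e,g)\Psi(g,f)$ along the unique path does correctly annihilate the \emph{non-adjacent} entries of $(I-B)\Psi$, but the adjacent and diagonal entries come out wrong. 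The correct sparse operator, with exactly the same adjacent-pair support, is $Q := \sum_{v} Q_v - I$, where $Q_v$ is the inverse of the \emph{local block} $\Cor|_{E_v \times E_v}$ embedded back into $\mathbb{R}^{E\times E}$; this is the inverse of a product distance matrix, and it satisfies $Q\,\Cor = I$ exactly on a tree. With that $Q$ the $\Delta$-uniform eigenvalue bound $\lambda_{\min}(Q) \geq \tfrac{1}{2\lambda+1}$ holds (by analyzing the rank-one perturbation structure of each local block) and gives the tree case. Your diagnosis that the crux is a $\Delta$-uniform lower bound on the spectrum of the sparse inverse is correct, but the operator has to be this one, and the lifting to general graphs currently stops at large girth.
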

\noindent
This conjecture aligns with the renowned $\widetilde{O}(n^2m)$ mixing time bounds in~\cite{jerrum1989approximating,jerrum2003counting}, where the additional factor $n^2$ may be due to a limitation of the canonical path approach.


Inspired by~\cite{liu2023spectral}  and several previous works such as~\cite{chen2021optimal,bapat2013product}, 
we calculate the spectral independence $\lambda_{\max}(\Psi_\mu)$ by analyzing the inverse $\Psi_\mu^{-1}$ of the influence matrix.
When the graph $G$ is a tree, the influence matrix $\Psi_\mu$ becomes a \emph{product distance matrix} on this tree, and its inverse  $\Psi_\mu^{-1}$ reveals the underlying structure of $G$.
%
Then, by examining the ``local influence matrices'' (see~\Cref{thm:Psi-lambda-max} and~\Cref{lem:edge-boundedness}), 
we gain a much clearer understanding of  spectral independence on arbitrary trees.

We prove the following result, confirming \Cref{conjecture:monomer-dimer-SI} on trees.
Consequently, this spectral independence bound also implies an optimal spectral gap for the \emph{Glauber dynamics},
which is a canonical Markov chain for sampling from the Gibbs distribution.

\begin{theorem} \label{thm:matching-tree}
  Let $T = (V, E)$ be a tree of $n$ vertices and  $\lambda > 0$.
  The Gibbs distribution $\mu$  of the monomer-dimer model on $T$ with fugacity $\lambda$ has 
  \[
  \lambda_{\max}(\Psi_\mu) \leq 2\lambda + 1.
  \]
  Moreover, the Glauber dynamics on $\mu$ has asymptotically optimal spectral gap $\Omega_\lambda(n^{-1})$.
\end{theorem}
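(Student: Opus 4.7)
My plan is to bring this theorem under the approximate-inverse framework developed earlier in the paper. Theorem~\ref{thm:Psi-lambda-max} bounds $\lambda_{\max}(\Psi_\mu)$ once one can exhibit a decomposition of an approximate inverse of $\Psi_\mu$ into locally supported pieces whose ``edge boundedness'' is controlled via Lemma~\ref{lem:edge-boundedness}. The work is then to produce such a decomposition for the monomer-dimer model on a tree and to bound each local piece by a function of $\lambda$ alone, uniformly in the degree.

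The structural key is that, on a tree $T$, the covariance/influence matrix of the monomer-dimer model is a \emph{product distance matrix}: for two distinct edges $e,f$, the unique path joining them makes $\mathrm{Cov}_\mu(\*1_e,\*1_f)$ factor as a signed product $\pm\prod_u \rho_u$ of local contraction factors $\rho_u\in(0,1)$, one per internal vertex $u$ of the path, each being a single-vertex marginal ratio produced by the standard monomer-dimer tree recursion. After the diagonal conjugation by $\mathrm{diag}(\Pr[\mu]{e}\Pr[\mu]{\bar e})^{1/2}$---which makes $\Psi_\mu$ similar to a symmetric matrix without changing the spectrum---the product-distance structure is inherited. By the classical theory of distance matrices on trees (Bapat and collaborators), the inverse $\Psi_\mu^{-1}$ is then \emph{sparse}: its support lies on pairs of edges sharing a common vertex, and it admits an explicit decomposition $\Psi_\mu^{-1}=\sum_{v\in V}L_v$ where each $L_v$ is supported on the star at $v$ and is essentially the inverse of the local star influence matrix.

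With this decomposition in hand, the local task is to bound the largest eigenvalue of the star influence matrix for edges incident to $v$, after conditioning on everything outside the star. A direct computation---the conditional distribution at $v$ is a ``choose at most one incident edge'' distribution with effective weights produced by the monomer-dimer recursion---shows that every such local matrix has largest eigenvalue at most $1+\lambda$, regardless of $\deg(v)$. Feeding these local bounds into Lemma~\ref{lem:edge-boundedness} and using the fact that each edge is counted once per endpoint in $\sum_v L_v$ yields the global bound $\lambda_{\max}(\Psi_\mu)\le 2\lambda+1$. The same argument applies verbatim at every pinning, because conditioning on any partial matching leaves a monomer-dimer distribution on a subforest of $T$, so spectral independence with the same constant holds for every $\mu^\tau$.

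Finally, uniform $(2\lambda+1)$-spectral independence at all pinnings, together with the trivial marginal lower bound $\Pr[\mu^\tau]{e\in M},\Pr[\mu^\tau]{e\notin M}\ge \min(1,\lambda)/(1+\lambda)$, yields an $\Omega_\lambda(1/n)$ spectral gap for Glauber dynamics through the standard Alev--Lau / Chen--Liu--Vigoda local-to-global theorem; the matching $O(1/n)$ upper bound is classical. The main obstacle I anticipate is verifying the product-distance form of $\Psi_\mu$ with the correct diagonal normalization and checking that the star-by-star decomposition of $\Psi_\mu^{-1}$ combines without residual cross terms at leaves; obtaining exactly $2\lambda+1$ rather than, say, $(1+\lambda)^2$ is the delicate place where the explicit sparse form of $\Psi_\mu^{-1}$, and not merely qualitative sparsity, is essential.
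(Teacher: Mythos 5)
Your framework is the right one and matches the paper's: pass to the symmetrized influence matrix $\Cor_\mu$, observe it is a product-distance matrix on the line graph of $T$ so that its inverse decomposes into star-supported blocks (Proposition~\ref{prop:product-inverse}), and then invoke Lemma~\ref{lem:edge-boundedness} to convert a uniform bound on $\lambda_{\max}(\Cor_u)$ for each star into a lower bound on $\lambda_{\min}(Q)=\lambda_{\min}(\Cor_\mu^{-1})$. The gap is in the numerics, which you yourself flag as ``the delicate place'' but do not carry out. Your stated local bound $\lambda_{\max}(\Cor_u)\le 1+\lambda$ is true --- it is the crude estimate $\Cor_u\preccurlyeq\ol{D}_u^{-1}\preccurlyeq(1+\lambda)I$ noted in the remark after Lemma~\ref{lem:edge-boundedness} --- but plugging $\beta=1+\lambda$ into that lemma requires $\beta<2$, i.e.\ $\lambda<1$, and even then yields only $\lambda_{\max}(\Psi_\mu)\le 1/\alpha=(1+\lambda)/(1-\lambda)$, not $2\lambda+1$. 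The factor ``each edge appears in two stars'' is already absorbed into the $\alpha = 2/\beta-1$ of Lemma~\ref{lem:edge-boundedness}, so there is no additional factor of two to collect. What actually closes the argument is the sharper bound $\lambda_{\max}(\Cor_u)\le\tfrac{2\lambda+1}{\lambda+1}$, which stays strictly below $2$ for all $\lambda>0$ and gives $\alpha=\tfrac{1}{2\lambda+1}$. Establishing it is the real content of Lemma~\ref{lem:lambda-min-Q}: one uses the ratio constraint $\mu_1/(1-\sum_{e\sim u}\mu_e)\le\lambda$ to derive $\sum_{e\sim u}\mu_e\le\tfrac{2\lambda}{2\lambda+1}$ (Claim~\ref{claim:mu-bound}), then Weyl's inequality to show $\beta I-\Cor_u$ has at most one negative eigenvalue, and finally the matrix-determinant lemma together with the mediant-type inequality of Claim~\ref{claim:Q-spectral-aux} to rule that eigenvalue out. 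None of this is a ``direct computation.''

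The spectral-gap deduction is also not correct as stated. The marginal $\Pr[\mu^\tau]{e}$ is \emph{not} bounded below by $\min(1,\lambda)/(1+\lambda)$: on a degree-$d$ star it equals $\lambda/(1+d\lambda)=\Theta(1/d)$, hence vanishes with the degree, so a marginal-bounded local-to-global theorem (such as Lemma~\ref{lem:optimal-mixing}) gives only a $\Delta$-dependent guarantee. To obtain $\Omega_\lambda(n^{-1})$ uniformly over trees, the paper instead combines the spectral-independence bound with the field-dynamics boosting of Lemma~\ref{lem:boost-SI}, taking $\theta=\min\{1,\tfrac{1}{10\lambda}\}$, and with the approximate-tensorization base case for small fugacity of Lemma~\ref{lem:easy-matching}. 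That extra machinery in Section~\ref{sec:miss-proof} is essential, not a routine appeal to a black-box local-to-global theorem.
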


Indeed, the spectral independence bound of \Cref{thm:matching-tree} for trees follows as a special case from a more general theorem stated next. 
For general graphs with cycles, the inverse $\Psi_\mu^{-1}$ may no longer possess the simple structure as a product distance matrix on a tree.
Therefore, we turn to using an \emph{approximate} inverse of the influence matrix $\Psi_\mu$, which effectively approximates the maximum eigenvalue while preserving the structure of the underlying graph.
The result is stated in the following theorem, which establishes a general trade-off between the girth and the spectral independence in the monomer-dimer model.

\begin{theorem} \label{thm:matching-SI-large-girth}
  Let $G=(V,E)$ be a graph with maximum degree $\Delta$ and girth $g$ for  $g\in[3,\infty]$.
  The Gibbs distribution $\mu$  of the monomer-dimer model on $G$ with fugacity $\lambda > 0$ has the spectral independence
  \[
  \lambda_{\max}(\Inf_\mu) \leq (2\lambda + 1)\tp{4\tp{\sqrt{1 + \lambda\Delta} + 1}\tp{1-\frac{2}{\sqrt{1+\lambda\Delta}+1}}^{\left\lfloor(g-1)/4\right\rfloor} + 1}.
  \]
In particular, if $g\ge 10\sqrt{\lambda\Delta}\cdot\log(\lambda\Delta)$, then $\lambda_{\max}(\Inf_\mu) =O(\lambda+1)$.
\end{theorem}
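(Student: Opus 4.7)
The plan is to use the approximate-inverse technique alluded to in the prose before the theorem: build a matrix $M$ that is locally a near-exact inverse of $\Inf_\mu$ on each tree-like neighborhood of $G$, and bound $\lambda_{\max}(\Inf_\mu)$ by combining the tree bound from \Cref{thm:matching-tree} with a spatial-decay estimate for the residual $\Inf_\mu M - I$. The crucial geometric fact is that in a graph of girth $g$, every ball of radius at most $\lfloor(g-1)/2\rfloor$ induces a tree, so locally the monomer-dimer model is indistinguishable from its counterpart on a tree and the tree analysis can be imported.

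First I would fix a truncation depth $d := \lfloor(g-1)/4\rfloor$, and for each edge $e$ let $T_e$ be the induced subgraph on the ball of radius $d$ around $e$. By the girth hypothesis $T_e$ is a tree, and on $T_e$ the monomer-dimer influence matrix carries the product-distance structure used in \Cref{thm:matching-tree}, whose exact inverse is sparse and supported essentially on adjacent-edge pairs. Assembling these local exact inverses into a single sparse matrix $M$ on the full edge set of $G$ yields the desired approximate inverse. On every tree-like piece, $M$ inherits the quadratic-form bounds driving the proof of \Cref{thm:matching-tree}, and this is what contributes the overall factor $(2\lambda+1)$ in the final estimate.

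Second I would estimate the tail contribution. On a $\Delta$-regular tree, the edge-to-edge monomer-dimer influence at graph distance $r$ decays like $\bigl(1-\tfrac{2}{\sqrt{1+\lambda\Delta}+1}\bigr)^{r}$, while the number of edges at distance $r$ grows at rate at most $\Delta^{r}$. For edge pairs at distance below $\lfloor(g-1)/2\rfloor$, $G$ still agrees with a tree, so these tree estimates apply verbatim; for farther pairs, correlations have already decayed enough that crude bounds suffice. Balancing the exponential decay against the neighborhood growth then produces exactly the amplification factor $4(\sqrt{1+\lambda\Delta}+1)$ multiplying the truncation term $\bigl(1-\tfrac{2}{\sqrt{1+\lambda\Delta}+1}\bigr)^{d}$. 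Combining this with the tree bound gives the stated inequality, and the ``in particular'' conclusion follows by checking that $g\ge 10\sqrt{\lambda\Delta}\log(\lambda\Delta)$ makes the truncation decay swallow the $O(\sqrt{\lambda\Delta})$ amplification factor.

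The main obstacle I expect is step two: rigorously passing from the tree-local structure of $M$ to a global spectral bound on $\Inf_\mu$, because $M$ fails to be an exact inverse on $G$ in the presence of cycles, and cycles of length exactly $g$ sit just outside each $T_e$ and could in principle produce second-order corrections. Aligning the per-edge local quadratic-form bound (the role of the edge-boundedness lemma flagged earlier) with the spatial-decay estimate, and carefully tracking how the neighborhood growth rate $\Delta^{r}$ conspires with the $\sqrt{\lambda\Delta}$-scale decay to produce precisely the constants $4(\sqrt{1+\lambda\Delta}+1)$ and $\lfloor(g-1)/4\rfloor$ in the theorem, is where I anticipate the bulk of the technical effort to lie.
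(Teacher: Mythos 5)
Your high-level plan — construct an approximate inverse and control the residual via tree-like decay — matches the paper's strategy, but the specific construction and the accounting of constants differ in ways that matter, and one ingredient you invoke would be circular.

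\paragraph{The approximate inverse is much coarser than what you propose.}
The paper does not build its approximate inverse from the exact inverses of $\Inf_\mu$ on radius-$d$ balls around each edge. Instead (\Cref{def:Q-edge}), for each \emph{vertex} $v$ it takes $\Cor_v$, the $d_v\times d_v$ principal minor of $\Cor$ on the edges incident to $v$, inverts it, pads with zeros, and sets $Q = \sum_{v\in V} Q_v - I$. This matrix is supported only on pairs of edges sharing a vertex. The reason this works is \Cref{prop:product-inverse}: on a tree, $\Cor$ is a product-distance matrix on the line graph, whose inverse decomposes additively into the inverses of its blocks (the stars around each vertex) minus identity corrections, so $Q\Cor = I$ \emph{exactly} on trees. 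Your construction with radius-$d$ ball inverses does not inherit this additive block decomposition — overlapping balls would be double-counted in a way that the simple subtraction of $I$ cannot repair — so the verification that $M$ is approximately an inverse would require a new argument that you have not supplied.

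\paragraph{The amplification factor is not a growth-versus-decay tradeoff.}
You attribute the factor $4(\sqrt{1+\lambda\Delta}+1)$ to balancing exponential decay of individual influences against the $\Delta^r$ growth of neighborhoods. That is not how the paper gets it. The total-influence-decay input (\Cref{lem:inf-decay}, from \cite{chen2021optimal}) already sums over \emph{all} edges at distance $k$: it gives $\sum_{f:\dist(e,f)=k}\abs{\Psi(e,f)} \le C(1-\delta)^k$ with $C=2$, so the neighborhood growth is already absorbed. The factor $4(\sqrt{1+\lambda\Delta}+1)$ is simply $2C/\delta$ — the geometric-series tail $\sum_{k\ge g}(1-\delta)^k = (1-\delta)^g/\delta$ from \Cref{lem:tree-graph} — not a balance between decay and growth. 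Your explanation of where the constant comes from is wrong, even though the numeric value matches.

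\paragraph{Invoking \Cref{thm:matching-tree} would be circular.}
You propose to import the tree bound $(2\lambda+1)$ from \Cref{thm:matching-tree}. But as the paper explains, \Cref{thm:matching-tree} is derived as the $g=\infty$ special case of \Cref{thm:matching-SI-large-girth}, so this cannot be used as an input. The $(2\lambda+1)$ factor in the paper's proof instead comes from an independent lower bound $\lambda_{\min}(Q)\ge 1/(2\lambda+1)$ (\Cref{lem:lambda-min-Q}), which holds on \emph{arbitrary} graphs and is proved via a direct analysis of the local matrices $\Cor_u = \ol{D}_u^{-1} - \sqrt{\*r_u}\sqrt{\*r_u}^\top$ using Weyl's inequality and the matrix determinant lemma. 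You would need to supply that lemma or its analogue rather than leaning on the tree theorem.

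In short, the high-level narrative (approximate inverse plus decay) is correct, but the actual construction the paper uses is simpler and sparser than yours, the constant does not arise the way you claim, and one of your ingredients is circular.
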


\Cref{thm:matching-SI-large-girth} is proved in \Cref{sec:matching-SI}.
The spectral independence result in \Cref{thm:matching-tree} follows as a special case of \Cref{thm:matching-SI-large-girth} when $g=\infty$.
The spectral gap in \Cref{thm:matching-tree} is proved in \Cref{sec:miss-proof}  by applying existing techniques.

The trade-off between girth and spectral independence observed in \Cref{thm:matching-SI-large-girth} may be somewhat intrinsic to the monomer-dimer model.
This is showcased by the following extreme example with girth equals 2 (i.e.~the graph $G$ is a multigraph with parallel edges).
\begin{example} \label{thm:matching-SI-lb}
  %
Let $C_n$ be a cycle consisting of $n$ vertices for a sufficiently large $n$.  
Let $G$ be a multigraph obtained by replacing each edge of $C_n$ with $\Delta/2$ parallel edges, where $\Delta \geq 100$.
The Gibbs distribution  $\mu$ of the monomer-dimer model on $G$ with fugacity $\lambda = 1$ has    $\lambda_{\max}(\Psi_{\mu}) \ge \frac{\sqrt{\Delta}}{10}$.

\end{example}
\Cref{thm:matching-SI-lb} is proved in \Cref{sec:matching-SI-lb}.
This counterexample shows that the conjectured constant spectral independence of the monomer-dimer model in \Cref{conjecture:monomer-dimer-SI} requires a minimum girth of 3 to prevent the presence of parallel edges.

We consider the girth-2 scenario to be an exceptional case. 
Accordingly, we refine above \Cref{conjecture:monomer-dimer-SI} to apply specifically to simple graphs.


\begin{conjecture} \label{con:mathcing-O(1)-SI}
For any \emph{simple} graph $G=(V,E)$ and any $\lambda>0$,
the Gibbs distribution $\mu$ of the monomer-dimer model on $G$ with fugacity $\lambda$ has 
$\lambda_{\max}(\Psi_\mu)=O_{\lambda}(1)$.
\end{conjecture}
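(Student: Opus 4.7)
My plan is to attack \Cref{con:mathcing-O(1)-SI} by strengthening the approximate-inverse framework behind \Cref{thm:matching-SI-large-girth} so as to eliminate its dependence on the girth $g$ whenever $G$ is simple. The starting decomposition is $\lambda_{\max}(\Psi_\mu)\le\lambda_{\max}(\Psi_\mu M)\cdot\lambda_{\max}(M^{-1})$ for a carefully chosen approximate inverse $M$; on trees the paper takes $M=\Psi_\mu^{-1}$, which has a tridiagonal-like structure following the tree, and for large-girth graphs the same $M$ works with a geometric error controlled by $g$. I would first translate the edge-indexed problem on $G$ into a \emph{path-tree} setting via Godsil's theorem: for any vertex $v$, the ratio $Z_G(\lambda)/Z_{G\setminus v}(\lambda)$ of matching polynomials equals the analogous ratio on the path tree $T_{G,v}$, whose vertices are simple paths in $G$ starting at $v$. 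Since marginals and pairwise correlations in the monomer--dimer model are exactly such polynomial ratios, the influence $\Psi_\mu(e,f)$ admits a closed-form expression as an alternating sum over simple paths in $G$ joining $e$ and $f$.

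Leveraging this expression, I would design a new approximate inverse $M_G$ whose sparsity pattern mirrors $G$ itself but whose entries incorporate explicit corrections coming from triangles and short cycles, computed from the conditional path-tree measures. The intuition is that the multigraph counterexample in \Cref{thm:matching-SI-lb} fails precisely because parallel edges create many length-$2$ simple paths between their endpoints; in a \emph{simple} graph, the number of short simple paths between any pair of edges is combinatorially constrained, so after subtracting the correct ``local tree'' contributions at each path length the residual should telescope. Combining the tree bound of \Cref{thm:matching-tree} applied along each local path-tree with the refined $M_G$ should then yield a bound on $\lambda_{\max}(\Psi_\mu)$ independent of both $\Delta$ and $g$.

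The main obstacle will be bounding $\|M_G^{-1}\|$ uniformly. On a tree this norm is $O_\lambda(1)$, but for a graph with cycles $M_G$ decomposes as a tree contribution plus a cycle-correction term whose operator norm is controlled only in sign---for instance via the Heilmann--Lieb interlacing for the matching polynomial---rather than in magnitude. The crux is to prove that the cycle corrections interact constructively with the tree part, ideally by showing that $M_G$ remains a sum of positive-semidefinite contributions, one per block of the cycle structure of $G$. If this positivity can be established, then Heilmann--Lieb zero-freeness together with \Cref{thm:matching-tree} close the conjecture; otherwise a genuinely new algebraic identity for $\Psi_\mu^{-1}$ on graphs with cycles will be required, which I regard as the principal barrier.
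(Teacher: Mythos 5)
The statement you are addressing is not a theorem of the paper at all: it is \Cref{con:mathcing-O(1)-SI}, which the paper explicitly leaves open. The paper only establishes the large-girth case (\Cref{thm:matching-SI-large-girth}, requiring $g=\Omega(\sqrt{\lambda\Delta}\cdot\log(\lambda\Delta))$), exhibits a girth-$2$ counterexample (\Cref{thm:matching-SI-lb}), and states that all girths $3\le g\le O(\sqrt{\lambda\Delta}\cdot\log(\lambda\Delta))$ remain unresolved. Your text is likewise a research plan rather than a proof: the two steps that would actually close the conjecture --- the construction of the cycle-corrected approximate inverse $M_G$ and a girth-independent lower bound on $\lambda_{\min}(M_G)$ (equivalently, an upper bound on $\norm{M_G^{-1}}$) --- are left entirely unexecuted, and you concede in the last paragraph that they are the principal barrier. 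So there is a genuine gap, and it coincides with the gap the paper itself identifies; nothing in your proposal gets past it.

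Two specific points in your plan would also fail as stated. First, in the paper's framework the girth dependence enters through \Cref{lem:tree-graph}: the approximation error of $Q\,\Cor$ is $2C(1-\delta)^g/\delta$ with $\delta\approx 1/\sqrt{\lambda\Delta}$ (\Cref{lem:inf-decay}), so for constant girth the error is of order $\sqrt{\lambda\Delta}$ and the method gives nothing better than the trivial total-influence bound; removing this requires genuine cancellation among the path-tree contributions of short cycles, not merely ``subtracting local tree contributions,'' and your claim that these residuals ``telescope'' in simple graphs is asserted without any identity or estimate to support it. Second, your heuristic that simplicity ``combinatorially constrains'' the number of short simple paths between two edges is quantitatively false in the relevant regime: in dense simple graphs (e.g.\ near-complete graphs) the number of simple paths of length $k$ between two fixed edges grows like $\Delta^{k-O(1)}$, so the distinction from the parallel-edge counterexample cannot rest on path counts alone; any correct argument must use simplicity in a finer way (the counterexample shows the needed inequality is sign- and magnitude-sensitive, exactly where you acknowledge that Heilmann--Lieb interlacing controls sign but not magnitude). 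Also, a small inaccuracy: on trees the paper's exact inverse is of the symmetrized matrix $\Cor_\mu$ built from local blocks (\Cref{def:Q-edge}, \Cref{prop:Cor-Q-I}), not of $\Psi_\mu$ itself, and the bound is $\lambda_{\max}(\Cor)\le\lambda_{\max}(Q\,\Cor)/\lambda_{\min}(Q)$ (\Cref{thm:Psi-lambda-max}); this matters because positive semidefiniteness, which your positivity strategy relies on, is only available for the symmetrized object.
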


\Cref{thm:matching-SI-large-girth} shows that the conjecture is true when the girth $g$ is as large as some $g=\Omega(\sqrt{\lambda\Delta}\cdot\log(\lambda\Delta))$, and \Cref{thm:matching-SI-lb} shows that the conjecture is false when the girth $g=2$. The remaining unresolved cases are graphs with girth $3\le   g \leq O(\sqrt{\lambda\Delta}\cdot\log(\lambda\Delta))$.


\paragraph{The hardcore model.}
For the hardcore model, a compelling computational phase transition occurs at the uniqueness threshold $\lambda_c(\Delta)=\frac{(\Delta-1)^{\Delta-1}}{(\Delta-2)^\Delta}\approx \frac{\e}{\Delta}$.
When $\lambda<\lambda_c(\Delta)$, the Gibbs distribution has bounded total influence and the Glauber dynamics has optimal mixing time~\cite{weitz2006counting,anari2020spectral,chen2020rapid,chen2021optimal,chen2021rapid,anari2022entropic,chen2022optimal,chen2022localization}. 
And when $\lambda>\lambda_c(\Delta)$,  sampling becomes computationally intractable~\cite{sly2010computational,sly2012computational,galanisSV16}.

We then focus on the hardcore model on trees, where sampling remains tractable even beyond the uniqueness threshold.
Pioneered by a series of seminal works~\cite{martinelli2003ising, martinelli2004fast,Berger2005Glauber},
the mixing time in this case has been proved to be always polynomially bounded.
On one hand, when $\lambda < \lambda_c(\Delta)$, the Glauber dynamics is known to achieve optimal spectral gap and mixing time~\cite{chen2021rapid,chen2022optimal};
on the other hand, for sufficiently large $\lambda > C$, a polynomially large mixing lower bound has been proved~\cite{restrepo2014phase}, where $C \approx 28$ as pointed out by~\cite{efthymiou2023optimal}.
A critical problem for the hardcore model on trees is then to determine the critical threshold for fast mixing with optimal spectral gap.
Given the universality of the spectral independence~\cite{anari2023universality}, settling this problem is inherently  related to the spectral independence of the hardcore model on trees.

A recent work~\cite{efthymiou2023optimal} shows that when $\lambda < 1.3$, the spectral independence for the hardcore model on trees remains bounded, even though the total influence on trees may become unbounded beyond uniqueness~\cite{anari2020spectral, chen2020rapid}.
The following conjecture is proposed based on the reconstruction threshold.

\begin{conjecture}[\cite{efthymiou2023optimal}]\label{conjecture:hardcore-SI}
For any  tree $T$ with $n$ vertices, 
the spectral gap of the Glauber dynamics for the hardcore model on $T$ with fugacity $\lambda<\e-1$ is at least  $\Omega(n^{-1})$.
\end{conjecture}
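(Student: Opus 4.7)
The plan is to prove the conjecture via the approximate-inverse strategy developed in this paper. First I would establish a constant upper bound $\lambda_{\max}(\Psi_\mu) = O_\lambda(1)$ on the spectral independence of the hardcore Gibbs distribution on any tree $T$ with fugacity $\lambda < \e - 1$, and then derive the $\Omega(n^{-1})$ spectral-gap lower bound from this bound via the standard local-to-global framework (supplemented, where necessary, by a field-dynamics or entropic-independence reduction to bypass the fact that hardcore marginals at high-degree vertices on a tree need not be uniformly bounded away from zero). The first step is the technical core of the argument; the second is by now standard machinery, paralleling the treatment of the spectral gap in \Cref{thm:matching-tree}.

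For the spectral independence bound, the approach mirrors the treatment of the monomer-dimer model in this paper. On a tree $T$, the hardcore influence $\Psi_\mu(u,v)$ for $u \neq v$ admits a signed product expansion
\[
  \Psi_\mu(u,v) \;=\; (-1)^{d(u,v)}\prod_{e \in \mathrm{path}(u,v)} \abs{w_e},
\]
where the edge weights $w_e$ are obtained from the Jacobian of the standard hardcore tree recursion $R_v = \lambda \prod_{w \in \mathrm{children}(v)} (1+R_w)^{-1}$ evaluated at the marginal ratios along the $u$-$v$ path. Consequently, $\Psi_\mu$ can be written as $D^{1/2} S D^{1/2}$ for a diagonal $D$ (capturing variances) and a \emph{signed} product distance matrix $S$ on $T$, so that $\Psi_\mu^{-1}$ inherits the sparse, tree-supported structure exploited in \Cref{thm:Psi-lambda-max} and \Cref{lem:edge-boundedness}. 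The maximum eigenvalue of $\Psi_\mu$ then reduces to an edge-by-edge analysis of the local blocks of $\Psi_\mu^{-1}$, precisely as in the monomer-dimer case; the signs of the $w_e$'s play no role because only $\abs{w_e}$ enters into the local block bounds.

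The main obstacle will be verifying that, whenever $\lambda < \e - 1$, the edge weights $\abs{w_e}$ are uniformly bounded below the critical value required by the local block analysis, \emph{across all trees and all conditional hardcore measures} encountered during the induction. Each $\abs{w_e}$ is a function of the marginal ratios $R_v$ at the endpoints of $e$, which are themselves determined by the tree recursion and can in principle be pushed towards criticality by high-degree children. Concretely, at a degree-$d$ vertex the worst-case effective weight behaves like $d\, R^\star/(1+R^\star)$ where $R^\star \approx W(\lambda d)/d$ is the relevant fixed point (and $W$ is the Lambert-$W$ function); a careful monotonicity analysis in $d$ should identify $\lambda = \e - 1$ as precisely the threshold at which $\sup_d d\, R^\star/(1+R^\star)$ reaches $1$, which is why this threshold appears in the conjecture and why the analysis breaks down beyond it. Making this uniform across all conditional distributions arising in the local-to-global induction, without ever losing the constant slack required by the edge-block bound, is the main technical hurdle.
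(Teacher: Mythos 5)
Your high-level plan is the right one and matches the paper's: use the approximate-inverse framework (\Cref{thm:Psi-lambda-max}) to bound $\lambda_{\max}(\Psi_\mu)$, reduce the verification of $\alpha$-boundedness to a local condition on the correlations along each tree edge, and then pass from spectral independence to the $\Omega(n^{-1})$ spectral gap via field dynamics and comparison (which the paper does in \Cref{sec:miss-proof} via \Cref{lem:boost-SI,lem:SI-hardcore-small}). However, the heart of your argument --- the heuristic identification of the ``effective weight'' to control --- is wrong, and wrong in a way that would sink the proof entirely.

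You propose to control, at a degree-$d$ vertex, a quantity that behaves like $d\,R^\star/(1+R^\star)$ with $R^\star\approx W(\lambda d)/d$, and you claim this supremum over $d$ reaches $1$ precisely at $\lambda = \e-1$. But $d\,R^\star/(1+R^\star) = W(\lambda d)/\bigl(1 + W(\lambda d)/d\bigr) \to \infty$ as $d\to\infty$ for \emph{every} fixed $\lambda>0$. That quantity is essentially the total influence $\sum_{v\sim u}\abs{\Psi_\mu(u,v)}$, which the paper's introduction explicitly flags as unbounded on trees. The correct local quantity, from the vertex-model boundedness lemma (\Cref{lem:vertex-boundedness}, not \Cref{lem:edge-boundedness} --- the hardcore model is a vertex model, not an edge model), is $\sum_{v\in C(u)}\beta_v^2$ with $\beta_v^2 = \Cor(v,p_v)^2 = \mu_v^{\ol{u}}\mu_u^{\ol{v}}$. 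This is a \emph{product} of two one-directional conditional marginals, and the second factor $\mu_u^{\ol{v}}$ decays like $\Theta(W(\lambda d)/d)$ as $d$ grows, which is exactly what keeps the sum bounded: at the symmetric fixed point it evaluates to $d\,R^{\star 2}/(1+R^\star+R^{\star 2}) \to 0$ as $d\to\infty$. Your heuristic drops this compensating factor and therefore cannot produce a uniform bound at any $\lambda>0$, let alone $\lambda<\e-1$.

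Relatedly, your claim that the analysis ``breaks down beyond $\e-1$'' and that this explains why $\e-1$ appears in the conjecture is incorrect. The threshold $\e-1$ in the conjecture comes from the lower bound $\lambda_r(\Delta)>\e-1$ on the reconstruction threshold and the belief that fast mixing on trees coincides with non-reconstruction; it is not a ceiling of this method. The paper's \Cref{lem:tech-hardcore} establishes $\sum_{v\in C(u)}\beta_v^2\le 1-\delta/3$ for all $\lambda<(1-\delta)\e^2$, by a reduction to maximizing $F(d,x)=\frac{d\lambda(1-x)x^d}{x+\lambda x^d}$ over $x\in(0,1)$ and $d\ge 1$ (\Cref{lem:max-hardcore} reduces the multivariate maximization to the symmetric case), with the true threshold for this inequality at $\lambda=\e^2$. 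So the approximate-inverse route actually proves substantially more than the stated conjecture; your proposal underestimates it and, as written, would not close even the conjectured case.
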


Using our approach of approximate inverse, we prove \Cref{conjecture:hardcore-SI} and further push the threshold of $\lambda$ for spectral independence of hardcore model on trees to $\lambda<\e^2$.

\begin{theorem} \label{thm:hardcore-tree}
  Let $T=(V,E)$ be a tree of $n$ vertices, and $0 < \lambda <(1-\delta) \e^2$ for some $\delta \in \tp{ 0, 1/10 }$.
  The Gibbs distribution $\mu$ of the hardcore model on $T$ with fugacity $\lambda$ has the  spectral independence 
  \[
  \lambda_{\max}(\Inf_\mu) \leq \frac{36}{\delta^2}.
  \]
  Moreover, the Glauber dynamics on $\mu$ has asymptotically optimal spectral gap $\Omega_\delta(n^{-1})$.
  
  If further the maximum degree $\Delta$ of $T$ is bounded, the mixing time of the Glauber dynamics is bounded by $O_{\Delta,\delta}(n \log n)$.
\end{theorem}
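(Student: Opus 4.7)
The plan is to use the approximate inverse framework developed for the monomer-dimer model, specialized and adapted to the recursive structure of the hardcore model on trees. For the monomer-dimer model, the influence matrix $\Psi_\mu$ on a tree was a product distance matrix whose inverse exposed the tree's edge structure directly. The hardcore analogue is similar in spirit: picking any root, the conditional marginals at internal nodes satisfy the classical tree recursion (broadly $R \mapsto \lambda/(1+R)$-type, adjusted for the subtree children), and the pairwise influence $\Psi_\mu(u,v)$ can be written as a signed product along the path from $u$ to $v$ of single-edge influence factors. The first step is to make this recursive expression explicit, obtaining a factorization of $\Psi_\mu$ in terms of per-edge local data of $T$.

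Next, rather than invert $\Psi_\mu$ directly, I would construct an approximate inverse $A$ whose support mirrors the tree: $A$ is sparse with nonzero entries only on the diagonal and on the edges of $T$, chosen so that $A\Psi_\mu$ is close to the identity in operator norm. The key single-edge estimate controls both the entries of $A$ and the cross-edge correlations by a factor whose accumulation along any path is governed by the derivative of the tree recursion at the relevant conditional marginals. A calculation analogous to those appearing in the contraction/reconstruction analyses on the tree shows that $\lambda=\e^2$ is precisely the threshold beyond which these per-edge factors can compound, and the slack $\delta$ in $\lambda<(1-\delta)\e^2$ translates into a $\delta^2/36$ lower bound on $\lambda_{\min}(A)$, yielding $\lambda_{\max}(\Psi_\mu)\le 36/\delta^2$.

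The main obstacle will be controlling $\lambda_{\min}(A)$ uniformly over worst-case trees of unbounded degree, since many weak correlations at a high-degree vertex could in principle add up constructively. Here I would carry out a vertex-by-vertex analysis, verifying that for every vertex $v$ the corresponding principal block of $A$ (or an appropriate Schur complement obtained by pruning leaves) is positive definite with a uniform lower eigenvalue depending only on $\delta$. Because $T$ is a tree, these local certificates stitch together inductively from the leaves upward without any cyclic feedback, which is exactly why the argument requires no girth hypothesis, unlike \Cref{thm:matching-SI-large-girth} for the monomer-dimer model.

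For the dynamical statements, once the bound $\lambda_{\max}(\Psi_\mu)\le 36/\delta^2$ is in hand, I would apply it uniformly to every conditional distribution: conditioning a hardcore distribution on any pinning yields a hardcore distribution on a subforest with the same fugacity, so the same bound transfers. Plugging this uniform spectral independence into a degree-free local-to-global theorem, of the kind used in \cite{efthymiou2023optimal}, yields the $\Omega_\delta(n^{-1})$ spectral gap on arbitrary trees. When $\Delta$ is bounded, one further invokes a standard local-to-global inequality (e.g.\ via modified log-Sobolev or entropic independence), whose $\Delta$-dependent prefactor together with the $\log n$ coming from total variation convergence gives the $O_{\Delta,\delta}(n\log n)$ mixing-time bound.
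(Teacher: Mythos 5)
Your architecture matches the paper's: exploit the product-distance structure of $\Cor_\mu$ on trees to build a sparse approximate inverse $Q$ supported on the diagonal and the tree edges (so that $Q\,\Cor_\mu = I$ exactly), prove a local positivity certificate $\lambda_{\min}(Q)\ge\delta^2/36$, and deduce $\lambda_{\max}(\Inf_\mu)=\lambda_{\max}(\Cor_\mu)\le 1/\lambda_{\min}(Q)$. Your dynamics plan (transfer the spectral independence to all pinnings via the subforest observation, then invoke field-dynamics boosting from a small-$\lambda$ base case and the Chen--Liu--Vigoda optimal-mixing theorem in the bounded-degree case) is also the paper's route.

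However, the crucial quantitative step is deferred, and the informal description you give of it is off in a way that matters. You describe $\lambda<e^2$ as the point beyond which ``per-edge factors can compound along paths,'' which suggests an accumulated contraction argument. On a tree there is no such accumulation to control: $Q\,\Cor_\mu=I$ exactly, so no approximation error propagates; what remains is a purely \emph{vertex-local} inequality. The paper proves (\Cref{lem:tech-hardcore}) that for every vertex $u$ with children $C(u)$,
\[
\sum_{v\in C(u)} \Cor_\mu(v,p_v)^2 \;\le\; 1 - \tfrac{\delta}{3},
\]
a Kesten--Stigum--type bound summed \emph{over children at a single vertex}, not a per-edge derivative that compounds along paths. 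Establishing this requires substituting the tree recursion, a non-obvious reduction (\Cref{lem:max-hardcore}) of the multivariate supremum over the children's marginals to a univariate optimization in $(d,x)$, and a careful analysis of $F(d,x)=d\lambda(1-x)x^d/(x+\lambda x^d)$ whose $d\to\infty$ behavior is what pins down $e^2$. This optimization is the technical heart of the theorem; asserting that ``a calculation analogous to reconstruction analyses'' yields it leaves the proof with a real gap. Likewise, your Schur-complement/pruning suggestion gestures at, but does not carry out, the paper's sum-of-squares decomposition of the quadratic form $\*x^\top Q\*x$ (\Cref{lem:vertex-boundedness}), which is what actually converts the local inequality into the eigenvalue bound $\lambda_{\min}(Q)\ge(\delta/3)^2/4=\delta^2/36$.
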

It is widely believed that the critical threshold for fast mixing on trees aligns with the reconstruction threshold~\cite{efthymiou2023optimal},
which is known to be lower bounded as $\lambda_r (\Delta) > \e-1$~\cite{martin2003reconstruction}, leading to \Cref{conjecture:hardcore-SI}.
Our findings suggest that either fast mixing holds beyond reconstruction, or the reconstruction threshold is actually higher as $\lambda_r (\Delta)\ge \e^2$.

The spectral independence in \Cref{thm:hardcore-tree} is proved in \Cref{sec:hardcore-SI}.
The spectral gap and mixing time bounds in \Cref{thm:hardcore-tree} are proved using existing techniques in \Cref{sec:miss-proof}.

As noted in~\cite{efthymiou2023optimal}, there exist trees where the Glauber dynamics exhibits a polynomial slowdown when $\lambda > C$ for some constant $C \approx 28$. 
Combining this with the findings of~\cite{anari2023universality}, it is established that the spectral independence is unbounded for the hardcore model on trees when $\lambda$ is sufficiently large. 
Nonetheless, for completeness, we provide a lower bound on the unboundedness of the spectral independence for the hardcore model on trees when $\lambda$ is sufficiently large.

\begin{theorem}\label{thm:unbounded-hardcore}
  Let $\lambda > 28.15$. For any finite $C>0$, there exists a tree $T=(V,E)$ such that the Gibbs distribution $\mu$ of the hardcore model on tree $T$ with fugacity $\lambda$ satisfying that $\lambda_{\max}(\Inf_\mu) \ge C$.
\end{theorem}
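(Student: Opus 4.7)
The plan is to combine two black-box results from the literature: a polynomial slow-mixing lower bound for the Glauber dynamics on suitable trees, and the universality principle that turns a constant spectral independence bound into a near-linear mixing time upper bound.

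First, I would invoke the polynomial slow-mixing result of Restrepo, \v{S}tefankovi\v{c}, Vera, and Vigoda~\cite{restrepo2014phase}, whose threshold was made explicit in~\cite{efthymiou2023optimal} as $\lambda > 28.15$: there exists an infinite family of trees $\{T_n\}_{n \ge 1}$ with $\abs{V(T_n)} = n$ on which the Glauber dynamics for the hardcore model with fugacity $\lambda$ has mixing time $\tau_{\mathrm{mix}}(T_n) \ge n^{1+\epsilon}$ for some fixed $\epsilon = \epsilon(\lambda) > 0$. Second, I would invoke the universality theorem of Anari, Liu, and Vuong~\cite{anari2023universality} in the following form: if a Gibbs measure $\mu$ on $2^{[n]}$ is $b$-marginally bounded and $\eta$-spectrally independent, then the Glauber dynamics mixes in time at most $(1/b)^{O(\eta)} \cdot n \log n$. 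For the hardcore model on \emph{any} tree with fugacity $\lambda$, every single-site marginal is trivially lower-bounded by $1/(1+\lambda)$, so the marginal bound $b = 1/(1+\lambda)$ holds uniformly across the family $\{T_n\}$ without any dependence on the maximum degree.

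Combining the two ingredients, let $\eta_n := \lambda_{\max}(\Inf_{\mu_{T_n}})$ for the Gibbs measure $\mu_{T_n}$ on $T_n$. The universality bound sandwiches
\[
n^{1+\epsilon} \;\le\; \tau_{\mathrm{mix}}(T_n) \;\le\; (1+\lambda)^{O(\eta_n)} \cdot n \log n,
\]
which, after taking logarithms, rearranges to $\eta_n \ge \Omega\tp{\epsilon \log n / \log(1+\lambda)}$. Hence $\eta_n \to \infty$ as $n \to \infty$, so for every finite $C > 0$ there is some $n$ (and thus a tree $T = T_n$) with $\lambda_{\max}(\Inf_\mu) \ge C$, which is exactly the claim of the theorem.

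The main obstacle is really just bookkeeping rather than mathematics: one must pin down the precise quantitative form of the slow-mixing statement in~\cite{restrepo2014phase} (with the explicit constant $\approx 28.15$ extracted as in~\cite{efthymiou2023optimal}), check that the Glauber dynamics convention matches between the two references, and confirm that the universality bound in~\cite{anari2023universality} does take the stated exponential-in-$\eta$ form when only a uniform marginal lower bound (and not, say, bounded degree) is assumed. Given the ``for completeness'' framing preceding the statement, no new computation is expected, and the whole argument should fit in a short paragraph once the three statements are properly aligned.
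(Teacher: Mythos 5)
Your approach is genuinely different from the paper's, and it is in fact the very route the paper explicitly declines to take. The authors write, immediately before stating the theorem, that ``Combining this with the findings of [anari2023universality], it is established that the spectral independence is unbounded\ldots Nonetheless, for completeness, we provide a lower bound.'' So the indirect argument you outline -- slow mixing plus universality forces $\eta_n\to\infty$ -- is acknowledged as implying the statement, but the paper instead gives a direct, self-contained construction: they build a complete $3$-regular tree of height $H=\lceil C\rceil$ whose leaves are replaced by the fixed-point gadget $T_\delta$ of~\cite{restrepo2014phase}, form the approximate inverse $Q$ of $\Cor$ as in their Section~3, and exhibit the explicit test vector $x_r=1$, $x_u=\beta_u x_{p_u}$ whose Rayleigh quotient for $Q$ collapses to $x_r^2/\sum_u x_u^2 \le 1/C$. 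The whole point is to show that $\sum_{v\in C(u)}\mu^{\bar v}_u\mu^{\bar u}_v \ge 1$ at every internal vertex, which reduces to $2(x^\star-\alpha)^2\ge 1$ for the fixed point $x^\star$ of $x=\lambda(1-x)^3$, and this is where the numerical threshold $\lambda>(1-1/\sqrt2)^{-3}/\sqrt2\approx 28.14$ comes from. This gives an exact, fully explicit witness vector and avoids importing the universality machinery.

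There is, however, a concrete error in your proposal. You claim ``every single-site marginal is trivially lower-bounded by $1/(1+\lambda)$,'' so that $b=1/(1+\lambda)$ uniformly ``without any dependence on the maximum degree.'' That is false. The quantity bounded below by $1/(1+\lambda)$ is $1-\mu_u$, not $\mu_u$. For a free vertex of degree $d$ the tree recursion gives $\mu_u/(1-\mu_u)=\lambda\prod_{v\in N(u)}(1-\mu^{\bar u}_v)$, and each factor can be as small as $1/(1+\lambda)$, so $\mu_u$ can be as small as roughly $\lambda/(1+\lambda)^{d+1}$. The $b$-marginal bound therefore does degrade with $\Delta$ -- and the paper explicitly notes this in Section~5, where the final mixing bound is written as $O_{\Delta,\delta}(n\log n)$ precisely because of it. Your argument is rescued by a fact you did not state: the slow-mixing family from~\cite{restrepo2014phase} has bounded maximum degree, so $b=b(\Delta,\lambda)$ is a fixed constant along the family and the rearrangement $\eta_n\ge\Omega(\log n)$ still goes through. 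You should make that boundedness explicit and use the correct $\Delta$-dependent $b$; as written, the marginal-bound step is wrong even though the conclusion survives. It would also be worth confirming the exact quantitative form of the universality bound in~\cite{anari2023universality} (your $(1/b)^{O(\eta)}n\log n$ shape), since the inference $\eta_n\to\infty$ hinges on the $\eta$-dependence being at most exponential.
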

\Cref{thm:unbounded-hardcore} is proved  in \Cref{sec:unbounded-hardcore}.

\subsection{Related work}
\paragraph{High dimensional expander.}
The framework of high-dimensional expander, introduced and significantly developed in recent years~\cite{kaufman2017high,dinur2017high,oppenheim2018local,kaufman2020high}, has been extended to Gibbs sampling in the seminal works~\cite{alev2020improved,anari2020spectral}. 
Subsequent research has further extended this framework from various perspectives~\cite{chen2020rapid,chen2021optimal,chen2021rapid,chen2022optimal,chen2022localization,anari2022entropic,chen2023nearlinear,chen2023matching,anari2023parallel,anari2022optimal,anari2023universality,chen2023strong}.

\paragraph{(Matrix) trickle-down method.}
Previously, a primary method for upper bounding the spectral independence $\lambda_{\max}(\Psi_\mu)$ without relying on bounding the total influence $\norm{\Psi_\mu}_\infty$ was the (matrix) trickle-down method.
This technique has been successfully applied to establish spectral independence results for the matroid bases~\cite{oppenheim2018local,anari2019logconcave} and edge coloring~\cite{abdolazimi2021matrix,wang2023sampling}.

\paragraph{Lifting spectral independence from tree to graph.}
\Cref{thm:matching-SI-large-girth}  basically lifts the spectral independence bound from trees to general graphs using the approximate inverse of the influence matrix.
Previously, for total influence $\norm{\Psi_\mu}_\infty$, such a lifting from trees to general graphs, transforming extremal total influence bounds from  trees to general graphs, has been achieved for $2$-spin systems through the use of  self-avoiding walk trees~\cite{godsil1993algebratic,weitz2006counting,bayati2007simple,li2013correlation,anari2020spectral,chen2020rapid,chen2021optimal,chen2023uniqueness}.
However, self-avoiding walk trees are restricted to 2-spin models.
Very recently, \cite{chen2023strong} introduced a recursive coupling technique to lift total influence bounds from trees to graphs with large girths, applicable to multi-spin models such as proper colorings.


\paragraph{Glauber dynamics on tree.}
The mixing time and the spectral gap of Glauber dynamics for Gibbs distribution on trees have been extensively studied~\cite{martinelli2003ising,Berger2005Glauber,martinelli2004fast,lucier2009glauber,goldberg2010mixing,tetali2012phase,restrepo2014phase,sly2017glauber,delcourt2020glauber,eppstein2023rapid,chen2023combinatorial}.
 In contrast to general graphs, the rapid mixing of Glauber dynamics on trees is believed to be closely related to the \emph{reconstruction threshold}, a topic of significant interest.
 
For the hardcore model on trees,  a series of research works has established both  upper  and lower bounds for the mixing time of Glauber dynamics~\cite{martinelli2003ising,martinelli2004fast,Berger2005Glauber,restrepo2014phase,eppstein2023rapid,chen2023combinatorial}.
For {complete $b$-ary trees}, the mixing time of Glauber dynamics is always $O(n \log n)$~\cite{martinelli2004fast}.
However, for arbitrary trees,  the mixing time shows a critical behavior. 
In particular, when $\lambda$ is sufficiently small, the mixing time of Glauber dynamics was proved to be $O(n \log n)$~\cite{martinelli2003ising,efthymiou2023optimal}; in contrast, when $\lambda$ is sufficiently large, the mixing time is lower bounded by a large polynomial~\cite{restrepo2014phase}. 
The critical threshold is believed to coincide with the reconstruction threshold $\lambda_r(\Delta)$~\cite{bhatnagar2010}. In~\cite{martin2003reconstruction}, it is proved that $\lambda_r (\Delta) > \e-1$, which leads to a conjecture that the mixing time of Glauber dynamics on trees  is $O(n \log n)$ when $\lambda < \e-1$~\cite{efthymiou2023optimal}. 

\section{Preliminaries}

\subsection{Notations and conventions}

Let $\mu$ be a distribution over $2^{[n]}$, where $[n] = \set{1, \ldots, n}$ is a set of $n$ elements.
Throughout this paper, 
we refer $\Pr[\mu]{i}$ and $\Pr[\mu]{\ol{i}}$ to the marginal probability that $i \in [n]$ is occupied and unoccupied respectively, i.e. $\Pr[S\sim \mu]{i\in S}$ and $\Pr[S \sim \mu]{i\not\in S}$.
Moreover, denote the probability of $j$ being occupied conditioned on $i$ being occupied and unoccupied by $\Pr[\mu]{j \mid i}$ and $\Pr[\mu]{j \mid \ol{i}}$ respectively.
When the context is clear, we may drop the subscript $\mu$.
When the formula is heavy, we may also use $\mu_i, \mu_{\ol{i}}, \mu_j^i, \mu_j^{\ol{i}}$ to replace $\Pr{i}$, $\Pr{\ol{i}}$, $\Pr{j\mid i}$, $\Pr{j\mid \ol{i}}$, respectively, to make the expression clean.


For a graph $G = (V,E)$, we may write $N(v)$ to represent the set of vertices $u$ adjacent to $v$.
We write $e \sim u$ or $u \sim e$ if vertex $u \in V$ is incident to edge $e \in E$, and $E_u$ is the set of edges including $u$.
Furthermore, we write $e \overset{u}{\sim} f$ for distinct edges $e,f \in E$ if vertex $u$ is incident to both $e$ and $f$.

For some positive parameter $p$, we will use $O_p(\cdot)$ and $\Omega_p(\cdot)$ to hide the factor related to $p$, respectively.
That is, $O_p(\cdot) = f(p) O(\cdot)$ and $\Omega_p(\cdot) = g(p)\Omega(\cdot)$ for some positive function $f$ and $g$.



\subsection{Markov chains and mixing time}
Let $(X_t)_{t \ge 0}$ be a (discrete-time) Markov chain over a finite state space $\Omega$ with the transition matrix $P \in \mathbb{R}^{\Omega \times \Omega}$.
We may refer to the Markov chain by its transition matrix $P$.
The Markov chain is
\begin{enumerate}
  \item \emph{irreducible}, if for any state $x,y \in \Omega$, there exists $t \ge 1$ satisfying $P^t(x,y)>0$;
  \item \emph{aperiodic}, if for any state $x \in \Omega$, $\gcd\set{t \ge 0: P^t(x,x) > 0} = 1$.
\end{enumerate}
If the Markov chain $P$ is irreducible and aperiodic, the Markov chain has a unique \emph{stationary distribution} $\mu$, i.e., $\mu P = P$.
Moreover, we say a distribution $\mu$ is \emph{reversible} with respect to Markov chain $P$, if the stationary distribution satisfies the \emph{detailed balance condition}, i.e., $\mu(x) P(x,y) = \mu(y) P(y,x)$ for all $x,y \in \Omega$.
It is known that $\mu$ is the stationary distribution of the Markov chain $P$ if $\mu$ is reversible to $P$.

The \emph{mixing time} of Markov chain $P$ measures the speed of convergence towards the stationary distribution. Formally, 
\begin{align*}
  T_{\mathrm{mix}} = \max_{x \in \Omega}\min \set{t \ge 0 : \DTV{P^t(x,\cdot)}{\mu} < \frac{1}{4}},
\end{align*}
where $\DTV{\mu}{\nu} = \frac{1}{2} \sum_{x \in \Omega} \abs{\mu(x) - \nu(x)}$ is the total variation distance.

A canonical single-site Markov chain for sampling high dimensional distribution is the \emph{Glauber dynamics}.
Let $\mu$ be a distribution over $2^{[n]}$. 
In each step, the Glauber dynamics updates a configuration $X \in 2^{[n]}$ according to the following steps:
\begin{enumerate}
  \item select $u \in [n]$ uniformly at random;
  \item 
    update $X$ to
    \begin{align*}
      \begin{cases}
       X \cup \set{u}, & \text{with probability } \frac{\mu(X\cup \set{u})}{\mu(X\cup \set{u}) + \mu(X\setminus \set{u})} \\
       X \setminus \set{u}, & \text{otherwise}.
      \end{cases}
    \end{align*}
  \end{enumerate} 
It is known that $\mu$ is reversible with respect to the Glauber dynamics.

\subsection{Spectral independence and related matrices}
Spectral independence defined below is a notion introduced in~\cite{anari2020spectral} that characterizes the correlation of variables in a high dimensional distribution $\mu$. 
\begin{definition}[influence matrix] \label{def:inf-mat}
  Let $\mu$ be a distribution over $2^{[n]}$. Its \emph{influence matrix} $\Inf_\mu \in \^R^{n \times n}$ is defined by
  \begin{align*}
    \forall i, j \in [n], \quad \Inf_\mu(i, j) =
    \begin{cases}
      \Pr[\mu]{j \mid i} - \Pr[\mu]{j \mid \overline{i}} & \text{ if } \Pr[\mu]{i} \in (0, 1),\\
      0 & \text{ otherwise.}
    \end{cases}
  \end{align*}
\end{definition}

\begin{definition}[spectral independence~\cite{anari2020spectral}] \label{def:SI}
  Let $\eta \geq 0$, a distribution $\mu$ over $2^{[n]}$ is said to be \emph{$\eta$-spectrally independent} if $\lambda_{\max}(\Inf_\mu) \leq \eta$.
\end{definition}

We introduce the following matrices that are closely related to the influence matrix.

\begin{definition}[correlation matrix~\cite{anari2022entropic}] \label{def:cor-mat}
  Let $\mu$ be a distribution over $2^{[n]}$. Its \emph{correlation matrix} $\Cor_\mu \in \^R^{n \times n}$ is defined by
  \begin{align*}
    \forall i, j \in [n], \quad \cor_\mu(i, j) =
    \begin{cases}
      \Pr[\mu]{j \mid i} - \Pr[\mu]{j} & \text{ if } \Pr[\mu]{i} \in (0, 1), \\
      0 & \text{ otherwise.}
    \end{cases}
  \end{align*}
\end{definition}

\begin{definition}[symmetrized influence matrix] \label{def:sym-inf-mat}
  Let $\mu$ be a distribution over $2^{[n]}$. Its \emph{symmetrized influence matrix} $\Inf_\mu \in \^R^{n \times n}$ is defined by
  \begin{align*}
    \forall i, j \in [n], \quad \Cor_\mu(i, j) =
    \begin{cases}
      \frac{\Pr[\mu]{j, i} - \Pr[\mu]{j}\Pr[\mu]{i}}{\sqrt{\Pr[\mu]{j}\Pr[\mu]{\overline{j}}}\sqrt{\Pr[\mu]{i}\Pr[\mu]{\overline{i}}}} & \text{ if } \Pr[\mu]{j}, \Pr[\mu]{i} \in (0, 1),\\
      0 & \text{ otherwise.}
    \end{cases}
  \end{align*}
\end{definition}
\begin{remark}
  We remark that $\Cor$ is usually regarded as the correlation matrix of random variables.
  Here, we use the name ``symmetrized influence matrix'' to avoid confusion with the correlation matrix used in previous works~\cite{anari2022entropic}.
\end{remark}

\begin{remark}
  We remark that $\Cor_\mu(i,j) = \sqrt{\Psi_\mu(i,j) \Psi_\mu(j,i)}$ for any $i,j \in [n]$.
\end{remark}

The following proposition shows that the influence matrix, correlation matrix, and the symmetrized influence matrix could be transformed to each other by multiplying diagonal matrices. 

\begin{proposition} \label{prop:mat-rel}
  Let $\mu$ be a distribution over $2^{[n]}$.
  Let $\Pi := \overline{D}D$, where $D := \-{diag}\set{\Pr[\mu]{i}}_{i\in [n]}$, $\overline{D} := \-{diag}\set{\Pr[\mu]{\overline{i}}}_{i\in [n]}$ are diagonal matrices, then it holds that
  \begin{align}
    \label{eq:sym-inf}
    \Cor_\mu &= \Pi^{1/2}\Inf_\mu\Pi^{-1/2}  \\
    \label{eq:cor-inf}
    \text{and} \quad \cor_\mu &= \overline{D}\Inf_\mu.
  \end{align}
\end{proposition}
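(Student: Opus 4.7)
The plan is to verify both identities entrywise, driven by a single algebraic identity derived from the law of total probability, namely
\[
\Pr[\mu]{j} = \Pr[\mu]{j\mid i}\Pr[\mu]{i} + \Pr[\mu]{j\mid \ol i}\Pr[\mu]{\ol i},
\]
which rearranges (using $\Pr[\mu]{i}+\Pr[\mu]{\ol i}=1$) to
\[
\Pr[\mu]{j\mid i} - \Pr[\mu]{j} = \Pr[\mu]{\ol i}\bigl(\Pr[\mu]{j\mid i} - \Pr[\mu]{j\mid \ol i}\bigr).
\]
The right-hand side is exactly $\Pr[\mu]{\ol i}\cdot\Inf_\mu(i,j)$, so the identity reads $\cor_\mu(i,j) = \Pr[\mu]{\ol i}\,\Inf_\mu(i,j)$, which is precisely $(\overline{D}\Inf_\mu)(i,j)$. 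This proves \eqref{eq:cor-inf} on all indices $i$ with $\Pr[\mu]{i}\in(0,1)$; on the remaining indices both sides vanish by the definitions of $\Inf_\mu$ and $\cor_\mu$, so the matrix equality holds.

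For \eqref{eq:sym-inf}, multiplying the same rearranged identity by $\Pr[\mu]{i}$ gives
\[
\Pr[\mu]{i}\Pr[\mu]{\ol i}\bigl(\Pr[\mu]{j\mid i} - \Pr[\mu]{j\mid \ol i}\bigr) = \Pr[\mu]{i}\Pr[\mu]{j\mid i} - \Pr[\mu]{i}\Pr[\mu]{j} = \Pr[\mu]{j,i} - \Pr[\mu]{i}\Pr[\mu]{j},
\]
so that $\Pi(i,i)\,\Inf_\mu(i,j)$ equals the numerator appearing in $\Cor_\mu(i,j)$. Dividing both sides by $\sqrt{\Pr[\mu]{j}\Pr[\mu]{\ol j}}\sqrt{\Pr[\mu]{i}\Pr[\mu]{\ol i}}$ and rewriting $\Pi(i,i)/\sqrt{\Pr[\mu]{i}\Pr[\mu]{\ol i}} = \sqrt{\Pi(i,i)}$ yields
\[
\Cor_\mu(i,j) = \sqrt{\Pi(i,i)}\,\Inf_\mu(i,j)\,\frac{1}{\sqrt{\Pi(j,j)}} = \bigl(\Pi^{1/2}\Inf_\mu \Pi^{-1/2}\bigr)(i,j),
\]
which is \eqref{eq:sym-inf}.

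The only subtlety, and the one I would flag as the main bookkeeping obstacle rather than a genuine difficulty, is that $\Pi^{-1/2}$ is undefined when some $\Pr[\mu]{i}\in\{0,1\}$. The clean resolution is to restrict \eqref{eq:sym-inf} to the principal submatrix indexed by $\{i:\Pr[\mu]{i}\in(0,1)\}$, since by the definitions of $\Inf_\mu$, $\cor_\mu$, and $\Cor_\mu$ the corresponding rows and columns of all three matrices are identically zero outside this set; alternatively one interprets $\Pi^{-1/2}$ as the Moore--Penrose pseudoinverse, and the calculation above is unaffected. No further ingredients are needed beyond the total-probability rearrangement.
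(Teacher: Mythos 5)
Your proof is correct and takes essentially the same approach as the paper: both arguments reduce to the single total-probability rearrangement $\Pr[\mu]{j\mid i}-\Pr[\mu]{j}=\Pr[\mu]{\ol i}\bigl(\Pr[\mu]{j\mid i}-\Pr[\mu]{j\mid \ol i}\bigr)$ (the paper presents it as $\Pr[\mu]{j,i}-\Pr[\mu]{j}\Pr[\mu]{i}=\Pr[\mu]{i}\Pr[\mu]{\ol i}\,\Inf_\mu(i,j)$, which is the same identity scaled by $\Pr[\mu]{i}$), and then verifies each matrix identity entrywise. Your added remark on the degenerate indices where $\Pr[\mu]{i}\in\{0,1\}$, making $\Pi^{-1/2}$ singular, is a reasonable bit of extra care that the paper silently elides.
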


\begin{proof}
  We first prove \eqref{eq:sym-inf}.
  By the Law of Conditional Probability, it holds that
  \begin{align*}
    \Pr[\mu]{j,i}-\Pr[\mu]{j}\Pr[\mu]{i}
    &=\Pr[\mu]{i}\left(\Pr[\mu]{j\mid i}-\Pr[\mu]{j,i}-\Pr[\mu]{j,\overline{i}}\right) \\
    &=\Pr[\mu]{i}\left(\Pr[\mu]{j\mid i}\left(1-\Pr[\mu]{i}\right)-\Pr[\mu]{j,\overline{i}}\right)\\
    &=\Pr[\mu]{i}\Pr[\mu]{\overline{i}}\left(\Pr[\mu]{j \mid i} - \Pr[\mu]{j \mid \overline{i}}\right).
  \end{align*}
  Therefore, for any $i,j\in[n]$, it holds that
  \begin{equation*}
    \Pi^{1/2}\Inf_\mu\Pi^{-1/2}(i,j)=\sqrt{\frac{\Pr[\mu]{i}\Pr[\mu]{\overline{i}}}{\Pr[\mu]{j}\Pr[\mu]{\overline{j}}}}\frac{\Pr[\mu]{j,i}-\Pr[\mu]{j}\Pr[\mu]{i}}{\Pr[\mu]{i}\Pr[\mu]{\overline{i}}}=\Cor_\mu(i, j). 
  \end{equation*}

  Then, we prove \eqref{eq:cor-inf}.
  For any $i,j\in[n]$, it holds that
  \begin{align*}
    \overline{D}\Inf_\mu(i,j)
    &=\Pr[\mu]{\overline{i}}\left(\Pr[\mu]{j\mid i}-\Pr[\mu]{j\mid \overline{i}}\right)\\
    &=\Pr[\mu]{j\mid i}-\Pr[\mu]{i}\Pr[\mu]{j\mid i}-\Pr[\mu]{\overline{i}}\Pr[\mu]{j\mid \overline{i}}\\
    &=\Pr[\mu]{j\mid i}-\Pr[\mu]{j, i}-\Pr[\mu]{j, \overline{i}}\\
    &=\Pr[\mu]{j\mid i}-\Pr[\mu]{j}= \cor_\mu(i,j). \qedhere
  \end{align*}

\end{proof}

According to~\Cref{prop:mat-rel}, the maximum eigenvalue of $\Inf$ and $\cor$ can be easily related to each other when there is some guarantee on the marginals.

\begin{corollary} \label{cor:inf-cor-lambda-max}
    If for every $i \in [n]$, $\Pr{\overline{i}} \in [l, r] \subseteq (0, 1)$.
    Then, it holds that
    \[\lambda_{\-{max}}(\cor_\mu) r^{-1}
        \leq \lambda_{\-{max}}(\Inf_\mu)
        \leq \lambda_{\-{max}}(\cor_\mu) l^{-1}.\]
\end{corollary}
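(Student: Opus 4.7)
The plan is to express both $\lambda_{\max}(\Inf_\mu)$ and $\lambda_{\max}(\cor_\mu)$ as generalized Rayleigh quotients sharing a common positive semi-definite numerator, so that the marginal bound $\Pr[\mu]{\ol i}\in[l,r]$ translates directly into a two-sided sandwich of the denominators.

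First I would introduce the covariance matrix $C_\mu\in\^R^{n\times n}$ with entries $C_\mu(i,j):=\Pr[\mu]{i,j}-\Pr[\mu]{i}\Pr[\mu]{j}$; as the covariance matrix of the indicator vector $(\*1_{i\in S})_{i\in[n]}$ for $S\sim\mu$, it is symmetric and positive semi-definite. The identity $\Pr[\mu]{j,i}-\Pr[\mu]{j}\Pr[\mu]{i}=\Pr[\mu]{i}\Pr[\mu]{\ol i}(\Pr[\mu]{j\mid i}-\Pr[\mu]{j\mid \ol i})$ derived inside the proof of \Cref{prop:mat-rel} immediately gives the factorizations $\Inf_\mu=\Pi^{-1}C_\mu$ and $\cor_\mu=D^{-1}C_\mu$. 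In particular, $\Inf_\mu$ is similar to the symmetric PSD matrix $\Pi^{-1/2}C_\mu\Pi^{-1/2}=\Cor_\mu$, and $\cor_\mu$ is similar to the symmetric PSD matrix $D^{-1/2}C_\mu D^{-1/2}$, so both have non-negative real spectra. By the variational characterization applied to each symmetrization, one obtains $\lambda_{\max}(\Inf_\mu)=\sup_{v\neq 0}\frac{v^\top C_\mu v}{v^\top \Pi v}$ and $\lambda_{\max}(\cor_\mu)=\sup_{v\neq 0}\frac{v^\top C_\mu v}{v^\top D v}$.

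Next, since $\Pi=\ol D D$ is a product of commuting diagonal matrices whose diagonal of $\ol D$ lies in $[l,r]$, we get $lD\preceq \Pi\preceq rD$, hence $l\cdot v^\top D v\le v^\top \Pi v\le r\cdot v^\top Dv$ for every $v$. Combined with $v^\top C_\mu v\ge 0$, dividing and taking suprema in the two Rayleigh quotients above delivers both halves of the claimed inequality.

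The only mild obstacle is interpreting $\lambda_{\max}(\cor_\mu)$ when $\cor_\mu$ is not itself symmetric: one has to identify its symmetric PSD conjugate $D^{-1/2}C_\mu D^{-1/2}$, which is what unlocks the common PSD numerator $C_\mu$ needed to compare the two Rayleigh quotients. Once this observation is in place, the remainder of the argument is mechanical and relies only on the diagonal PSD ordering $lD\preceq \Pi\preceq rD$.
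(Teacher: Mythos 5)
Your proof is correct and is essentially the same argument as the paper's, phrased slightly differently: the paper works directly with weighted inner products (Courant--Fischer for operators self-adjoint w.r.t.\ $\inner{\cdot}{\Pi\cdot}$ and $\inner{\cdot}{D\cdot}$, then splits the Rayleigh quotient into a product of two ratios and bounds each factor), while you make the common PSD numerator $C_\mu = \Pi\Inf_\mu = D\cor_\mu$ explicit and compare the two denominators directly via $lD\preceq\Pi\preceq rD$. These are equivalent. One small merit of your version is that it surfaces the positive semidefiniteness of $C_\mu$ (needed to justify passing from $\max_f[A(f)B(f)]$ to $\max_f A(f)\cdot\min_f B(f)$ in the paper's version, or to justify the monotone dependence on the denominator in yours), which the paper leaves implicit.
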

\begin{proof}
    According to \Cref{prop:mat-rel}, it holds that $\Inf_\mu$ is a self-adjoint operator with respect to the inner product $\inner{\cdot}{D\overline{D} \cdot}$, and $\cor_\mu$ is a self-adjoint operator with respect to the inner product $\inner{\cdot}{D \cdot}$.
    By the Courant-Fischer theorem,
    \begin{align*}
        \lambda_{\-{max}}(\Inf_\mu) 
        &= \max_{f\neq 0} \frac{\inner{f}{D\overline{D} \Inf_\mu f}}{\inner{f}{D\overline{D} f}}
        \overset{\eqref{eq:cor-inf}}{=} \max_{f \neq 0} \frac{\inner{f}{D \cor_\mu f}}{\inner{f}{D f}} \cdot \frac{\inner{f}{D f}}{\inner{f}{D\overline{D} f}}.
    \end{align*}
    Hence, it holds that
    \begin{align*}
        \max_{f\neq 0} \frac{\inner{f}{D \cor_\mu f}}{\inner{f}{D f}} \cdot \min_{f\neq 0} \frac{\inner{f}{D f}}{\inner{f}{D\overline{D} f}} 
        &\leq \lambda_{\-{max}}(\Inf_\mu)
        \leq \max_{f \neq 0} \frac{\inner{f}{D \cor_\mu f}}{\inner{f}{D f}} \cdot \max_{f \neq 0} \frac{\inner{f}{D f}}{\inner{f}{D\overline{D} f}} \\
        \Longrightarrow \lambda_{\-{max}}(\cor_\mu) r^{-1}
        &\leq \lambda_{\-{max}}(\Inf_\mu)
        \leq \lambda_{\-{max}}(\cor_\mu) l^{-1}. \qedhere
    \end{align*}
\end{proof}

\subsection{Spectral independence of $k$-transformed distributions}
The notion of $k$-transformation was initiated in~\cite{chen2021rapid} to analyze the spectral gap of the field dynamics.
It has been studied extensively during the research on the optimal mixing time of the Glauber dynamics on two-state spin systems for unbounded degree graph~\cite{chen2021rapid,anari2021entropicII,chen2022optimal}.

\begin{definition}[$k$-transformation] \label{def:k-trans}
  Let $\mu$ be a distribution over $2^{[n]}$ and $k \geq 1$ be an integer.
  The \emph{$k$-transformation} (a.k.a. \emph{$k$-blow up}) of $\mu$, denoted by $\mu_k$, is a distribution over $2^{[n]\times [k]}$.

  Let $X \sim \mu$, then $\mu_k$ is the distribution of $Y \in 2^{[n]\times [k]}$ constructed as follows:
  for every $i \in X$, add $(i, t)$ to $Y$ where $t$ is picked from $[k]$ uniformly at random.
\end{definition}

Under the $k$-transformation, the Gibbs distribution $\mu$ of the monomer-dimer model specified by graph $G$ and fugacity $\lambda$
will transform to the Gibbs distribution of the monomer-dimer model specified by graph $G_k$ and fugacity $\lambda/k$, 
where $G_k$ is the graph obtained by replacing all edges in $G$ with $k$ parallel edges.

Similarly, the Gibbs distribution $\mu$ of the hardcore model specified by graph $G$ and fugacity $\lambda$
will transform to the Gibbs distribution of the hardcore model specified by graph $G_k$ and fugacity $\lambda/k$, 
where $G_k$ is the graph obtained by replacing all vertices in $G$ with a $k$-clique and fully connecting the adjacent $k$-cliques.

In the original analysis of the field dynamics, an important step is to show that the notion of spectral independence could be approximately preserved along the $k$-transformation.
In~\cite{chen2021rapid}, an approximate relationship between $\mu$ and $\mu_k$ is revealed ($\lambda_{\max}(\abs{\Inf_{\mu_k}}) \leq \lambda_{\max}(\abs{\Inf_\mu}) + 2$).
Then, \cite{anari2021entropicII} refined this approximate relationship and got an exact relationship between $\mu$ and $\mu_k$ in terms of $\cor$.

\begin{proposition}[\text{\cite[Proposition 26]{anari2021entropicII}}] \label{prop:k-trans-cor}
  Let $\eta \geq 1$, $k \in \^N$, $\mu$ is a distribution over $2^{[n]}$ and $\mu_k$ over $2^{[n]\times [k]}$ is its $k$-transformation.
  If $\lambda_{\max}(\cor_\mu) = \eta$, then it holds that $\lambda_{\max}(\cor_{\mu_k}) = \eta$.
\end{proposition}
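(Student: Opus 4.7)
The plan is to compute the correlation matrix $\cor_{\mu_k}$ directly from the definition of the $k$-transformation, recognize it as a Kronecker sum, and then read off the spectrum. First, \Cref{def:k-trans} immediately gives the marginals and pair-marginals of $\mu_k$: we have $\Pr[\mu_k]{(i,t)} = \Pr[\mu]{i}/k$; two distinct copies of the same coordinate are mutually exclusive so $\Pr[\mu_k]{(i,t),(i,s)}=0$ for $t\neq s$; and for $i\neq j$ the uniform copies are chosen conditionally independently given $X\sim\mu$, giving $\Pr[\mu_k]{(i,t),(j,s)} = \Pr[\mu]{i,j}/k^2$. Substituting into the definition of $\cor_{\mu_k}$, the off-diagonal block with $i\neq j$ becomes $\cor_{\mu_k}((i,t),(j,s)) = \cor_\mu(i,j)/k$ (independent of $t,s$), while the diagonal block with $i=j$ becomes $\cor_{\mu_k}((i,t),(i,s)) = \delta_{t,s} - \Pr[\mu]{i}/k$.

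The key move is to absorb these two descriptions into a single Kronecker sum. Writing $P := J_k/k$ and $Q := I_k - P$ for the complementary orthogonal projections onto $\mathrm{span}(\mathbf{1})$ and its orthogonal complement in $\mathbb{R}^k$, the identity
\begin{equation*}
\delta_{t,s} - \Pr[\mu]{i}/k \;=\; Q(t,s) + (1-\Pr[\mu]{i})\,P(t,s) \;=\; Q(t,s) + \cor_\mu(i,i)\,P(t,s)
\end{equation*}
lets us merge the diagonal and off-diagonal cases into the uniform formula
\begin{equation*}
\cor_{\mu_k} \;=\; I_n \otimes Q \;+\; \cor_\mu \otimes P.
\end{equation*}
Since $P$ and $Q$ have complementary ranges, $\mathbb{R}^n\otimes\mathbb{R}^k$ splits as the orthogonal direct sum $(\mathbb{R}^n\otimes\mathrm{Im}(P))\oplus(\mathbb{R}^n\otimes\mathrm{Im}(Q))$, and both summands are invariant under $\cor_{\mu_k}$: on the first (which is $n$-dimensional) it acts as $\cor_\mu$, contributing exactly the spectrum of $\cor_\mu$; on the second (which has dimension $n(k-1)$) it acts as the identity. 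Therefore the spectrum of $\cor_{\mu_k}$ is the spectrum of $\cor_\mu$ together with the eigenvalue $1$ with multiplicity $n(k-1)$, and $\lambda_{\max}(\cor_{\mu_k}) = \max(\lambda_{\max}(\cor_\mu),1) = \eta$ by the hypothesis $\eta\geq 1$.

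The calculations are elementary, so the only real obstacle is recognizing the correct decomposition of the diagonal block: the term $-\Pr[\mu]{i}/k$ must be split as $-1/k$ (which combines with $\delta_{t,s}$ to form $Q$) plus $(1-\Pr[\mu]{i})/k = \cor_\mu(i,i)/k$ (which aligns with the off-diagonal entries and folds into the $\cor_\mu\otimes P$ factor). The hypothesis $\eta\geq 1$ plays the simple but essential role of absorbing the copy-space eigenvalue $1$ into $\eta$; without it, the conclusion would have to be stated as $\lambda_{\max}(\cor_{\mu_k})=\max(\eta,1)$. Intuitively, the $n(k-1)$ extra dimensions correspond to ``which copy is chosen'' information for each coordinate, which is independent across coordinates and thus carries no additional correlation structure beyond that of $\mu$ itself.
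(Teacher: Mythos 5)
Your proof is correct. The paper does not actually prove this proposition; it cites it verbatim from Anari, Jain, Koehler, Pham, and Vuong (\cite[Proposition~26]{anari2021entropicII}), so there is no in-paper argument to compare against. Your Kronecker-sum computation $\cor_{\mu_k} = I_n \otimes Q + \cor_\mu \otimes P$, with $P = J_k/k$ and $Q = I_k - P$ complementary projections, is a clean and self-contained derivation: the two invariant subspaces $\mathbb{R}^n \otimes \mathrm{Im}(P)$ and $\mathbb{R}^n \otimes \mathrm{Im}(Q)$ carry, respectively, the spectrum of $\cor_\mu$ and the eigenvalue $1$ with multiplicity $n(k-1)$, and the hypothesis $\eta \ge 1$ absorbs the latter. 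One remark worth keeping in mind for rigor: $\cor_\mu$ is not symmetric, so ``the spectrum'' should be understood via the fact that $\cor_\mu$ is self-adjoint under the inner product $\inner{\cdot}{D\,\cdot}$ (\Cref{cor:inf-cor-lambda-max}), which guarantees diagonalizability with real eigenvalues and makes the basis $\{v_i \otimes e_j\}$ you implicitly use well-defined; and the degenerate rows where $\Pr[\mu]{i}\in\{0,1\}$ are zeroed out by convention on one side or the other, but since they contribute only eigenvalue $0$ (dominated by $1$) the conclusion is unaffected.
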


We note that \Cref{prop:k-trans-cor} is a special case of its original version in~\cite{anari2021entropicII}.

\subsection{Inverse of product distance matrix}
\label{sec:product-distance-matrix}

A \emph{product distance matrix} on graph $G=(V,E)$ is a matrix $P\in \mathbb{R}^{V\times V}$ satisfying the following properties:
\begin{enumerate}
  \item $P(u,u)=1$, for any vertex $v\in V$.
  \item If $u,v\in V$ are vertices such that each directed path from $u$ to $v$ passed through the cut-vertex $w$,
          then $P(u,v)=P(u,w)\cdot P(w,v)$.
\end{enumerate}
A \emph{block} of graph $G=(V,E)$ is a maximal subgraph without cut-vertex.

Observed in~\cite{bapat2013product}, the inverse of a product distance matrix is well-structured.

\begin{proposition}[\cite{bapat2013product}]\label{prop:product-inverse}
  Let $P \in \mathbb{R}^{V \times V}$ be a product distance matrix on graph $G=(V,E)$, and $\+B$ be the set of blocks in graph $G$. 
  For each block $B \in \+B$, let $P_B$ be the principal matrix of $P$ on vertices in $B$.
  The inverse $P^{-1}$ of $P$ satisfies
  \begin{align*}
    \forall u,v \in V, \quad 
    P^{-1}(u,v) =
    \begin{cases}
      1 + \sum_{\substack{B \in \+B\\u \in B}} (P_B^{-1}(u,u) - 1)& u = v,\\
      P_B^{-1}(u,v) & \{u,v\} \subseteq B,\\
      0 & \text{otherwise.}
    \end{cases} 
  \end{align*} 
\end{proposition}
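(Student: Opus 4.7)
}

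The plan is to define a matrix $Q \in \mathbb{R}^{V \times V}$ by the three-case formula on the right-hand side and verify $PQ = I$ directly, exploiting the multiplicative property of $P$ across cut-vertices. Concretely, I would split the verification of $(PQ)(u,v) = [u=v]$ based on the block-tree position of $u$ relative to $v$.

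First I would record the only two sources of nonzero entries of $Q(w,v)$: either $w = v$ (contributing $Q(v,v) = 1 + \sum_{B \ni v}(P_B^{-1}(v,v) - 1)$), or $w$ shares some block $B$ with $v$ (contributing $P_B^{-1}(w,v)$). Consequently,
\[
  (PQ)(u,v) \;=\; P(u,v)\,Q(v,v) \;+\; \sum_{B \ni v}\;\sum_{\substack{w \in B\\ w \neq v}} P(u,w)\,P_B^{-1}(w,v).
\]
Case (i): $u,v$ lie in a common block $B_0$. For the inner sum over $B_0$, I would add and subtract the $w=v$ term to recognize $\sum_{w \in B_0} P_{B_0}(u,w) P_{B_0}^{-1}(w,v) = [u=v]$, leaving a residual $-P(u,v) P_{B_0}^{-1}(v,v)$. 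For every other block $B \ni v$ with $B \neq B_0$, the vertex $v$ is a cut-vertex separating $u$ from $B \setminus \{v\}$, so the multiplicative property gives $P(u,w) = P(u,v) P(v,w)$ for all $w \in B$. Factoring out $P(u,v)$ reduces the inner sum to $P(u,v)(1 - P_B^{-1}(v,v))$. Collecting coefficients, the contribution of $Q(v,v)$ cancels the $B = B_0$ residual together with the $\sum_{B \neq B_0}$ terms, leaving $(PQ)(u,v) = [u=v]$.

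Case (ii): $u,v$ share no block. For each block $B \ni v$, let $c_B$ be the unique cut-vertex of $B$ through which every path from $u$ to $B$ must pass; the multiplicative property then yields $P(u,w) = P(u,c_B) P(c_B, w)$ for all $w \in B$. I would further split on whether $c_B = v$ (in which case the inner sum becomes $P(u,v)(1 - P_B^{-1}(v,v))$ exactly as in Case (i)) or $c_B \neq v$ (in which case using $\sum_w P_B(c_B,w) P_B^{-1}(w,v) = [c_B = v] = 0$ produces $-P(u,c_B) P_B(c_B,v) P_B^{-1}(v,v) = -P(u,v) P_B^{-1}(v,v)$). Combining with $P(u,v) Q(v,v)$, the coefficient of $P(u,v)$ collapses to $1 - |\{B \ni v : c_B \neq v\}|$, and the key combinatorial fact — that exactly one block $B^*$ among those containing $v$ sits on the block-tree path from $u$ to $v$, hence is the unique one with $c_{B^*} \neq v$ — forces this coefficient to be $0$.

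I expect the main obstacle to be the bookkeeping in Case (ii) when $v$ is itself a cut-vertex: there one must carefully distinguish the unique ``incoming'' block $B^*$ from all other blocks $B \ni v$, and the identification $|\{B \ni v : c_B \neq v\}| = 1$ needs the block-tree structure. Once this is set up cleanly, all other steps are algebraic manipulations that reduce to $\sum_w P_B(x,w) P_B^{-1}(w,y) = [x=y]$ within a single block, so no induction on the number of blocks is required.
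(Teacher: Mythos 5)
The paper cites this result from the Bapat--Sivasubramanian reference without reproducing a proof, so there is no in-paper argument to compare against. Your direct verification that $PQ = I$ is correct: the structural facts you invoke---any two distinct vertices share at most one block; distinct blocks through $v$ intersect only at $v$, which is then a cut-vertex; and, when $u$ and $v$ share no block, exactly one block $B^*\ni v$ in the block-cut tree has its entry cut-vertex toward $u$ different from $v$---are standard, and the multiplicative property of $P$ across cut-vertices makes each block-local inner sum collapse to a Kronecker delta exactly as you describe, after which the diagonal term $Q(v,v)$ cancels the residuals in both cases. Two minor points worth tightening in a final write-up: the computation tacitly assumes each $P_B$ is invertible (the proposition's statement already presupposes this by writing $P_B^{-1}$) and that $G$ is connected; and the degenerate case $u=v$ is cleanest handled separately since there is no distinguished common block $B_0$, though the computation
\[
(PQ)(v,v) \;=\; Q(v,v) + \sum_{B\ni v}\bigl(1 - P_B^{-1}(v,v)\bigr) \;=\; 1
\]
closes it immediately. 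Your observation that no induction on the number of blocks is needed is also correct.
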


\begin{remark}
  We may also call a matrix $P \in \^R^{E\times E}$ a product distance matrix if it is a product distance matrix on the line graph of $G$.
\end{remark}

\subsection{Linear algebra toolkit}

To apply \Cref{prop:product-inverse}, we need to calculate the inversion of the (symmetrized) influence matrix on each block of the graph.
In practice, we will use the Sherman-Morrison-Woodbury formula.

\begin{proposition}[Sherman-Morrison-Woodbury formula] \label{prop:inversion-block}
  Let $\Lambda\in \mathbb{R}^{n\times n}$ be an invertible square matrix and $x,y\in \mathbb{R}^n$ be column vectors.
  If $\*y^\top \Lambda^{-1}\*x + 1 \neq 0$, then
  \[(\Lambda + \*x\*y^\top)^{-1} = \Lambda^{-1} - \frac{\Lambda^{-1}\*x\*y^\top\Lambda^{-1}}{1 + \*y^\top\Lambda^{-1}\*x}.\]
\end{proposition}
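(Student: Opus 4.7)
The plan is to verify the identity by direct matrix multiplication, checking that the claimed right-hand side is indeed a right (and hence two-sided) inverse of $\Lambda + \*x \*y^\top$. Since $\Lambda$ is invertible and the scalar $1 + \*y^\top \Lambda^{-1} \*x$ is assumed nonzero, every object on the right-hand side is well-defined, so the only thing to check is the algebraic identity.

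Concretely, I would compute
\[
\left(\Lambda + \*x \*y^\top\right)\left(\Lambda^{-1} - \frac{\Lambda^{-1}\*x\*y^\top\Lambda^{-1}}{1 + \*y^\top\Lambda^{-1}\*x}\right)
\]
by distributing the multiplication, which yields four terms: $I$, $\*x\*y^\top\Lambda^{-1}$, $-\frac{\*x\*y^\top\Lambda^{-1}}{1+\*y^\top\Lambda^{-1}\*x}$, and $-\frac{\*x\*y^\top\Lambda^{-1}\*x\*y^\top\Lambda^{-1}}{1+\*y^\top\Lambda^{-1}\*x}$. The key algebraic observation is that $\alpha := \*y^\top\Lambda^{-1}\*x$ is a scalar, so in the fourth term the middle factor $\*y^\top\Lambda^{-1}\*x$ can be pulled out, turning that term into $-\frac{\alpha \cdot \*x\*y^\top\Lambda^{-1}}{1+\alpha}$. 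Combining the last three terms over the common denominator $1+\alpha$ gives $\*x\*y^\top\Lambda^{-1} - \frac{(1+\alpha)\*x\*y^\top\Lambda^{-1}}{1+\alpha}$, which cancels to zero, leaving $I$.

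There is essentially no obstacle here beyond being careful with noncommutativity: everything to the left of $\*y^\top\Lambda^{-1}\*x$ (namely $\*x$) and everything to its right (namely $\*y^\top\Lambda^{-1}$) must stay in place, and only the scalar is extracted. Once the right-inverse computation is done, since $\Lambda + \*x\*y^\top$ is a square matrix, being a right inverse implies being a two-sided inverse, so the claimed formula for $(\Lambda + \*x\*y^\top)^{-1}$ follows immediately. One could equivalently verify the left-multiplication version with exactly the same scalar-extraction trick, which also confirms invertibility of $\Lambda + \*x\*y^\top$ under the stated hypothesis.
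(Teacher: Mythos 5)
Your proof is correct: the direct right-multiplication verification with the scalar-extraction observation for $\*y^\top\Lambda^{-1}\*x$ is exactly the standard argument for the Sherman--Morrison formula, and your remark that a right inverse of a square matrix is automatically a two-sided inverse closes the argument cleanly. The paper states this proposition without proof, as a textbook linear-algebra fact, so there is no in-paper argument to compare against; your direct computation is the canonical one and fills that gap correctly.
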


Moreover, when the matrix has the form in~\Cref{prop:inversion-block}, its determinant can be easily calculated via the matrix determinant lemma. 
\begin{proposition}[matrix determinant lemma]\label{prop:det-lem}
  Let $\Lambda\in \mathbb{R}^{n\times n}$ be an invertible square matrix and $x,y\in \mathbb{R}^n$ be column vectors, then
  \begin{equation*}
    \det(\Lambda + \*x\*y^\top) =\det(\Lambda) \left(1 + \*y^\top\Lambda^{-1}\*x\right).
  \end{equation*}
\end{proposition}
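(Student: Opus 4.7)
The plan is to reduce this rank-one perturbation identity to a statement about the identity matrix. First I would use multiplicativity of the determinant to factor out $\Lambda$: writing $\Lambda + \*x\*y^\top = \Lambda\tp{I + \Lambda^{-1}\*x\*y^\top}$ and setting $\*u := \Lambda^{-1}\*x$, the claim reduces to the rank-one identity
\[
\det(I + \*u\*y^\top) = 1 + \*y^\top \*u,
\]
since then $\det(\Lambda + \*x\*y^\top) = \det(\Lambda)\cdot\det(I + \*u\*y^\top) = \det(\Lambda)(1 + \*y^\top\Lambda^{-1}\*x)$.

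For the core identity the cleanest route is a two-way Schur-complement computation on the $(n+1)\times (n+1)$ block matrix
\[
M := \begin{pmatrix} I_n & -\*u \\ \*y^\top & 1 \end{pmatrix}.
\]
Taking the Schur complement with respect to the top-left block $I_n$ yields $\det M = \det(I_n)\cdot (1 - \*y^\top I_n^{-1}(-\*u)) = 1 + \*y^\top \*u$. Taking the Schur complement with respect to the bottom-right scalar $1$ yields $\det M = 1\cdot \det\tp{I_n - (-\*u)\cdot 1^{-1}\cdot \*y^\top} = \det(I_n + \*u\*y^\top)$. Equating the two expressions gives the required identity, and substituting back proves the proposition.

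Since this is a standard textbook fact, there is no substantive obstacle; the only point deserving care is confirming that both Schur-complement expansions are valid, which holds because the pivots ($I_n$ and the scalar $1$) are both invertible. For intuition one could alternatively argue spectrally: $\*u\*y^\top$ has rank at most one, with $\*u$ as an eigenvector of eigenvalue $\*y^\top\*u$ and an $(n-1)$-dimensional kernel, so the eigenvalues of $I + \*u\*y^\top$ are $1+\*y^\top\*u$ together with $1$ repeated $n-1$ times, and the determinant is their product. The Schur-complement derivation is marginally preferable because it handles the degenerate case $\*u = 0$ uniformly rather than as a separate check.
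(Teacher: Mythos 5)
The paper does not prove this proposition; it is stated as a standard linear-algebra fact and used as a black box (e.g.\ in the proof of \Cref{lem:lambda-min-Q}), so there is no paper argument to compare against. Your proof is correct and self-contained. The reduction $\det(\Lambda + \*x\*y^\top) = \det(\Lambda)\det(I + \Lambda^{-1}\*x\*y^\top)$ is the right first move, and the two-way Schur-complement evaluation of $\det\begin{pmatrix} I_n & -\*u \\ \*y^\top & 1 \end{pmatrix}$ is a clean way to get the rank-one identity $\det(I + \*u\*y^\top) = 1 + \*y^\top\*u$: both block pivots ($I_n$ and the scalar $1$) are invertible, so both expansions are unconditionally valid, and equating them gives the result without any case analysis. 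Your remark that the spectral alternative is slightly less tidy is accurate: if $\*u \neq 0$ but $\*y^\top\*u = 0$, then $\*u\*y^\top$ is nilpotent rather than diagonalizable, and one must phrase the argument in terms of algebraic multiplicities or the characteristic polynomial rather than an eigenbasis; the Schur-complement route sidesteps this entirely. No gaps.
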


In the context of matrix $A$ having real eigenvalues, it is customary to denote them as $\lambda_1(A)\ge\cdots\ge\lambda_n(A)$.
The interlacing theorem states that for a real symmetric matrix of order $n$, 
the eigenvalues of the matrix interlace with the eigenvalues of the principal submatrix of order $n-1$. 

\begin{proposition}[interlacing theorem]\label{prop:interlace}
  Let $A\in \^R^{n \times n}$ be a symmetric matrix and $B$ be a $(n-1)\times (n-1)$ principal submatrix of $A$.
  Then, it holds that
  \begin{equation*}
    \lambda_1(A)\ge\lambda_1(B)\ge\lambda_2(A)\ge\lambda_2(B)\ge\cdots\ge\lambda_{n-1}(A)\ge\lambda_{n-1}(B)\ge\lambda_{n}(A).
  \end{equation*}
\end{proposition}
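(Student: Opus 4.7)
The plan is to derive the interlacing chain from the Courant--Fischer min--max characterization
\[
\lambda_k(M)=\max_{\dim S=k}\,\min_{0\ne v\in S}\frac{v^\top M v}{v^\top v}=\min_{\dim T=n-k+1}\,\max_{0\ne v\in T}\frac{v^\top M v}{v^\top v},
\]
applied to both $A$ and $B$. Without loss of generality, assume $B$ is obtained from $A$ by deleting the last row and column, and identify $\mathbb{R}^{n-1}$ with the hyperplane $W=\{x\in\mathbb{R}^n:x_n=0\}$, so that if $y\in\mathbb{R}^{n-1}$ corresponds to $x\in W$ then $y^\top B y=x^\top A x$ and $\|y\|=\|x\|$. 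It then suffices to prove $\lambda_k(A)\ge\lambda_k(B)\ge\lambda_{k+1}(A)$ for each $k\in\{1,\dots,n-1\}$ and chain them.

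For the first inequality $\lambda_k(A)\ge\lambda_k(B)$, I would take a $k$-dimensional subspace $S'\subseteq\mathbb{R}^{n-1}$ attaining the max--min definition of $\lambda_k(B)$; its image under the identification is a $k$-dimensional subspace of $W\subseteq\mathbb{R}^n$ on which the Rayleigh quotients of $A$ and $B$ coincide. Plugging this subspace as a candidate into the max--min characterization of $\lambda_k(A)$ then gives the bound directly.

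For the second inequality $\lambda_k(B)\ge\lambda_{k+1}(A)$, let $v_1,\dots,v_n$ be orthonormal eigenvectors of $A$ with $Av_i=\lambda_i(A)v_i$, and set $V=\operatorname{span}\{v_1,\dots,v_{k+1}\}$. The crucial dimensional input is $\dim(V\cap W)\ge k$, which follows from $\dim V+\dim W=(k+1)+(n-1)>n$. Every $x\in V$ satisfies $x^\top A x\ge\lambda_{k+1}(A)\|x\|^2$, so transporting any $k$-dimensional subspace of $V\cap W$ back to a subspace $S'\subseteq\mathbb{R}^{n-1}$ yields $\min_{0\ne y\in S'} y^\top B y/\|y\|^2\ge\lambda_{k+1}(A)$; the max--min characterization of $\lambda_k(B)$ then closes the argument.

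The proof is essentially careful bookkeeping of Rayleigh quotients together with one elementary dimension count, so there is no substantive obstacle; the only point that demands attention is the identification $\mathbb{R}^{n-1}\cong W$ together with the estimate $\dim(V\cap W)\ge\dim V+\dim W-n$, which is what powers the interlacing on the ``lower'' side.
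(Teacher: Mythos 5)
Your proposal is correct, and it is the standard textbook proof of Cauchy's interlacing theorem via the Courant--Fischer min--max characterization: the upper chain $\lambda_k(A)\ge\lambda_k(B)$ by embedding an optimal $k$-dimensional subspace for $B$ into $\mathbb R^n$, and the lower chain $\lambda_k(B)\ge\lambda_{k+1}(A)$ by the dimension count $\dim(V\cap W)\ge\dim V+\dim W-n$ applied to the span of the top $k+1$ eigenvectors of $A$. The paper gives no proof of this proposition; it is listed in the ``linear algebra toolkit'' subsection as a classical fact alongside Sherman--Morrison--Woodbury, the matrix determinant lemma, and Weyl's inequality, so there is nothing to compare against. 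Your argument fills the gap correctly, and the only details worth making fully explicit in a final write-up are the two you already flag: that relabeling indices reduces to the case where $B$ deletes the last row/column (permutation similarity preserves the spectrum), and the inclusion $U\subseteq V\cap W$ with $\dim U=k$ whose image $U'$ in $\mathbb R^{n-1}$ feeds the max--min for $\lambda_k(B)$.
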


When a symmetric matrix is perturbed by another symmetric matrix, we can use Weyl's inequality to bound the changes to its eigenvalues.
\begin{proposition}[Weyl's inequality]\label{prop:weyl}
  Let $A,B\in \^R^{n \times n}$ be two symmetric matrices, then for any $i,j\in[n]$, it holds that
  \begin{equation*}
    \lambda_{i+j-1}(A+B)\le \lambda_i(A)+\lambda_j(B)\le\lambda_{i+j-n}(A+B).
  \end{equation*}
\end{proposition}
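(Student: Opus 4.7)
The plan is to derive Weyl's inequality from the Courant--Fischer min--max characterization of eigenvalues of symmetric matrices. Recall that for any real symmetric $M \in \^R^{n\times n}$ with eigenvalues $\lambda_1(M)\ge\cdots\ge\lambda_n(M)$, one has
\[
\lambda_k(M) \;=\; \min_{\substack{S\le \^R^n \\ \dim S = n-k+1}} \;\max_{\substack{x\in S \\ \|x\|=1}} x^\top M x \;=\; \max_{\substack{S\le \^R^n \\ \dim S = k}} \;\min_{\substack{x\in S \\ \|x\|=1}} x^\top M x.
\]
I will take this as the starting point, since it is the standard tool that reduces eigenvalue inequalities to a dimension-counting argument on subspaces of $\^R^n$.

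For the upper bound $\lambda_{i+j-1}(A+B)\le \lambda_i(A)+\lambda_j(B)$, I choose subspaces $U$ and $V$ achieving the min in the formula for $\lambda_i(A)$ and $\lambda_j(B)$: there exist $U$ with $\dim U = n-i+1$ such that $x^\top A x \le \lambda_i(A)$ for every unit $x\in U$, and similarly $V$ with $\dim V = n-j+1$ and $x^\top B x \le \lambda_j(B)$ for every unit $x\in V$. A dimension count yields
\[
\dim(U\cap V) \;\ge\; (n-i+1)+(n-j+1)-n \;=\; n-(i+j-1)+1,
\]
so $U\cap V$ is a feasible subspace in the min-formula for $\lambda_{i+j-1}(A+B)$. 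For any unit $x\in U\cap V$, $x^\top(A+B)x = x^\top A x + x^\top B x \le \lambda_i(A)+\lambda_j(B)$, and taking the max then the min over admissible subspaces gives the claimed bound.

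For the lower bound $\lambda_i(A)+\lambda_j(B)\le \lambda_{i+j-n}(A+B)$, the cleanest route is to apply the already-established upper bound to the pair $(-A,-B)$ rather than repeat the argument with the max--min characterization. Using $\lambda_k(-M)=-\lambda_{n-k+1}(M)$, substituting indices $i'=n-i+1$ and $j'=n-j+1$ into the inequality
\[
\lambda_{i'+j'-1}(-A-B) \;\le\; \lambda_{i'}(-A)+\lambda_{j'}(-B),
\]
and simplifying the index $n-i'-j'+2 = i+j-n$ will yield the stated inequality after negation. I expect no serious obstacle: the only subtle step is the dimension count and the bookkeeping when translating the two-sided Courant--Fischer formula into the correct index $i+j-1$ (resp.\ $i+j-n$), which just requires being careful about which orientation of the eigenvalue ordering is used. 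Since this is a classical linear-algebra fact invoked purely as a toolkit item, a citation to any standard reference would suffice in place of the self-contained argument sketched above.
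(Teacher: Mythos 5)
Your proposal is correct, and the approach via the Courant--Fischer min--max characterization is the standard one for this result. Note, however, that the paper itself provides \emph{no proof} of \Cref{prop:weyl}: it is listed in the ``Linear algebra toolkit'' preliminaries alongside the Sherman--Morrison--Woodbury formula, the matrix determinant lemma, and the interlacing theorem, all stated without proof as classical facts. So there is nothing in the paper to compare against; your self-contained argument fills a gap the authors deliberately left to the literature. The details check out: the dimension count $\dim(U\cap V)\ge (n-i+1)+(n-j+1)-n = n-(i+j-1)+1$ makes $U\cap V$ admissible in the min-formula for $\lambda_{i+j-1}(A+B)$, and the index bookkeeping for the lower bound via $\lambda_k(-M)=-\lambda_{n-k+1}(M)$ with $i'=n-i+1$, $j'=n-j+1$ correctly produces $i+j-n$. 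One small remark: the displayed inequalities implicitly require $i+j-1\le n$ (resp.\ $i+j-n\ge 1$) for the eigenvalue index to be well-defined, which is the usual tacit convention for Weyl's inequality and does not affect correctness. As you note, a citation would also have sufficed, which is exactly what the paper does.
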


\section{Method of approximate inverse}\label{sec:proof-outline}

Let $G=(V,E)$ be an undirected graph and $\mu$ be a distribution over $2^E$ or $2^V$.
If $\mu$ is supported over $2^E$, it is called an \emph{edge model distribution}; otherwise, it is called a \emph{vertex model distribution}.

Throughout this section, we will assume that all distributions satisfy the conditional independence property, formally defined below.
\begin{assumption}[conditional independence] \label{def:conf-ind}
  Suppose that $\mu$ is a vertex (or edge) model distribution. Let $u, v, w \in V\; (\text{or }E)$. If the removal of $v$ will disconnect $u$ and $w$ in $G$, then 
  \begin{align*}
    \text{for } v^* \in \set{v, \ol{v}}, w^* \in \set{w, \ol{w}}, \quad
    \Pr[\mu]{u \mid v^*, w^*}  = \Pr[\mu]{u \mid v^*}. 
  \end{align*}
\end{assumption}
\begin{remark}
  In practice, the spectral independence is usually asked for all pinnings of $\mu$. Hence, the conditional independence property is usually assumed for all pinnings of $\mu$.
 However, in this section, we only need to assume the conditional independence for $\mu$.
\end{remark}


Our method relies on the following \emph{approximate inverse} to analyze the spectral independence of $\mu$, specifically the spectrum of $\Cor$.
This form of the approximate inverse is inspired by~\cite{bapat2013product}, where the inverse of $\Cor$ is well-structured when the underlying graph is a tree.

\begin{definition}[approximate inverse for edge model] \label{def:Q-edge}
  The approximate inverse for $\Cor$ of an edge model distribution $\mu$ is constructed as follows:
  \begin{enumerate}
    \item For each $v \in V$, let $\Cor_v \in \^R^{E_v \times E_v}$ be the principal minor of $\Cor$ on $E_v$, the edges incident to $v$.
    \item Let $\widehat{Q}_v := (\Cor_v)^{-1}$ and let $Q_v \in \^R^{E\times E}$ be the matrix obtained by augmenting $\widehat{Q}_v$ with zero entries.
    \item Finally, let $Q := \sum_{v\in V} Q_v - I$.
  \end{enumerate} 
\end{definition}

\begin{definition}[approximate inverse for vertex model] \label{def:Q-vertex}
  The approximate inverse for $\Cor$ of a vertex model distribution $\mu$  is constructed as follows:
  \begin{enumerate}
    \item For each $e = \{u,v\} \in E$, let $\Cor_e \in \^R^{e \times e}$ be the principal minor of $\Cor$ on $e = \{u,v\}$.
    \item Let $\widehat{Q}_e := (\Cor_e)^{-1}$ and let $Q_e \in \^R^{V \times V}$ be the matrix obtained by augmenting $\widehat{Q}_e$ with zero entries accordingly.
    \item Finally, let $Q := \sum_{e \in E} Q_e - \diag \set{d_u-1}_{u \in V}$, where $d_u$ is the degree of vertex $u$.
  \end{enumerate}
\end{definition}

As observed in~\cite{anari2020spectral}, according to the conditional independence property, when $G$ is a tree, both $\Psi$ and $\Cor$ are product distance matrices (see~\Cref{sec:product-distance-matrix} for the detailed definition).
We formalize this into the following proposition.
\begin{proposition}[\text{\cite[Lemma B.2]{anari2020spectral}}] \label{prop:product-distance}
  Under \Cref{def:conf-ind}, both $\Psi$ and $\Cor$ are product distance matrices.
\end{proposition}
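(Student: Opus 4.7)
The plan is to verify the two defining axioms of a product distance matrix for both $\Psi_\mu$ and $\Cor_\mu$. The first axiom, that diagonal entries equal $1$, is immediate: for $\Psi_\mu(i,i)$ one gets $\Pr[\mu]{i\mid i} - \Pr[\mu]{i\mid \ol i} = 1 - 0 = 1$, and for $\Cor_\mu(i,i)$ one computes
\begin{equation*}
\Cor_\mu(i,i) = \frac{\Pr[\mu]{i} - \Pr[\mu]{i}^2}{\Pr[\mu]{i}\Pr[\mu]{\ol i}} = 1.
\end{equation*}
So the only real content is the multiplicative property across cut-vertices (or, for line graphs, cut-edges).

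The main step is to prove this multiplicative property for $\Psi_\mu$. Fix $u,v$ such that every path from $u$ to $v$ passes through $w$, and use the law of total probability on the event $\{w,\ol w\}$ together with \Cref{def:conf-ind}: for any $u^* \in \{u,\ol u\}$,
\begin{align*}
\Pr[\mu]{v \mid u^*}
&= \Pr[\mu]{v\mid w, u^*}\Pr[\mu]{w\mid u^*} + \Pr[\mu]{v\mid \ol w, u^*}\Pr[\mu]{\ol w\mid u^*} \\
&= \Pr[\mu]{v\mid w}\Pr[\mu]{w\mid u^*} + \Pr[\mu]{v\mid \ol w}(1 - \Pr[\mu]{w\mid u^*}) \\
&= \Pr[\mu]{v\mid \ol w} + \Psi_\mu(w,v)\,\Pr[\mu]{w\mid u^*}.
\end{align*}
Subtracting the $u^* = \ol u$ case from the $u^* = u$ case collapses the first summand and leaves $\Psi_\mu(u,v) = \Psi_\mu(w,v)\bigl(\Pr[\mu]{w\mid u} - \Pr[\mu]{w\mid \ol u}\bigr) = \Psi_\mu(u,w)\,\Psi_\mu(w,v)$, which is exactly what is needed.

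For $\Cor_\mu$, the cleanest route is to first record the entrywise identity
\begin{equation*}
\Cor_\mu(i,j) = \sqrt{\frac{\Pr[\mu]{i}\Pr[\mu]{\ol i}}{\Pr[\mu]{j}\Pr[\mu]{\ol j}}}\,\Psi_\mu(i,j),
\end{equation*}
which follows from the computation $\Pr[\mu]{j,i} - \Pr[\mu]{j}\Pr[\mu]{i} = \Pr[\mu]{i}\Pr[\mu]{\ol i}\,\Psi_\mu(i,j)$ already used in the proof of \Cref{prop:mat-rel}. Plugging the multiplicative property for $\Psi_\mu$ into the right-hand side and multiplying and dividing by $\sqrt{\Pr[\mu]{w}\Pr[\mu]{\ol w}}$ splits the ratio through $w$ and yields $\Cor_\mu(u,v) = \Cor_\mu(u,w)\,\Cor_\mu(w,v)$.

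I do not expect a substantial obstacle here: the proof is entirely algebraic once the conditional independence is applied. The only minor subtlety is making sure the case $\Pr[\mu]{w}\in\{0,1\}$ does not bite, but in that case the relevant conditional probabilities are degenerate and both $\Psi_\mu(u,w)$ and $\Psi_\mu(w,v)$ vanish by convention, so the identity holds trivially; the same remark handles $\Cor_\mu$ after checking that the associated denominator is interpreted consistently with \Cref{def:sym-inf-mat}.
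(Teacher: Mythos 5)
Your proof is correct, and it is worth noting that the paper does not actually prove this proposition in-house: it is imported directly as a citation to \cite[Lemma B.2]{anari2020spectral}, so there is no internal argument to compare against. Your derivation---law of total probability over $\{w,\ol w\}$ conditioned on $u^*$, using \Cref{def:conf-ind} to drop $u^*$ from the inner conditionals, then subtracting the $u^*=u$ and $u^*=\ol u$ cases to isolate $\Psi_\mu(u,w)\Psi_\mu(w,v)$---is the standard direct proof and matches the content of the cited lemma; passing to $\Cor_\mu$ via the entrywise rescaling identity $\Cor_\mu(i,j)=\sqrt{\Pr[\mu]{i}\Pr[\mu]{\ol i}/(\Pr[\mu]{j}\Pr[\mu]{\ol j})}\,\Psi_\mu(i,j)$ and letting the $\sqrt{\Pr[\mu]{w}\Pr[\mu]{\ol w}}$ factor telescope is likewise the natural route. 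One small imprecision in your closing remark: when $\Pr[\mu]{w}\in\{0,1\}$, the entry $\Psi_\mu(u,w)$ does \emph{not} vanish by the convention in \Cref{def:inf-mat}---that convention zeroes out the \emph{row} indexed by $w$ (i.e.\ forces $\Psi_\mu(w,\cdot)=0$), not the column $\Psi_\mu(\cdot,w)$. Rather, $\Psi_\mu(u,w)=0$ in this degenerate case because $\Pr[\mu]{w\mid u}=\Pr[\mu]{w\mid\ol u}\in\{0,1\}$, and $\Psi_\mu(u,v)=0$ because conditional independence through $w$ makes $\Pr[\mu]{v\mid u^*}$ constant once $w$ is deterministic. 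For completeness you should also note the cases $\Pr[\mu]{u}\in\{0,1\}$ (both sides vanish by the row convention) and $\Pr[\mu]{v}\in\{0,1\}$ (all differences of $\Pr[\mu]{v\mid\cdot}$ vanish), but none of this affects the substance of the argument.
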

By~\Cref{prop:product-inverse}, we can derive the following proposition.
\begin{proposition}\label{prop:Cor-Q-I}
  When the graph $G = (V,E)$ is a tree, it holds that $Q\; \Cor = I$.
\end{proposition}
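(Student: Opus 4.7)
The plan is to verify both cases of the approximate inverse by matching its entry-by-entry formula against the block-inverse formula in Proposition~2.15 for product distance matrices, after identifying the correct block decomposition of the underlying graph structure.

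First I would invoke \Cref{prop:product-distance} to conclude that $\Cor$ is a product distance matrix: for the vertex model, on the tree $G$ itself; for the edge model, on the line graph $L(G)$ (via the remark after \Cref{prop:product-inverse}). The next step is to describe the blocks. When $G$ is a tree, every edge of $G$ is a bridge, so each edge is a block of $G$; for the vertex model, $\+B = E$. For the edge model, the blocks of $L(G)$ when $G$ is a tree are exactly the cliques on the edge-neighborhoods $\{E_v : v \in V\}$, since two distinct edges of $G$ are adjacent in $L(G)$ iff they share a vertex, and in a tree this sharing is both unique and free of cycles among the incident edges at any vertex.

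With the blocks identified, I would plug these into \Cref{prop:product-inverse}. For the edge model, for any $e = \{u,v\} \in E$ the only blocks containing $e$ are $E_u$ and $E_v$, so
\[
\Cor^{-1}(e,e) = 1 + (\widehat{Q}_u(e,e)-1) + (\widehat{Q}_v(e,e)-1) = \widehat{Q}_u(e,e) + \widehat{Q}_v(e,e) - 1,
\]
which equals $\sum_{w} Q_w(e,e) - 1 = Q(e,e)$ by \Cref{def:Q-edge}. For distinct $e \neq f$, since $G$ is a simple tree they share at most one vertex $w$, lying in at most one common block $E_w$; the formula then gives $\Cor^{-1}(e,f) = \widehat{Q}_w(e,f) = Q(e,f)$ if they share $w$, and $0 = Q(e,f)$ otherwise. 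For the vertex model the calculation is symmetric: each vertex $u$ lies in $d_u$ blocks (its incident edges), so
\[
\Cor^{-1}(u,u) = 1 + \sum_{e \ni u}(\widehat{Q}_e(u,u) - 1) = \sum_{e \ni u} \widehat{Q}_e(u,u) - (d_u - 1),
\]
matching $Q(u,u)$ from \Cref{def:Q-vertex}, and off-diagonal entries likewise match after noting that two distinct vertices lie in a common block iff they are adjacent.

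In both cases this shows $Q = \Cor^{-1}$ entrywise, hence $Q\,\Cor = I$. The main conceptual step is the correct identification of the blocks, especially for the edge model where one must pass to the line graph and recognize that the vertex-neighborhoods $E_v$ become exactly the maximal $2$-connected subgraphs of $L(G)$ when $G$ is a tree; no unique tail calculation beyond this is required, since everything then reduces to the already-stated product inverse formula.
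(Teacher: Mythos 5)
Your proof is correct and takes essentially the same approach as the paper, which gives no explicit argument here and simply derives the claim directly from \Cref{prop:product-distance} and the block-inverse formula in \Cref{prop:product-inverse}. You have filled in exactly the intended entry-by-entry verification, including the correct identification of the blocks (edges for the vertex model; the cliques $E_v$ of the line graph for the edge model), and the degenerate degree-one blocks contribute nothing since $\widehat{Q}_v - I = 0$ there, so the match with \Cref{def:Q-edge,def:Q-vertex} is exact.
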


When $G$ is not a tree, the matrix $Q$ may serve as an approximate inverse of $\Cor$, 
which is generally easier to analyze and can be a good estimator of $\tp{\Cor}^{-1}$ in certain circumstances. 
This observation is formalized as the following conditions.

\begin{condition}[$\alpha$-boundedness]\label{cond:boundedness}
  Let $Q$ be the approximate inverse of $\Cor$. The minimum eigenvalue of $Q$ is lower bounded by $\alpha$ for some $\alpha > 0$.
\end{condition}

\begin{condition}[$\beta$-approximate ratio]\label{cond:approximate}
  Let $Q$ be the approximate inverse of $\Cor$. The maximum eigenvalue of $Q \Cor$ is upper bounded by $\beta$ for some $\beta > 0$.
\end{condition}

If these conditions can be verified, the maximum eigenvalue of $\Cor$ and $\Psi$ can be upper bounded as follows.
\begin{theorem} \label{thm:Psi-lambda-max}
  For a vertex (or edge) model satisfying both \Cref{cond:boundedness} and~\ref{cond:approximate}, it holds that
  \begin{align*}
    \lambda_{\max}(\Psi) = \lambda_{\max}(\Cor) \le \frac{\beta}{\alpha}.
  \end{align*}
  Moreover, if the underlying graph $G$ is a tree,~\Cref{cond:approximate} holds with $\beta = 1$.
\end{theorem}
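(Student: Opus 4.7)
The plan is to split the bound into three parts: first reduce $\lambda_{\max}(\Psi)$ to $\lambda_{\max}(\Cor)$ by a similarity transformation, then bound $\lambda_{\max}(\Cor)$ by a Rayleigh-quotient argument using the positive definite square root of $Q$, and finally derive the tree case directly from $Q\Cor=I$.

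For the first reduction, \Cref{prop:mat-rel} gives $\Cor = \Pi^{1/2}\Psi\Pi^{-1/2}$, which is a similarity transformation of $\Psi$, so $\lambda_{\max}(\Psi) = \lambda_{\max}(\Cor)$. I will also record the two symmetry facts needed below: $\Cor$ is symmetric by \Cref{def:sym-inf-mat}, and $Q$ is symmetric because each building block $\widehat{Q}_v = (\Cor_v)^{-1}$ (resp.\ $\widehat{Q}_e$) is the inverse of a principal submatrix of the symmetric $\Cor$, the augmentation by zeros preserves symmetry, and the diagonal correction in the vertex-model case is itself diagonal. \Cref{cond:boundedness} then gives $Q \succeq \alpha I \succ 0$, so $Q$ is positive definite and admits a positive definite square root $Q^{1/2}$.

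The key observation is that although the product $Q\Cor$ is generally not symmetric, it is similar to the symmetric matrix $Q^{1/2}\Cor\, Q^{1/2}$ via $Q^{1/2}\Cor\, Q^{1/2} = Q^{-1/2}(Q\Cor)Q^{1/2}$. Consequently \Cref{cond:approximate} upgrades from a bare eigenvalue bound to the operator inequality $Q^{1/2}\Cor\, Q^{1/2} \preceq \beta I$. For any nonzero $v$, setting $u := Q^{-1/2} v$ then yields
\begin{equation*}
  \frac{v^{\top}\Cor\, v}{v^{\top} v}
  \;=\; \frac{u^{\top} Q^{1/2}\Cor\, Q^{1/2}\, u}{u^{\top} Q\, u}
  \;\le\; \frac{\beta \|u\|^{2}}{\alpha \|u\|^{2}}
  \;=\; \frac{\beta}{\alpha},
\end{equation*}
using $Q^{1/2}\Cor\, Q^{1/2} \preceq \beta I$ in the numerator (or noting the inequality is trivial when the numerator is negative) and $Q \succeq \alpha I$ in the denominator. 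Since $Q^{-1/2}$ is invertible, $v$ ranges over all nonzero vectors as $u$ does, and the Rayleigh--Ritz characterization applied to the symmetric $\Cor$ gives $\lambda_{\max}(\Cor) \le \beta/\alpha$.

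Finally, for the \emph{moreover} clause, \Cref{prop:Cor-Q-I} states that $Q\Cor = I$ whenever $G$ is a tree, so $\lambda_{\max}(Q\Cor) = 1$ and \Cref{cond:approximate} holds automatically with $\beta = 1$. The one delicate point in the whole argument is the passage from an eigenvalue bound on the non-symmetric product $Q\Cor$ to a genuine operator inequality on a symmetric matrix; the conjugation by $Q^{1/2}$, which is exactly what positive definiteness of $Q$ from \Cref{cond:boundedness} enables, is what licenses this step and is the conceptual heart of why the ``approximate inverse'' framework succeeds.
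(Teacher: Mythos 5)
Your proposal is correct and follows essentially the same route as the paper: conjugate $\Cor$ by $Q^{1/2}$ (licensed by the positive-definiteness of $Q$ from Condition (boundedness)), bound the resulting Rayleigh quotient from above by $\lambda_{\max}(Q^{1/2}\Cor Q^{1/2})=\lambda_{\max}(Q\Cor)\le\beta$ and from below by $\lambda_{\min}(Q)\ge\alpha$, and invoke $Q\Cor = I$ for the tree case. Your write-up is slightly more careful than the paper's in spelling out why $Q$ and $\Cor$ are symmetric and in flagging the sign issue when the numerator is negative, but the argument is the same.
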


\begin{proof}
 First, note that
  \begin{align*}
    \lambda_{\max}(\Cor) &= \lambda_{\max} \tp{Q^{-1/2} Q^{1/2} \Cor Q^{1/2} Q^{-1/2}}\\
    &= \max_{\*x \neq 0} \frac{\*x^\top Q^{-1/2} Q^{1/2} \Cor Q^{1/2} Q^{-1/2} \*x }{\*x^\top \*x}\\
    &= \max_{\*x \neq 0} \frac{\*x^\top Q^{-1} \*x}{\*x^\top \*x} \cdot \frac{\tp{\*x^\top Q^{-1/2} } Q^{1/2} \Cor Q^{1/2} \tp{Q^{-1/2} \*x}}{\*x^\top Q^{-1} \*x}\\
    &\le \lambda_{\max}(Q^{-1}) \lambda_{\max}(Q^{1/2} \Cor Q^{1/2})\\
    &= \lambda_{\max}(Q^{-1}) \lambda_{\max}(Q \; \Cor).
  \end{align*}
  Therefore, $\lambda_{\max}(\Psi) = \lambda_{\max} \tp{\Cor} \le \frac{\beta}{\alpha}$.

  The rest follows from~\Cref{prop:product-inverse} and~\Cref{prop:Cor-Q-I}.
\end{proof}

Although the approximate inverse simplifies the analysis, 
verifying \Cref{cond:boundedness} and \Cref{cond:approximate} may still be challenging. Next, we introduce several tools to help establish these conditions, respectively in edge and vertex models.

\subsection{Edge model}
Let $\mu$ be an edge model  distribution with underlying graph  $G=(V,E)$. 
We observe that the ``local spectral independence'' implies the boundedness of approximate inverse.

\begin{lemma}\label{lem:edge-boundedness}
  Let $\beta < 2$ be a constant.
  If $\Cor_u$, 
  the principal minor of $\Cor$ on the set of edges incident to $u$, 
  satisfies $\lambda_{\max}\tp{\Cor_u} \le \beta$ for all $u \in V$, 
  then~\Cref{cond:boundedness} holds with $\alpha = 2/\beta-1$.
\end{lemma}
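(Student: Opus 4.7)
The plan is to establish the lower bound $\lambda_{\min}(Q) \ge 2/\beta - 1$ by working directly with the quadratic form $\*x^{\top} Q \*x$ for an arbitrary $\*x \in \^R^{E}$, exploiting the block-additive structure of $Q$ from \Cref{def:Q-edge} together with the simple combinatorial fact that each edge belongs to exactly two of the sets $\{E_v\}_{v\in V}$.

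First, I would expand the quadratic form. By construction, $Q = \sum_{v\in V} Q_v - I$, where $Q_v$ is the zero extension of $\widehat{Q}_v = (\Cor_v)^{-1}$ from $E_v$ to $E$. Writing $\*x_{E_v} \in \^R^{E_v}$ for the restriction of $\*x$ to the coordinates indexed by $E_v$, this gives
\begin{equation*}
\*x^{\top} Q\, \*x = \sum_{v\in V} \*x_{E_v}^{\top} (\Cor_v)^{-1} \*x_{E_v} \; - \; \norm{\*x}_2^{2}.
\end{equation*}

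Next I would apply the hypothesis locally. Since $\Cor_v$ is symmetric with $\lambda_{\max}(\Cor_v) \le \beta$, its inverse satisfies $\lambda_{\min}((\Cor_v)^{-1}) \ge 1/\beta$ (note $\Cor_v$ is positive definite as a principal submatrix of a positive semidefinite matrix with $1$'s on the diagonal, combined with the hypothesis). Therefore
\begin{equation*}
\*x_{E_v}^{\top} (\Cor_v)^{-1} \*x_{E_v} \; \ge \; \frac{1}{\beta} \sum_{e \in E_v} x_e^{2}.
\end{equation*}
Summing over $v \in V$ and exchanging the order of summation, each edge $e = \{u,v\} \in E$ is counted exactly twice (once for each endpoint), yielding
\begin{equation*}
\sum_{v\in V} \*x_{E_v}^{\top} (\Cor_v)^{-1} \*x_{E_v} \; \ge \; \frac{1}{\beta} \sum_{v\in V} \sum_{e\in E_v} x_e^{2} \; = \; \frac{2}{\beta} \norm{\*x}_2^{2}.
\end{equation*}

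Combining these gives $\*x^{\top} Q \*x \ge (2/\beta - 1)\norm{\*x}_2^{2}$ for all $\*x$, so $\lambda_{\min}(Q) \ge 2/\beta - 1 = \alpha$, which is strictly positive precisely because $\beta < 2$. The argument is essentially a one-line averaging calculation, so there isn't a genuine obstacle; the only subtle point worth double-checking is that $\Cor_v$ is actually invertible (needed to define $\widehat{Q}_v$), which follows from the symmetrized influence matrix being positive semidefinite together with the fact that $\lambda_{\max}(\Cor_v)\le \beta < 2$ and $\Cor_v$ has unit diagonal, so all its eigenvalues lie in $(0,\beta]$. The role of the threshold $\beta < 2$ then becomes transparent: it is exactly the condition needed to ensure that the double-counting bonus $2/\beta$ beats the $-I$ correction term.
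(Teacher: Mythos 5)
Your proof is correct and is essentially the paper's proof, merely rephrased in quadratic-form language rather than PSD-order notation: the paper writes $Q_v - \tfrac{1}{\beta} I_{E_v} \succcurlyeq 0$ for each $v$ and sums, using $\sum_{v} I_{E_v} = 2I$, which is exactly your double-counting step. One caveat: your parenthetical claim that unit diagonal plus $\lambda_{\max}(\Cor_v)\le\beta<2$ forces $\Cor_v$ to be positive definite is false (the $3\times 3$ matrix with unit diagonal and all off-diagonal entries $-\tfrac12$ is PSD with eigenvalues $0,\tfrac32,\tfrac32$); invertibility of $\Cor_v$ is implicitly presupposed by \Cref{def:Q-edge} rather than a consequence of the stated hypothesis, so that side remark should simply be dropped.
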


\begin{proof}
  Since $\lambda_{\max}\tp{\Cor_u} \le \beta$ for all $u \in V$, we have $\tp{\Cor}^{-1} \succcurlyeq \frac{1}{\beta} I $. 
  Hence, 
  \begin{align*}
    Q-\tp{\frac{2}{\beta}-1}I = \sum_{v \in V} Q_v - \frac{2}{\beta} I = \sum_{v \in V} \tp{Q_v - \frac{1}{\beta} I_{E_v}} \succcurlyeq 0,
  \end{align*}
  where $I_{E_v} = \diag \set{\*1[e \in E_v]}_{e \in E}$, and the last inequality follows from the definition of $Q_v$ in~\Cref{def:Q-edge}.
\end{proof}

\begin{remark}
  For the monomer-dimer model, it can be verified that $\Cor_u \preccurlyeq \ol{D}^{-1}$ where $\ol{D} = \-{diag}\set{\mu_{\ol{e}}}_{e\in E_u}$.
  When $\lambda < 1$, we have $\mu_{\ol{e}} > 1/2$, which indicates that $\Cor_v \preccurlyeq \ol{D}^{-1} \prec 2 I$.
For a detailed analysis covering all $\lambda > 0$, please refer to \Cref{lem:lambda-min-Q}.
\end{remark}

Furthermore, we develop a systematic approach for lifting the spectral independence from trees to general graphs with large girth (i.e., ensuring a constant approximate ratio) when sub-constant total influence decay exhibits on trees, characterized as follows.

\begin{condition} \label{cond:SAW-tree}
  Let $G = (V, E)$ be a graph and let $\delta \in (0, 1)$ and $C > 0$ be two parameters.
  For every $u \in V$, there is a tree $T^u = (V^u, E^u)$ rooted at vertex $u$ and equipped with a function $\chi_u: E^u \to E$ and an influence matrix $\Psi^u \in \^R^{E^u \times E^u}$ of some distribution $\mu_T$ supported on $2^{E^u}$, together satisfying:
  \begin{enumerate}
  \item \label{item:struct} $\chi_u^{-1}(e) = \{e\}$ for all edges $e \in E_u$;
  \item \label{item:dist} $\-{dist}_{T^u}(u,g^\prime)\ge \-{dist}_{G_e}(u,g)$ for all edges $g \in E$ and $g^\prime\in \chi_u^{-1}(g)$;
  \item \label{item:influence} $\Inf(e, g) = \sum_{g' \in \chi_u^{-1}(g)} \Inf^u (e, g')$ for all edges $e \in E_u$ and $g \in E$;
  \item \label{item:DIT} for all $k\geq 0$ and $e \in E^u$, it holds that
    \begin{align*}
      \sum_{f:\-{dist}(e, f) = k} \abs{\Psi^u(e,f)} \leq C(1 - \delta)^k.
    \end{align*}
  \end{enumerate}
  Here, $\Psi$ represents the influence matrix for the Gibbs distribution on $G$.
\end{condition}

In above, \Cref{item:struct}, \Cref{item:dist} and \Cref{item:influence} in \Cref{cond:SAW-tree} are designed to relate the graphs and trees under the approximate inverse.
\Cref{item:DIT} in \Cref{cond:SAW-tree} was known by the name ``total influence decay'' in previous work~\cite{chen2023strong}, and it can be derived from the decay of correlation property~\cite{chen2020rapid,chen2021optimal}.
In practice, such tree is likely to be constructed through self-avoiding walks in $G$.
In particular, for the monomer-dimer model, it is worth noting that the self-avoiding walk tree (i.e., the path-tree) satisfies \Cref{cond:SAW-tree} with $C = 2$ and $\delta \approx 1/\sqrt{\lambda\Delta}$~\cite{godsil1993algebratic,chen2021optimal}. 
Application of \Cref{cond:SAW-tree} is deferred to  \Cref{sec:matching-SI}.

\begin{lemma} \label{lem:tree-graph}
If \Cref{cond:SAW-tree} is satisfied with parameter $\delta$ and $C$ by a graph  $G = (V, E)$ with girth $2g + 1$, then 
  \begin{align*}
    \lambda_{\max}(Q\Cor) \leq 2C \cdot (1 - \delta)^g/\delta + 1.
  \end{align*}
\end{lemma}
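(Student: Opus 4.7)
The plan is to bound $\lambda_{\max}(Q\Cor)$ via the $\ell_\infty$ operator norm of a closely related matrix, after swapping $\Cor$ for $\Psi$ so that \Cref{cond:SAW-tree} applies directly. Let $\widetilde{Q} := \sum_v \widetilde{Q}_v - I$ be the analogue of $Q$ built from $(\Psi|_{E_v})^{-1}$. From $\Cor = \Pi^{1/2}\Psi\Pi^{-1/2}$ (\Cref{prop:mat-rel}) and the fact that principal-minor inversion commutes with the diagonal similarity, one checks $Q_v = \Pi^{1/2}\widetilde{Q}_v\Pi^{-1/2}$, so $Q\Cor$ is similar to $\widetilde{Q}\Psi$ and has the same spectrum. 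Hence $\lambda_{\max}(Q\Cor) = \lambda_{\max}(\widetilde{Q}\Psi) \le 1 + \|\widetilde{Q}\Psi - I\|_\infty$, and it suffices to show $\sum_f |(\widetilde{Q}\Psi - I)(e,f)| \le 2C(1-\delta)^g/\delta$ for every $e \in E$.

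Fix $e = \{u,w\}$ and expand $(\widetilde{Q}\Psi)(e,f) = (\widetilde{Q}_u\Psi)(e,f) + (\widetilde{Q}_w\Psi)(e,f) - \Psi(e,f)$. Using \Cref{item:influence} to rewrite row $e$ of $\widetilde{Q}_u\Psi$ through the SAW tree $T^u$ gives $(\widetilde{Q}_u\Psi)(e,f) = \sum_{f'\in\chi_u^{-1}(f)}(\widetilde{Q}^u_u\Psi^u)(e,f')$, where $\widetilde{Q}^u$ is the analogous approximate inverse on $T^u$. Since $T^u$ is a tree, $\widetilde{Q}^u\Psi^u = I$ by \Cref{prop:Cor-Q-I}; combining this with the decomposition $\widetilde{Q}^u = \widetilde{Q}^u_u + \widetilde{Q}^u_{w'_e} - I$ on row $e$ (where $w'_e$ is the non-root endpoint of $e$ in $T^u$) and the product-distance structure of $\Psi^u$ (\Cref{prop:product-distance}), a short calculation shows $(\widetilde{Q}^u_{w'_e}\Psi^u)(e,f')$ equals $\*1[e=f']$ for $f' \in E^u_{w'_e}$, equals $\Psi^u(e,f')$ when $f'$ lies strictly on the $u$-side of $e$ in $T^u$, and vanishes otherwise. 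Summing over $\chi_u^{-1}(f)$ yields $(\widetilde{Q}_u\Psi)(e,f) = A^u_e(f)$, where $A^u_e(f)$ is the sum of $\Psi^u(e,f')$ over preimages $f'$ on the $w'_e$-side of $e$. Symmetrically $(\widetilde{Q}_w\Psi)(e,f) = A^w_e(f)$, so together with the complementary decomposition $\Psi(e,f) = A^u_e(f) + B^u_e(f)$ (with $B^u_e$ the $u$-side sum) we obtain, for $f \neq e$,
\begin{equation*}
(\widetilde{Q}\Psi - I)(e,f) = A^u_e(f) - B^w_e(f),
\end{equation*}
while $(\widetilde{Q}\Psi - I)(e,e) = 0$ by direct substitution.

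To bound $\sum_{f\neq e}|A^u_e(f) - B^w_e(f)|$, observe that $A^u_e(f)$ and $B^w_e(f)$ both represent the ``$w$-side portion'' of the influence from $e$ to $f$, measured through two different SAW unrollings. The girth condition forces these to agree for $f$ close to $e$: since no cycle of length below $2g+1$ exists, for every $f$ with all relevant preimages at tree distance $< g$ the bijection implied by \Cref{item:struct} together with the distance bound of \Cref{item:dist} puts the relevant preimages in $T^u$ and $T^w$ into exact correspondence with matching weights, giving $A^u_e(f) = B^w_e(f)$ and zero contribution. For the remaining far $f$, all contributing preimages lie at tree distance $\ge g$ from $e$ by \Cref{item:dist}, so the total-influence decay in \Cref{item:DIT} gives
\begin{equation*}
\sum_{f:\,\text{far}}|A^u_e(f)| \le \sum_{\substack{f' \in E^u\\ \dist_{T^u}(e,f')\ge g}}|\Psi^u(e,f')| \le C\sum_{k\ge g}(1-\delta)^k = \frac{C(1-\delta)^g}{\delta},
\end{equation*}
and symmetrically for $B^w_e$. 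Adding the two bounds gives $\|\widetilde{Q}\Psi - I\|_\infty \le 2C(1-\delta)^g/\delta$, as desired.

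The main obstacle is the clean-partition step: rigorously showing that for every $f$ within the tree-like neighborhood of $e$ the two SAW unrollings produce identical $w$-side contributions $A^u_e(f) = B^w_e(f)$, so that the full discrepancy is concentrated on edges at tree distance $\ge g$. This requires a careful walk-counting argument using \Cref{item:struct} and \Cref{item:dist}, checking that no short cycle perturbs the bijection between the relevant preimages of $f$ in $T^u$ and in $T^w$, and that the tree-influence weights $\Psi^u$ and $\Psi^w$ agree across this matched set.
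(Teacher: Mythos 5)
Your setup matches the paper's proof of this lemma: you pass to $\widetilde{Q}\Psi$ (the paper's $W\Inf$) by diagonal similarity, derive the explicit row formula $(\widetilde{Q}\Psi - I)(e,f) = A^u_e(f) - B^w_e(f)$ (the paper's \Cref{lem:W-Psi-explicit} followed by rewriting $\Psi(e,f)$ via \Cref{item:influence}), and then aim to bound the off-diagonal row sums. Up to that point the argument is correct and is exactly what the paper does.

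The final step, however, goes off track. You propose to show that for ``close'' $f$ the two sums $A^u_e(f)$ and $B^w_e(f)$ cancel \emph{exactly} via a bijection between the preimages in $T^u$ and $T^w$, and you correctly identify this as the unproven obstacle. But this cancellation is both unnecessary and, as an exact identity, hard to justify --- the two tree unrollings are genuinely different objects and need not produce literally matching walk weights. What actually happens, and what the paper uses, is simpler: the girth condition forces \emph{every} preimage that appears in $A^u_e(f)$ or $B^v_e(f)$ to already be at tree distance $\geq g$ from $e$, for \emph{every} $f \neq e$. Concretely, the paper partitions $E \setminus \{e\}$ into $E^u_\leq$, $E^v_\leq$, and $E_>$ (these are mutually exclusive because a short $u$-to-$f$ path and a short $v$-to-$f$ path in $G - e$ would close with $e$ into a cycle of length $< 2g+1$). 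Say $f \in E^u_\leq \cup E_>$, so $\dist_{G_e}(v,f) > g$. The sum $B^v_e(f)$ runs over preimages $f^v \notin T^v_u$, i.e.\ self-avoiding walks from $v$ avoiding $e$, which by \Cref{item:dist} sit at distance $> g$. The sum $A^u_e(f)$ runs over $f^u \in T^u_v$, i.e.\ walks from $u$ that begin with $e$; these reach $v$ and then walk in $G_e$ to $f$, again accruing distance $> g$. The symmetric case $f \in E^v_\leq$ is identical. So you never need to compare the two sums against each other; you simply apply the triangle inequality and the decay bound of \Cref{item:DIT} to each one separately, obtaining $2C\sum_{k\geq g}(1-\delta)^k = 2C(1-\delta)^g/\delta$. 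Replacing your bijection step by this direct girth argument closes the gap and completes the proof.
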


The rest of this section is dedicated to proving \Cref{lem:tree-graph}. 
The key idea behind the proof is the observation that
the approximate inverse effectively diminishes the influence of an edge $e$ on edge $f$ when $e$ is close to $f$, 
in graphs that is locally tree-like. 
This allows us to bound the row sum of $Q \; \Cor$ by leveraging the sub-constant decay rate of total influence,
provided that the graph has a sufficiently large girth.

The proof starts with a connection between the influence matrix $\Inf$ and its symmetric variant $\Cor$. 
We first introduce an analog of approximate inverse in~\Cref{def:Q-edge} for influence matrix $\Psi$. 

\begin{definition}[approximate inverse of $\Inf$] \label{def:W}
  The approximate inverse $W$ of $\Inf$ is constructed as follows:
  \begin{enumerate}
    \item For each $v \in V$, let $\Inf_v \in \^R^{E_v \times E_v}$ be the principal minor of $\Inf$ on $E_v$, the edges incident to $v$.
    \item Let $\widehat{W}_v := (\Inf_v)^{-1}$ and let $W_v \in \^R^{E\times E}$ be the matrix obtained by augmenting $\widehat{W}_v$ with zero entries accordingly.
    \item Finally, let $W := \sum_{v\in V} W_v - I$.
  \end{enumerate} 
\end{definition}

Note that $Q\;\Cor$ is similar to $W \Inf$. Therefore, the task of bounding $\lambda_{\max}(Q\;\Cor)$ is now reduced to bounding $\lambda_{\max}(W\Inf)$.

\begin{proposition} \label{prop:W-Q-similarity}
  It holds that $\lambda_{\max}(W \Inf) = \lambda_{\max}(Q\; \Cor)$.
\end{proposition}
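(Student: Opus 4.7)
\medskip

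\noindent\textbf{Proof plan.} The plan is to exhibit a similarity transform between $W\Psi$ and $Q\,\Cor$, so that they share the same spectrum. The whole argument rests on the diagonal conjugation already recorded in \Cref{prop:mat-rel}: setting $\Pi := \ol{D}D$, one has
\begin{equation*}
\Cor \;=\; \Pi^{1/2}\Psi\,\Pi^{-1/2}.
\end{equation*}
Since $\Pi$ is diagonal and strictly positive (marginals are assumed in $(0,1)$, otherwise the corresponding rows/columns of $\Psi$ and $\Cor$ vanish by \Cref{def:inf-mat} and the claim is trivial after restricting to the support), $\Pi^{1/2}$ commutes with restriction to any subset of coordinates, which will let me lift the identity to the local inverses.

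\medskip

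\noindent\textbf{Key step: local conjugation.} For each $v\in V$, let $\Pi_v$ denote the principal minor of $\Pi$ on $E_v$. Restricting the global similarity to $E_v$ gives $\Cor_v = \Pi_v^{1/2}\Psi_v\,\Pi_v^{-1/2}$, and therefore
\begin{equation*}
\widehat{Q}_v \;=\; (\Cor_v)^{-1} \;=\; \Pi_v^{1/2}(\Psi_v)^{-1}\Pi_v^{-1/2} \;=\; \Pi_v^{1/2}\widehat{W}_v\,\Pi_v^{-1/2}.
\end{equation*}
Because $\Pi$ is diagonal, augmenting $\widehat{W}_v$ and $\widehat{Q}_v$ by zeros to live in $\mathbb{R}^{E\times E}$ commutes with conjugation by $\Pi^{1/2}$; checking entrywise, for $e,f\in E_v$ one has $(\Pi^{1/2}W_v\Pi^{-1/2})(e,f) = \Pi^{1/2}(e,e)\widehat{W}_v(e,f)\Pi^{-1/2}(f,f) = \widehat{Q}_v(e,f) = Q_v(e,f)$, while if either coordinate is outside $E_v$ both sides vanish. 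Hence $Q_v = \Pi^{1/2}W_v\,\Pi^{-1/2}$ for every $v$.

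\medskip

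\noindent\textbf{Assembly and conclusion.} Summing over $v\in V$ and subtracting $I = \Pi^{1/2}\,I\,\Pi^{-1/2}$ gives
\begin{equation*}
Q \;=\; \sum_{v\in V}Q_v - I \;=\; \Pi^{1/2}\!\left(\sum_{v\in V}W_v - I\right)\!\Pi^{-1/2} \;=\; \Pi^{1/2}W\,\Pi^{-1/2}.
\end{equation*}
Multiplying on the right by $\Cor = \Pi^{1/2}\Psi\,\Pi^{-1/2}$ yields $Q\,\Cor = \Pi^{1/2}(W\Psi)\Pi^{-1/2}$, so $W\Psi$ and $Q\,\Cor$ are similar matrices and in particular share all eigenvalues, which gives $\lambda_{\max}(W\Psi) = \lambda_{\max}(Q\,\Cor)$.

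\medskip

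\noindent\textbf{Main obstacle.} There is no genuine obstacle here; the proof is essentially a book-keeping exercise. The only point that needs care is the compatibility of the diagonal conjugation with the zero-augmentation of the local inverses into $\mathbb{R}^{E\times E}$, and this is straightforward precisely because $\Pi$ is diagonal. Degenerate marginals ($\Pr{e}\in\{0,1\}$) would make $\Pi^{1/2}$ singular, but in that case those rows/columns are zero in both $\Psi$ and $\Cor$ by definition, and the identity is to be read on the non-degenerate principal submatrix.
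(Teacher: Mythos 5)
Your proof is correct and follows essentially the same route as the paper: conjugate by $\Pi^{1/2}$ using \Cref{prop:mat-rel}, observe the conjugation passes to principal minors on $E_v$ and hence to the local inverses $\widehat{Q}_v = \Pi_v^{1/2}\widehat{W}_v\Pi_v^{-1/2}$, lift to the zero-augmented $Q_v = \Pi^{1/2}W_v\Pi^{-1/2}$, and sum to get $Q\Cor = \Pi^{1/2}W\Psi\Pi^{-1/2}$. The only difference is that you spell out the entrywise check that zero-augmentation commutes with diagonal conjugation and flag the degenerate-marginal case, both of which the paper leaves implicit.
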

\begin{proof}
  Recall that $\Pi := \-{diag}\set{\Pr{e}\Pr{\overline{e}}}_{e\in E}$.
  For a vertex $v \in V$, let $\Pi_v$ be the principal minor of $\Pi$ on $E_v$.
  According to \eqref{eq:sym-inf}, we have $\Cor_v = \Pi_v^{1/2}\Inf_v\Pi_v^{-1/2}$.
  Since $\widehat{Q}_v = (\Cor_v)^{-1}$ and $\widehat{W}_v = (\Inf_v)^{-1}$, this indicates that $\widehat{Q}_v = \Pi_v^{1/2}\widehat{W}_v\Pi_v^{-1/2}$.
  Hence, we have $Q_v = \Pi^{1/2}W_v\Pi^{-1/2}$ and $Q = \Pi^{1/2}W\Pi^{-1/2}$.
  Combining~\Cref{prop:mat-rel}, $Q\;\Cor = \Pi^{1/2}W\Inf\Pi^{-1/2}$.
\end{proof}

\begin{lemma} \label{lem:W-Psi-explicit}
  If \Cref{item:struct} and \Cref{item:influence} of \Cref{cond:SAW-tree} hold, then we have
  \begin{align*}
    \forall e = (u,v) \text{ and } f \in E, \quad W\Inf(e, f) =
      \sum_{\substack{f^u \in \chi_u^{-1}(f) \\ f^u \in T^u_v}}\Psi^u(e, f^u) + \sum_{\substack{f^v \in \chi_v^{-1}(f) \\ f^v \in T^v_u}} \Psi^v(e, f^v) - \Psi(e, f),
  \end{align*}
  where $T^u_v$ denotes the subtree of $T^u$ rooted at $v$, and $T^v_u$ is defined accordingly.
\end{lemma}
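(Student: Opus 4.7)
The plan is to expand $W\Inf(e,f)$ directly from the definition $W=\sum_{w\in V}W_w-I$ and then localize the computation to the trees $T^u$ and $T^v$ via the product distance structure of $\Inf^u$ on its respective tree (\Cref{prop:product-distance}).

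First, since $W_w(e,g)$ is nonzero only when both $e$ and $g$ are incident to $w$, and $e=(u,v)$ has exactly the two endpoints $u,v$, the decomposition reads
\begin{align*}
  W\Inf(e,f)=\sum_{g\in E_u}\widehat{W}_u(e,g)\,\Inf(g,f)+\sum_{g\in E_v}\widehat{W}_v(e,g)\,\Inf(g,f)-\Inf(e,f).
\end{align*}
By symmetry, it suffices to analyse the $u$-sum. \Cref{item:struct} gives $\chi_u^{-1}(g)=\{g\}$ for $g\in E_u$, which canonically embeds $E_u$ into $E^u$; combined with \Cref{item:influence}, the $E_u\times E_u$ principal minors of $\Inf$ and $\Inf^u$ coincide, so $\widehat{W}_u=(\Inf_u)^{-1}=(\Inf^u_u)^{-1}$. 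Expanding $\Inf(g,f)=\sum_{f^u\in\chi_u^{-1}(f)}\Inf^u(g,f^u)$ via \Cref{item:influence} a second time and swapping summations reduces the task to showing, for every $f^u\in E^u$,
\begin{align*}
  \sum_{g\in E_u}(\Inf^u_u)^{-1}(e,g)\,\Inf^u(g,f^u)=\Inf^u(e,f^u)\cdot\*1[f^u\in T^u_v].
\end{align*}

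The heart of the argument is this inner identity, which I would establish by a case split on the position of $f^u$ in $T^u$ rooted at $u$, exploiting that $\Inf^u$ is a product distance matrix on the line graph of $T^u$. If $f^u\in T^u_v$ (under the convention that the parent-connecting edge $e$ belongs to $T^u_v$), then in the line graph the vertex $e$ separates every $g\in E_u$ from $f^u$, so $\Inf^u(g,f^u)=\Inf^u_u(g,e)\,\Inf^u(e,f^u)$; pulling this factor out collapses the sum to $[(\Inf^u_u)^{-1}\Inf^u_u](e,e)\,\Inf^u(e,f^u)=\Inf^u(e,f^u)$. If $f^u$ lies in a subtree hanging off a different child edge $e'\in E_u\setminus\{e\}$, the analogous factoring yields $[(\Inf^u_u)^{-1}\Inf^u_u](e,e')\,\Inf^u(e',f^u)=0$. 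If $f^u\in E_u\setminus\{e\}$ itself, the sum equals $[(\Inf^u_u)^{-1}\Inf^u_u](e,f^u)=0$. Adding the symmetric contribution from the $v$-side and subtracting $\Inf(e,f)$ gives the claimed formula.

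The main obstacle is bookkeeping rather than computation: one must make precise the embedding $E_u\hookrightarrow E^u$ induced by $\chi_u$ so that $\Inf^u_u$ really equals $\Inf_u$ (this is what permits reuse of a single inverse $\widehat{W}_u$ on both the graph and the tree), and one must pin down the convention that the parent-connecting edge $e$ is included in $T^u_v$. The latter is forced and consistent with the diagonal check $f=e$: both subtree contributions then give $\Inf^u(e,e)=\Inf^v(e,e)=1$, and subtracting $\Inf(e,e)=1$ gives $W\Inf(e,e)=1$, matching the tree case $W\Inf=I$ (\Cref{prop:Cor-Q-I} via \Cref{prop:W-Q-similarity}).
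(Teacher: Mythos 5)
Your proposal is correct and follows essentially the same route as the paper: decompose $W\Inf(e,f)$ into the two local sums minus $\Inf(e,f)$, use the structure and influence conditions to identify $\widehat W_u$ with $(\Inf^u_u)^{-1}$ and to swap sums, then localize each term via the product-distance structure of $\Inf^u$ on the tree $T^u$. The one small divergence is in how the vanishing cases are handled: the paper isolates this into \Cref{prop:W-Psi-diff-side} and proves the zero case \eqref{eq:f-in-Tu} indirectly, via the global tree identity $W\Inf=I$ together with \eqref{eq:f-not-in-Tu}, whereas you prove the vanishing directly by factoring $\Inf^u(g,f^u)$ through the cut vertex $e'$ in the line graph and collapsing to an off-diagonal entry of $(\Inf^u_u)^{-1}\Inf^u_u$; your direct argument is a touch more self-contained but uses exactly the same product-distance mechanism, and your careful handling of the convention that $e\in T^u_v$ (checked against the diagonal entry) matches the convention implicitly used in the paper's application of \eqref{eq:f-not-in-Tu}.
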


\Cref{lem:W-Psi-explicit} is proved  later in \Cref{sec:W-Psi-explicit}.
Intuitively, this proof is to transfer the quantitative relationship between $\widehat{W}_u$ and $\widehat{W}_v$ defined respectively on trees $T^u$ and $T^v$ (as in~\Cref{cond:SAW-tree}) to the graph $G$.
This follows from the fact that $\widehat{W}_u$ remains the same in both $G$ and $T^u$ by~\Cref{cond:SAW-tree}, and likewise for $\widehat{W}_v$.
%
%
Now, we are ready to prove \Cref{lem:tree-graph}.
\begin{proof}[Proof of \Cref{lem:tree-graph}]
  Suppose $e = (u, v)$, we define another graph $G_e := G - e$. By our assumption on girth,
  every edge $f \neq e$ in $E$ falls into exactly one of the following categories:
  \begin{enumerate}
  \item $E^u_\leq := \set{f \mid \-{dist}_{G_e}(f, u) \leq g}$;
  \item $E^v_\leq := \set{f \mid \-{dist}_{G_e}(f, v) \leq g}$;
  \item $E_> := \set{f \mid \-{dist}_{G_e}(f, u) > g \text{ and } \-{dist}_{G_e}(f, v) > g}$.
  \end{enumerate}
  For $f \in E^u_\leq$ or $f \in E_{>}$, by \Cref{lem:W-Psi-explicit}, we have
  \begin{align*}
    W\Inf(e, f)
    &= \sum_{\substack{f^u \in \chi_u^{-1}(f) \\ f^u \in T^u_v}}\Psi^u(e, f^u) + \sum_{\substack{f^v \in \chi_v^{-1}(f) \\ f^v \in T^v_u}} \Psi^v(e, f^v) - \Psi(e, f) \\
    (\text{\Cref{cond:SAW-tree}, \Cref{item:influence}})
    &= \sum_{\substack{f^u \in \chi_u^{-1}(f) \\ f^u \in T^u_v}}\Psi^u(e, f^u) + \sum_{\substack{f^v \in \chi_v^{-1}(f) \\ f^v \in T^v_u}} \Psi^v(e, f^v) - \sum_{f^v \in \chi_v^{-1}(f)}\Psi^v(e, f^v) \\
    &= \sum_{\substack{f^u \in \chi_u^{-1}(f) \\ f^u \in T^u_v}}\Psi^u(e, f^u) - \sum_{\substack{f^v \in \chi_v^{-1}(f) \\ f^v \not\in T^v_u}} \Psi^v(e, f^v).
  \end{align*}
  Note $f \not\in E_{\leq}^v$. 
  By \Cref{item:dist} of \Cref{cond:SAW-tree}, it holds that
  \begin{align} \label{eq:W-Psi-ub-1}
    \abs{W\Inf(e,f)}
    \leq \sum_{\substack{f^u \in \chi_u^{-1}(f) \\ \-{dist}_{T^u}(e, f^u) \geq g}}\abs{\Psi^u(e, f^u)} + \sum_{\substack{f^v \in \chi_v^{-1}(f) \\ \-{dist}_{T^v}(e, f^v) \geq g}} \abs{\Psi^v(e, f^v)}
  \end{align}
  By symmetry, \eqref{eq:W-Psi-ub-1} also holds for $f \in E^v_\leq$.
  Therefore, we have
  \begin{align*}
    \sum_{f:f\neq e} \abs{W\Inf(e, f)}
    &\leq \sum_{\substack{f^u \in T^u \\ \-{dist}_{T^u}(e, f^u) \geq g}}\abs{\Psi^u(e, f^u)} + \sum_{\substack{f^v \in T^v \\ \-{dist}_{T^v}(e, f^v) \geq g}} \abs{\Psi^v(e, f^v)} \\
    &\overset{(\star)}{\leq} 2C \cdot \sum_{k = g}^\infty (1 - \delta)^k
    = 2C \cdot (1-\delta)^g/\delta,
  \end{align*}
  where $(\star)$ follows from \Cref{item:DIT} of \Cref{cond:SAW-tree}.
\end{proof}

\subsubsection{The explicit form of $W\Inf$ (proof of \Cref{lem:W-Psi-explicit})}
\label{sec:W-Psi-explicit}

\begin{proposition} \label{prop:W-Psi-diff-side}
  Let $T = (V, E)$ be a tree, $\Inf$ be the influence matrix on $T$, and $W, W_u, W_v$ be the matrices defined in~\Cref{def:W}.
  For any edge $e=(u,v)$ and $f$ in $E$, it holds that
  \begin{align}
    \label{eq:f-in-Tu}
    \sum_{g \in E_u} W_u(e, g) \cdot \Inf(g, f)
    = 0&,  \quad \text{if $u$ lies on the unique path between $e$ and $f$,}\\
    \label{eq:f-not-in-Tu}
    \sum_{g \in E_u} W_u(e, g) \cdot \Inf(g, f)
    = \Psi(e,f)&, \quad  \text{if $e = f$ or $v$ lies on the unique path between $e$ and $f$.} 
  \end{align}
\end{proposition}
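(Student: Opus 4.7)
The plan is to reduce both identities to the defining relation $\widehat{W}_u \Psi_u = I_{E_u}$ through a single mechanism. By \Cref{prop:product-distance}, on the tree $T$ the influence matrix $\Psi$ is a product distance matrix on the line graph $L(T)$, so whenever $h$ is a cut-vertex of $L(T)$ separating $g$ from $f$ we have $\Psi(g, f) = \Psi(g, h)\,\Psi(h, f)$. My strategy is to find, in each case, a single edge $h \in E_u$ that is simultaneously a cut-vertex of $L(T)$ between \emph{every} $g \in E_u$ and $f$; then the factor $\Psi(h, f)$ pulls out of the sum $\sum_{g \in E_u} W_u(e, g)\,\Psi(g, f)$, leaving $\sum_{g \in E_u} W_u(e, g)\,\Psi_u(g, h) = (\widehat{W}_u \Psi_u)(e, h) = I(e, h)$, which equals $1$ if $e = h$ and $0$ otherwise.

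For the first identity, where $u$ lies on the path from $e = (u, v)$ to $f$, the edge $f$ sits in the component of $T \setminus \set{e}$ containing $u$, so there is a unique edge $g^\star \in E_u \setminus \set{e}$ that is the first step on the edge-path from $e$ through $u$ to $f$. The key claim is that $g^\star$ is a cut-vertex of $L(T)$ separating every $g \in E_u$ from $f$: the blocks of $L(T)$ are exactly the cliques $\set{E_w : w \in V}$, so deleting $g^\star$ disconnects the edges incident to $u$ from the far side of $g^\star$, where $f$ resides. Hence $\Psi(g, f) = \Psi(g, g^\star)\,\Psi(g^\star, f)$ for every $g \in E_u$, and the recipe above yields $\sum_{g \in E_u} W_u(e, g)\,\Psi(g, f) = \Psi(g^\star, f)\cdot I(e, g^\star) = 0$, using $e \neq g^\star$.

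For the second identity, where $e = f$ or $v$ lies on the path from $e$ to $f$, the pivot is $e$ itself: since $e \in E_u$ and $f$ is either equal to $e$ or lies on the $v$-side of $e$, the same block-structural observation shows that $e$ is a cut-vertex of $L(T)$ separating every $g \in E_u$ from $f$ (trivially for $g = e$ via $\Psi(e, e) = 1$, and by the product distance property for $g \in E_u \setminus \set{e}$). Thus $\Psi(g, f) = \Psi(g, e)\,\Psi(e, f)$, and the sum collapses to $\Psi(e, f)\cdot I(e, e) = \Psi(e, f)$. The only delicate step in the whole argument is this cut-vertex verification: it hinges on the explicit block decomposition of $L(T)$ into the cliques $\set{E_w : w \in V}$ and on the uniqueness of edge-paths in a tree; once that structural point is in hand, the rest is formal manipulation of the product distance identity.
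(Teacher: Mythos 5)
Your proof is correct. For~\eqref{eq:f-not-in-Tu} you use exactly the same argument as the paper: $e$ is a cut-vertex of the line graph separating every $g \in E_u$ from $f$, so $\Psi(g,f)=\Psi(g,e)\Psi(e,f)$ and the sum collapses to $\Psi(e,f)\cdot(\widehat{W}_u\Psi_u)(e,e)=\Psi(e,f)$. For~\eqref{eq:f-in-Tu} you depart from the paper: you identify the pivot $g^\star\in E_u\setminus\set{e}$ on the $u$-side of $e$ towards $f$, factor $\Psi(g,f)=\Psi(g,g^\star)\Psi(g^\star,f)$ for every $g\in E_u$ (trivially so for $g=g^\star$), and collapse the sum to $\Psi(g^\star,f)\cdot(\widehat{W}_u\Psi_u)(e,g^\star)=\Psi(g^\star,f)\cdot\mathbf{1}[e=g^\star]=0$. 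The paper instead deduces~\eqref{eq:f-in-Tu} from the global identity $W\Inf = I$ on trees (via \Cref{prop:Cor-Q-I} and \Cref{prop:W-Q-similarity}) combined with the already-proved~\eqref{eq:f-not-in-Tu} applied at the other endpoint $v$. Your route is more self-contained and symmetric: both equations come from the single mechanism ``pick a pivot in $E_u$ that separates $E_u$ from $f$, restrict to the local block, and invoke $\widehat{W}_u\Psi_u=I_{E_u}$,'' with no appeal to the global product-inverse machinery. The paper's route is slightly shorter \emph{given} it already has $W\Inf=I$ in hand, but it threads through two auxiliary propositions. One small imprecision in your wording: the blocks of $L(T)$ are the cliques $E_w$ for vertices $w$ with $\deg_T(w)\ge 2$, not all $w\in V$ (a leaf contributes a trivial clique, not a block); this does not affect the argument, since the pivot $g^\star$ is incident to $u$ and to the next internal vertex towards $f$, both of which have the required degree.
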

\begin{proof}
  We note that the condition in \eqref{eq:f-in-Tu} and \eqref{eq:f-not-in-Tu} form a partition of all the $f \in E$.
  We first prove~\eqref{eq:f-not-in-Tu}. In this case,
  $e$ lies on the unique path between $f$ and any edge $g \in E_u$.
  Therefore, by~\Cref{prop:product-distance} and~\Cref{def:W},
  \begin{align*}
    \sum_{g \in E_u} W_u(e, g) \cdot \Psi(g, f)
    &= \sum_{g \in E_u} W_u(e, g) \cdot \Psi(g, e) \cdot \Psi(e, f) \\
    &= \Psi(e, f) \cdot \widehat{W}_u\Psi_u(e, e) = \Psi(e,f).
  \end{align*}

  Now, we prove~\eqref{eq:f-in-Tu}. Since $T$ is a tree, by \Cref{prop:Cor-Q-I} and~\Cref{prop:W-Q-similarity}, $W\Inf = I$, which means $W\Inf(e, f) = 0$ by $e \neq f$.
  This indicates
  \begin{align*}
    0
    &= \sum_{g\in E_u} W_u(e, g) \Inf(g, f) + \sum_{h \in E_v} W_v(e, h) \Inf(h, f) - \Inf(e, f) 
    = \sum_{g\in E_u} W_u(e, g) \Inf(g, f),
  \end{align*}
  where the last inequality holds by~\eqref{eq:f-not-in-Tu}
\end{proof}



For graph $G=(V,E)$ and $u \in V$, recall that $T^u$ is a tree rooted at vertex $u$ defined in~\Cref{cond:SAW-tree} and $\Psi^u$ is the influence matrix on $T^u$.
Let $\widehat{W}^u_v$, $W^u_v$ and $W_u$ be defined as in~\Cref{def:W} for influence matrix $\Psi^u$ and $\Psi$.
%
Now, we are ready to prove \Cref{lem:W-Psi-explicit}.

\begin{proof}[Proof of \Cref{lem:W-Psi-explicit}]
  For $e = (u, v)$, by definition, we have
  \begin{align*}
    W\Psi(e, f) &= \sum_{g \in E_u} W_u(e, g) \Inf(g, f) + \sum_{h \in E_v} W_v(e, h) \Inf(h, f) - \Psi(e, f).
  \end{align*}
  By symmetry, it is sufficient to show that
  \begin{align} \label{eq:w-psi-explicit-target}
    \sum_{g \in E_u} W_u(e, g) \Inf(g, f) &= \sum_{\substack{f^u \in \chi_u^{-1}(f) \\ f^u \in T^u_v}}\Psi^u(e, f^u).
  \end{align}
  By \Cref{item:struct} and \Cref{item:influence} of \Cref{cond:SAW-tree}, $\Psi^u_u = \Psi_u$ and thus $\widehat{W}^u_u = \widehat{W}^u$. Therefore,
  \begin{align*}
    \sum_{g \in E_u} W_u(e, g) \Inf(g, f)
    &= \sum_{g \in E_u} W_u^u(e, g) \sum_{f^u \in \chi_u^{-1}(f)} \Psi^u(g, f^u) \\
    &= \sum_{f^u \in \chi_u^{-1}(f)} \sum_{g \in E_u} W_u^u(e, g) \cdot \Psi^u(g, f^u) \\
    &\overset{\eqref{eq:f-in-Tu}}{=} \sum_{\substack{f^u \in \chi_u^{-1}(f) \\ f^u \in T^u_v}} \sum_{g \in E_u} W_u^u(e, g) \cdot \Psi^u(g, f^u) \\
    &\overset{\eqref{eq:f-not-in-Tu}}{=} \sum_{\substack{f^u \in \chi_u^{-1}(f) \\ f^u \in T^u_v}} \Psi^u(e, f^u),
  \end{align*}
  where the last two equation follows from~\Cref{prop:W-Psi-diff-side}.
\end{proof}

\subsection{Vertex model}
For the vertex model, our main focus is the case where the underlying graph is a tree.
As we discussed before,  this is an interesting and unresolved scenario where the total influence may be unbounded yet a constant spectral independence may still exist.

Let $T=(V,E)$ be a tree rooted at $r \in V$ and  $\mu$ be a vertex model distribution  with the underlying graph $T$.
Denote the children of $u$ by $C(u)$ and the parent of $u$ by $p_u$. 
We observe that weak local correlation, with small total influences around a vertex, implies the boundedness of approximate inverse (i.e.,~\Cref{cond:boundedness}). 

\begin{lemma}\label{lem:vertex-boundedness}
  For any vertex $u \in V \setminus \{r\}$, let $\beta_u = \Cor(u,p_u)$. 
  If there exists some $\epsilon \in (0,1)$ such that for any $u \in V \setminus \{r\}$, 
  \begin{align*}
    \sum_{v \in C(u)} \beta_v^2 \le 1 - \epsilon \quad \text{and} \quad \sum_{v \in C(r)} \beta_v^2 \le \frac{1}{2\epsilon}.
  \end{align*} 
  Then,~\Cref{cond:boundedness} holds with $\alpha = \epsilon^2/4$.
\end{lemma}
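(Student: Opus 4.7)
The approach is a direct vertex-level analysis of the quadratic form $x^{\top} Q x$. The first step is to establish a clean explicit formula for it. For each edge $e = \{u, p_u\}$, the block $\widehat{Q}_e$ is the inverse of the $2 \times 2$ correlation matrix with $1$s on the diagonal and $\beta_u$ off-diagonal, so a short computation gives $x^{\top} Q_e x = (x_u - \beta_u x_{p_u})^2/(1 - \beta_u^2) + x_{p_u}^2$. Summing over edges, the accumulated ``parent'' contributions total $\sum_v |C(v)|\, x_v^2$; after subtracting the $\diag\{d_u - 1\}$ correction (for every non-root $v$, $|C(v)| = d_v - 1$, while the root contributes $|C(r)| = d_r$, an excess of $1$) the only leftover is $x_r^2$. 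This yields the key identity
\begin{equation*}
x^{\top} Q x \;=\; x_r^2 + \sum_{u \neq r} \frac{(x_u - \beta_u x_{p_u})^2}{1 - \beta_u^2},
\end{equation*}
reducing the claim to a Poincar\'e-type lower bound on this weighted sum.

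\textbf{Inductive argument.} Set $\alpha := \epsilon^2/4$ and $D := \epsilon/2$. I proceed by bottom-up induction on the tree, maintaining the invariant that for every non-root vertex $v$ there exists $\delta_v \leq D$ such that
\begin{equation*}
\sum_{u \in T_v \setminus \{v\}} \frac{(x_u - \beta_u x_{p_u})^2}{1 - \beta_u^2} + \delta_v x_v^2 \;\geq\; \alpha \sum_{u \in T_v} x_u^2 \qquad \text{for all } x,
\end{equation*}
where $T_v$ denotes the subtree rooted at $v$. Leaves satisfy this with $\delta_v = \alpha$. For an internal non-root $v$, I split the sum over children $c \in C(v)$, apply the IH to each child subtree, and minimize the resulting one-dimensional quadratic in each $x_c$ separately (convexity is automatic since $\delta_c \leq D < 1$ forces $\delta_c(1-\beta_c^2) < 1$). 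The minimizer is $x_c^* = \beta_c x_v / (1 - \delta_c(1 - \beta_c^2))$, and substituting produces the recursion $\delta_v = \alpha + \sum_{c \in C(v)} \delta_c \beta_c^2 / (1 - \delta_c(1 - \beta_c^2))$. Since each summand is increasing in $\delta_c$, combining $\delta_c \leq D$ with the hypothesis $\sum_c \beta_c^2 \leq 1-\epsilon$ yields $\delta_v \leq \alpha + D(1-\epsilon)/(1-D)$, and the desired $\delta_v \leq D$ reduces to $D^2 - D(\alpha+\epsilon) + \alpha \leq 0$, which at $\alpha = \epsilon^2/4$ and $D = \epsilon/2$ becomes $-\epsilon^3/8 \leq 0$.

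\textbf{Root and main obstacle.} At the root, applying the same one-dimensional minimization to the children $c \in C(r)$ in the identity yields
\begin{equation*}
x^{\top} Q x \;\geq\; x_r^2 \biggl(1 - \sum_{c \in C(r)} \frac{\delta_c \beta_c^2}{1 - \delta_c(1 - \beta_c^2)}\biggr) + \alpha \sum_{u \neq r} x_u^2.
\end{equation*}
Using $\delta_c \leq D$ and the weaker root hypothesis $\sum_{c \in C(r)} \beta_c^2 \leq 1/(2\epsilon)$, the inner sum is at most $D/((1-D)\cdot 2\epsilon) = 1/(4 - 2\epsilon)$, which is $\leq 1 - \alpha$ for all $\epsilon \in (0, 1)$. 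Hence the $x_r^2$ coefficient stays $\geq \alpha$, giving $x^{\top} Q x \geq \alpha \|x\|^2$ and thus $\lambda_{\min}(Q) \geq \alpha = \epsilon^2/4$. The main technical obstacle is the algebraic tuning: the constants $\alpha = \epsilon^2/4$ and $D = \epsilon/2$ are essentially forced by the discriminant condition of the inductive quadratic, and the extra $x_r^2$ term supplied by the identity is precisely what allows the weaker root hypothesis $\leq 1/(2\epsilon)$ to close the argument even though it would be insufficient for the interior induction.
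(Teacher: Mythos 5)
Your proof is correct and is essentially the same sum-of-squares argument as the paper's: both lower bound the quadratic form $x^\top Q x$ by completing squares with a ``leak'' parameter equal to $\epsilon/2$ (your $D$, the paper's $\zeta$), and both use the weaker $1/(2\epsilon)$ hypothesis only at the root, where the extra $x_r^2$ from the identity gives the needed slack. The only difference is presentational: you first isolate the clean identity $x^\top Q x = x_r^2 + \sum_{u\neq r}(x_u-\beta_u x_{p_u})^2/(1-\beta_u^2)$ and then derive $D=\epsilon/2$ via a bottom-up induction whose recursion forces the discriminant condition $D^2 - D(\alpha+\epsilon)+\alpha\le 0$; the paper instead posits the $\zeta$-twisted decomposition at the outset and verifies it by matching coefficients of $x_u^2$ and $x_u x_v$.
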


\begin{remark}
  Using~\Cref{thm:Psi-lambda-max} and~\Cref{lem:vertex-boundedness}, we can reproduce the classical result of the optimal mixing of Glauber dynamics for the Ising model on trees when $\beta$ is smaller than $\beta_1(\Delta) = \frac{\sqrt{\Delta-1}+1}{\sqrt{\Delta-1}-1}$, known as the ``spin-glass critical point'', established in~\cite{Berger2005Glauber}.
  By a straightforward calculation, $\abs{\beta_u} = \abs{\Cor(u,p_u)} \le \frac{\beta - 1}{\beta + 1}$. Hence, for any $u \in V \setminus \{r\}$, 
  \begin{align*}
    \sum_{v \in C(u)} \beta_u^2 < (\Delta-1) \tp{\frac{\beta_1(\Delta) - 1}{\beta_1(\Delta) +1}}^2 = 1 \quad \text{and} \quad \sum_{v \in C(r)} \beta_u^2 < \Delta \tp{\frac{\beta_1(\Delta) - 1}{\beta_1(\Delta) +1}}^2 = \frac{\Delta}{\Delta-1}.
  \end{align*}
  Then, by~\Cref{thm:Psi-lambda-max} and~\Cref{lem:vertex-boundedness}, the Gibbs distribution $\mu$ of the Ising model on any tree $T=(V,E)$ with $\beta < \beta_1(\Delta)$ and arbitrary external field $\lambda \in \mathbb{R}_{>0}^V$ exhibits a constant spectral independence, yielding the optimal mixing result according to~\cite{chen2022localization}.
\end{remark}
The crux of the proof of~\Cref{lem:vertex-boundedness} lies in decomposing the quadratic form $\*x^T Q \*x$ of the (approximate) inverse $Q$ into a sum of squares.

\begin{proof}
  Let $\beta_{u,v} = \Cor(u,v) $ for edge $(u,v) \in E$ and $\beta_u = \beta_{u,p_u}$ for $u \in V \setminus \{r\}$. By a straightforward calculation, 
  \begin{align*}
    \widehat{Q}_e:=\tp{\Cor_e}^{-1} = 
    \frac{1}{1-\beta_{u,v}^2} \left[       
      \begin{array}{cc}   
        1 & -\beta_{u,v}\\
        -\beta_{u,v} & 1\\
      \end{array}
    \right].
  \end{align*}
  By~\Cref{def:Q-vertex}, the approximate inverse $Q$ satisfies
  \begin{equation*}
    \forall u \in E, \quad Q(u, u) = \sum_{v\in N(u)}Q_{(u,v)}(u,u)-(d_u-1)=
    \sum_{v \in N(u)} \frac{1}{1-\beta_{u,v}^2} - d_u + 1;
  \end{equation*}
  \begin{equation*}
    \forall e=(u,v)\in E, \quad Q(u, v) =Q_e(u, v)= - \frac{\beta_{u,v}}{1-\beta_{u,v}^2};
  \end{equation*}
  otherwise, $Q(u,v)=0$.
  Here, $N(u)$ is the set of neighbors of $u$ and $d_u$ is the degree of vertex $u$. We claim that the quadratic form $\*x^T Q \*x$ can be written as follows:
  \begin{align}\label{eq:quadratic}
    \*x^T Q \*x = \sum_{u \in V \setminus \{r\}} \frac{1}{1-\beta_u^2} \tp{\frac{\beta_u}{\sqrt{1-\zeta(1-\beta_u^2)}} x_{p_u} - \sqrt{1-\zeta(1-\beta_u^2)} x_u}^2 + \sum_{u \in V} \sigma(u), 
  \end{align}
  where $\zeta = \epsilon/2$ and $\sigma \in \mathbb{R}^V$ satisfies:
  \begin{align*}
    \sigma(u) := 
    \begin{cases}
      \tp{1 - \sum_{v \in C(u)} \frac{\zeta \beta_v^2}{1-\zeta (1-\beta_v^2)}} x_u^2 & \text{$u$ is the root,} \\
      \tp{\zeta - \sum_{v \in C(u)} \frac{\zeta \beta_v^2}{1-\zeta (1-\beta_v^2)}} x_u^2 & \text{otherwise.}
    \end{cases}
  \end{align*}
  Assuming~\eqref{eq:quadratic}, we conclude the proof with the following calculation:
  \begin{align*}
    \*x^T Q \*x &\ge \tp{1-\sum_{v \in C(r)} \frac{ \zeta \beta_v^2}{1-\zeta (1-\beta_v^2)}} x_r^2 + \sum_{u \in V \setminus \{r\}} \zeta \tp{1-\sum_{v \in C(u)} \frac{\beta_v^2}{1-\zeta(1-\beta_v^2)}} x_u^2  \\
    &\ge \tp{1-\epsilon \sum_{v \in C(r)} \beta_v^2} x_r^2 + \sum_{u\in V \setminus \{r\}} \zeta \tp{1-\frac{1}{1-\zeta} \sum_{v \in C(u)} \beta_v^2} x_u^2 \ge \frac{\epsilon^2}{4} \*x^T \*x, 
  \end{align*}
  where the last inequality follows from the assumption on $\sum_{v \in C(u)} \beta_v^2$.
  Therefore, it remains to verify~\eqref{eq:quadratic}. 
  It can be easily seen that the coefficients of $x_u x_v$ in both sides are identical for any $u \neq v$. Hence,
  it suffices to calculate the coefficients of $x_u^2$ term for any $u \in V$.
  
  By a straightforward calculation, the coefficients of $x_r^2$ term in the RHS of~\eqref{eq:quadratic} is
  \begin{align*}
    &\quad \sum_{v \in C(r)} \frac{\beta_v^2}{(1-\beta_v^2)(1-\zeta(1-\beta_v^2))} + \tp{1 - \sum_{v \in C(r)} \frac{\zeta \beta_v^2}{1-\zeta (1-\beta_v^2)}}\\
    &=\sum_{v \in C(r)} \frac{\beta_v^2}{1-\beta_v^2} + 1 = \sum_{v \in C(r)} \frac{1}{1-\beta_v^2} - \abs{C(r)}+ 1 = Q(r,r).
  \end{align*}
  Similarly, the coefficients of $x_u^2$ term in the RHS of~\cref{eq:quadratic} for any $u \in V \setminus \{r\}$ is given by
  \begin{align*}
    &\quad \frac{1-\zeta(1-\beta_u^2)}{1-\beta_u^2} + \sum_{v \in C(u)} \frac{\beta_v^2}{(1-\beta_v^2)(1-\zeta(1-\beta_v^2))}  +  \tp{\zeta - \sum_{v \in C(u)} \frac{\zeta \beta_v^2}{1-\zeta (1-\beta_v^2)}}\\
    &= \frac{1}{1-\beta_u^2} + \sum_{v \in C(u)} \frac{1}{1-\beta_v^2} - \abs{C(u)} = \sum_{v \in N(u)} \frac{1}{1-\beta_{u,v}^2} - d_u + 1 = Q(u,u).
  \end{align*}
  These complete the verification of~\eqref{eq:quadratic}.
\end{proof}

\section{Spectral independence in the monomer-dimer model}
\label{sec:matching-SI}

Let $\mu$ be the Gibbs distribution of the monomer-dimer model on graph $G=(V,E)$ with fugacity $\lambda > 0$,
and $\Inf \; (\Cor)$ be the (symmetrized) influence matrix of $\mu$.
The primary goal of this section is to show~\Cref{thm:matching-SI-large-girth}.
According to \Cref{thm:Psi-lambda-max}, it is sufficient for us to have the following results.

\begin{lemma} \label{lem:lambda-max-cor-Q}
  If $G$ has maximum degree $\Delta$ and girth at least $2g + 1$, then 
  \[\lambda_{\max}(Q\; \Cor) \leq 4\tp{\sqrt{1 + \lambda\Delta} + 1} \tp{1-\frac{2}{\sqrt{1+\lambda\Delta}+1}}^{g/2} + 1.\]
  In particular, if $g  \ge \tp{\sqrt{1+\lambda \Delta} + 1} \log\tp{\sqrt{1+\lambda \Delta} + 1}$, then $\lambda_{\max}(Q\; \Cor) \le 5$.
\end{lemma}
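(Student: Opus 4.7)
The plan is to verify Condition \ref{cond:SAW-tree} for the monomer-dimer model on $G$ with parameters $C = 2$ and $\delta = 2/(\sqrt{1+\lambda \Delta}+1)$, and then invoke Lemma \ref{lem:tree-graph}. For each $u \in V$, I take $T^u = (V^u, E^u)$ to be the classical \emph{path-tree} of $G$ rooted at $u$: its vertices are the self-avoiding walks in $G$ starting at $u$, and two walks are adjacent in $T^u$ iff one extends the other by a single step. The natural projection $\chi_u : E^u \to E$ sends each tree edge to the edge of $G$ it uses. The distribution $\mu_T$ on $2^{E^u}$ is the monomer-dimer model on $T^u$ with fugacity $\lambda$, with influence matrix $\Psi^u$.

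Items \ref{item:struct} and \ref{item:dist} of Condition \ref{cond:SAW-tree} are immediate from the construction: the edges of $T^u$ incident to the root are in bijection with $E_u$, and any self-avoiding walk of length $k$ in $G$ induces a path of length at least $k$ in $T^u$ because walks do not repeat vertices. Item \ref{item:influence}, the key identity $\Psi(e, g) = \sum_{g' \in \chi_u^{-1}(g)} \Psi^u(e, g')$ for $e \in E_u$, reduces via logarithmic differentiation of the matching polynomial to Godsil's classical identity relating the edge-occupancy marginals of the monomer-dimer model on $G$ and on its path-tree $T^u$; this identity is exactly the engine of the path-tree analysis used in \cite{chen2021optimal}. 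Item \ref{item:DIT}, the total influence decay with the claimed parameters, follows from the standard correlation-decay estimate for the monomer-dimer model on trees of maximum degree $\Delta$: each entry of $\Psi^u$ on $T^u$ admits a product decomposition whose local factors are bounded by $1-\delta$ with $\delta = 2/(\sqrt{1+\lambda\Delta}+1)$, yielding $\sum_{f : \mathrm{dist}(e,f) = k} |\Psi^u(e,f)| \le 2(1-\delta)^k$.

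With Condition \ref{cond:SAW-tree} verified, Lemma \ref{lem:tree-graph} gives $\lambda_{\max}(Q\Cor) \le 2C(1-\delta)^g/\delta + 1 = 2(\sqrt{1+\lambda\Delta}+1)(1-\delta)^g + 1$. Since $(1-\delta)^g \le (1-\delta)^{g/2}$ and $2 \le 4$, this is at most the claimed $4(\sqrt{1+\lambda\Delta}+1)(1-\delta)^{g/2} + 1$. For the ``in particular'' claim, writing $m := \sqrt{1+\lambda\Delta}+1$, the hypothesis $g \ge m \log m$ gives $\delta g/2 = g/m \ge \log m$, so $(1-\delta)^{g/2} \le e^{-\delta g/2} \le 1/m$, and the main term is bounded by $4m \cdot (1/m) = 4$, giving the total $\le 5$.

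The main obstacle will be making item \ref{item:influence} precise. Godsil's identity is classical for matching polynomials, but translating it into the edge-influence language of the present paper and verifying that the identity holds \emph{simultaneously} for every choice of $e \in E_u$ (so that the path-tree rooted at the endpoint of $e$ can be used for all influences emanating from $e$) requires some bookkeeping. The decay estimate in item \ref{item:DIT} and tracking the constant $C = 2$ are standard once the path-tree has been set up, but care is needed to avoid losing a factor in the base of the decay $\delta$.
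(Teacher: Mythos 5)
Your approach (path-tree, verify Condition~\ref{cond:SAW-tree}, invoke Lemma~\ref{lem:tree-graph}) is exactly the paper's, but there is a genuine gap in the verification of Item~\ref{item:DIT}. You assert the total-influence decay $\sum_{f:\dist(e,f)=k}|\Psi^u(e,f)| \le 2(1-\delta)^k$ with $\delta = \frac{2}{\sqrt{1+\lambda\Delta}+1}$, citing ``the standard correlation-decay estimate''. But the result actually available (Proposition~6.9 of \cite{chen2021optimal}, restated as Lemma~\ref{lem:inf-decay}) gives only $\delta = 1 - \sqrt{1-\frac{2}{\sqrt{1+\lambda\Delta}+1}}$, i.e.\ the per-step decay factor is $\sqrt{1-\frac{2}{\sqrt{1+\lambda\Delta}+1}}$, not $1-\frac{2}{\sqrt{1+\lambda\Delta}+1}$. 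Since $(1-\delta_{\mathrm{CLV}})^2 = 1-\frac{2}{\sqrt{1+\lambda\Delta}+1}$, your claimed decay is the \emph{square} of the known one; roughly, your $\delta$ is twice too large. Your heuristic ``each entry of $\Psi^u$ admits a product decomposition whose local factors are bounded by $1-\delta$'' does not yield a bound on the \emph{sum} over all edges at distance $k$, because the number of such edges grows with $\Delta$; accounting for this branching is precisely what costs the square root in the CLV estimate.

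Your final answer happens to match the stated bound only because you immediately discard the extra strength: you replace $(1-\delta)^g$ by $(1-\delta)^{g/2}$ and $2$ by $4$, loosening by exactly the factors you overclaimed. The paper instead uses the correct $\delta = 1-\sqrt{1-\frac{2}{\sqrt{1+\lambda\Delta}+1}}$, so that $(1-\delta)^g = \bigl(1-\frac{2}{\sqrt{1+\lambda\Delta}+1}\bigr)^{g/2}$ comes out automatically, and then bounds $1/\delta \le \sqrt{1+\lambda\Delta}+1$ via $\sqrt{1-y}\le 1-y/2$. To fix your writeup you should quote the actual decay parameter from Lemma~\ref{lem:inf-decay} and redo the last two displayed computations with that $\delta$; everything else (Items~\ref{item:struct}, \ref{item:dist}, \ref{item:influence} via the path-tree and Proposition~\ref{prop:path-tree-inf}, and the ``in particular'' calculation) is fine.
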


\begin{lemma} \label{lem:lambda-min-Q}
  For any graph $G$, it holds that $\lambda_{\min}(Q) \geq \frac{1}{2\lambda + 1} > 0$.
\end{lemma}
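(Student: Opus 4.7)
The plan is to turn $\lambda_{\min}(Q) \ge 1/(2\lambda+1)$ into a quadratic form inequality via an explicit formula for $Q$, and verify it using the matching structure. Applying Sherman-Morrison-Woodbury (\Cref{prop:inversion-block}) to $\Cor_v = D_v - \*q_v\*q_v^\top$, where $D_v = \diag(1/(1-p_e))_{e \in E_v}$ and $(\*q_v)_e = \sqrt{p_e/(1-p_e)}$, one obtains
\[
\Cor_v^{-1} = D_v^{-1} + \frac{\*s_v\*s_v^\top}{1-P_v},\qquad (\*s_v)_e = \sqrt{p_e(1-p_e)},\ P_v = \sum_{e \in E_v} p_e \le 1,
\]
with the denominator strictly positive because the matching constraint forces $P_v \le 1$ at every vertex. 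Summing the zero-padded blocks and using that every edge is incident to exactly two stars, the approximate inverse becomes
\[
Q = \diag(1-2p_e)_{e \in E} + \sum_{v \in V}\frac{1}{1-P_v}\bigl(\*s_v\*s_v^\top\bigr)^{\mathrm{pad}}.
\]

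Hence $\lambda_{\min}(Q) \ge 1/(2\lambda+1)$ is equivalent to showing, for every $\*x \in \^R^E$,
\[
\sum_{e \in E} (1-2p_e)\,x_e^2 + \sum_{v \in V} \frac{T_v^2}{1-P_v} \;\ge\; \frac{1}{2\lambda+1}\sum_{e \in E} x_e^2,\qquad T_v := \sum_{e \in E_v}\sqrt{p_e(1-p_e)}\,x_e.
\]
The two structural inputs I would rely on are $P_v \le 1$ and the edge-marginal bound $p_e \le \lambda/(1+\lambda)$ (equivalently $p_e/(1-p_e) \le \lambda$), both immediate from the ratio formula $p_e/(1-p_e) = \lambda\,\mu_{G-e}(\bar u_e,\bar v_e)$ for monomer-dimer marginals.

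The main obstacle is the regime where some $p_e$ exceeds $\lambda/(2\lambda+1)$: there the diagonal contribution $(2\lambda+1)(1-2p_e) - 1$ is negative, and the positive rank-one terms $T_u^2/(1-P_u)$ and $T_v^2/(1-P_v)$ at the two endpoints of $e$ must cover the shortfall. A purely per-vertex reduction through \Cref{lem:edge-boundedness} does not close the gap, since one can construct configurations with $|E_v| \ge 3$ in which $\lambda_{\max}(\Cor_v) > (2\lambda+1)/(\lambda+1)$, so the coupling through the shared edge is essential. I would therefore aim for a quadratic-form certificate of $(2\lambda+1)Q - I \succeq 0$ that splits the deficit of each bad edge $e = (u,v)$ between the two endpoint contributions, with the accounting controlled by $p_e \le \lambda(1-p_e)$; a natural auxiliary ingredient is the pairwise bound
\[
\frac{p_e\,p_f}{(1-p_e)(1-p_f)} \le \Bigl(\frac{\lambda}{\lambda+1}\Bigr)^{\!2}
\]
for any two edges $e,f$ meeting at a vertex, which follows from the decomposition $Z_G = Z_{G-e-f} + \lambda Z_{G-\{u_e, v\}} + \lambda Z_{G-\{u_f, v\}}$ together with the observation $Z_{G-e-f} \ge \max\bigl(Z_{G-\{u_e, v\}}, Z_{G-\{u_f, v\}}\bigr)$. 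Tightness at two parallel edges, where $p_e = \lambda/(2\lambda+1)$ simultaneously saturates the pairwise bound and $\lambda_{\min}(Q) = 1/(2\lambda+1)$, indicates that the certificate must be locally sharp and serves as a template for the cancellation.
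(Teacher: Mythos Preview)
Your explicit formula $Q = \diag(1-2p_e)_{e\in E} + \sum_v \frac{1}{1-P_v}(\*s_v\*s_v^\top)^{\mathrm{pad}}$ is correct, but the central claim that the per-vertex reduction via \Cref{lem:edge-boundedness} ``does not close the gap'' is wrong, and this is where the proposal goes astray. The two inputs you list, $P_v\le 1$ and $p_e/(1-p_e)\le\lambda$, are indeed insufficient on their own; however, the monomer-dimer model satisfies the strictly stronger constraint
\[
p_e \;\le\; \lambda\,(1-P_v)\qquad\text{for every }e\in E_v,
\]
because $p_e/(1-P_v)=\Pr[e]/\Pr[v\text{ unmatched}]=\lambda\,Z_{G-u_e-v}/Z_{G-v}\le\lambda$. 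This is exactly the paper's Claim that if the largest marginal $\mu_1$ exceeds $\lambda/(2\lambda+1)$ then $P_v\le 2\lambda/(2\lambda+1)$, so at most one edge in $E_v$ can have marginal above $\lambda/(2\lambda+1)$. Your putative counterexamples with $\abs{E_v}\ge 3$ violate this constraint and hence cannot arise from an actual monomer-dimer instance.

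With this extra input the paper shows $\lambda_{\max}(\Cor_v)\le(2\lambda+1)/(\lambda+1)$ \emph{for every} $v$: writing $\beta I-\Cor_v=(\beta I-\ol D_v^{-1})+\sqrt{\*r_v}\sqrt{\*r_v}^{\top}$, the diagonal part has at most one negative eigenvalue (by the claim above), hence so does the sum by Weyl, and its determinant is checked to be nonnegative via the matrix determinant lemma together with an elementary scalar inequality. Then \Cref{lem:edge-boundedness} with $\beta=(2\lambda+1)/(\lambda+1)$ gives $\lambda_{\min}(Q)\ge 2/\beta-1=1/(2\lambda+1)$ immediately. No coupling across the two endpoints of an edge is needed; your pairwise bound $R_eR_f\le(\lambda/(\lambda+1))^2$ is correct but unused, and the promised quadratic-form certificate of $(2\lambda+1)Q-I\succeq 0$ that ``splits the deficit'' is never actually constructed.
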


Now, we are ready to prove \Cref{thm:matching-SI-large-girth}.

\begin{proof}[Proof of \Cref{thm:matching-SI-large-girth}]
  Let $g' = \ftp{(g-1)/2}$ so that $g \geq 2g' + 1$.
  The first part follows from~\Cref{thm:Psi-lambda-max},~\Cref{lem:lambda-max-cor-Q} (take $g = g'$), and~\Cref{lem:lambda-min-Q}.
  Note that $g'/2 \geq \ftp{(g-1)/4}$.
  When $\lambda \Delta \le 3$, $\lambda_{\max}\tp{\Psi} \le 2 \lambda \Delta \le 6$ by Theorem 2.10 in~\cite{chen2021optimal}.
  Otherwise, it holds that $g' \ge 4\sqrt{\lambda \Delta} \log \tp{\lambda \Delta} \ge \tp{\sqrt{1+\lambda \Delta}+1} \log \tp{\sqrt{1+\lambda \Delta}+1}$. 
  Therefore, the second part follows from~\Cref{thm:Psi-lambda-max},~\Cref{lem:lambda-max-cor-Q}, and~\Cref{lem:lambda-min-Q}.
\end{proof}

First, we prove \Cref{lem:lambda-max-cor-Q} via the well known \emph{path-tree} for the monomer-dimer model introduced in~\cite{godsil1981matchings}.
The author proved that it can preserve the matching polynomial (i.e., the partition function of unweighted monomer-dimer model) in some sense.
As an application, the path-tree can be used to prove the roots of the matching polynomial of a graph of maximum degree $d$ are real and at most $2\sqrt{d-1}$~\cite{heilmann1972theory}.

\begin{definition}[\text{\cite{godsil1993algebratic}}] \label{def:path-tree}
  Let $G = (V, E)$ be a graph and $u \in V$ be a vertex, the \emph{path-tree} $T^u = (V^u, E^u)$ is the tree whose vertices correspond to paths in $G$ starting at $u$ and do not contain any vertex twice.
  Moreover, one path is connected to another if one extends the other by one vertex,
  and the edge used to connect them is a copy of the different edge between these two paths.
  We let $\chi_u: E^u \to E$ be the map that maps all the copied edges to their original version.
\end{definition}

From the perspective of influence, \cite{chen2021optimal} observed that there is a fine-grained relationship between influences in $G$ and influences in its path-tree.

\begin{proposition}[\text{\cite[Proposition 6.6]{chen2021optimal}}] \label{prop:path-tree-inf}
  Let $G = (V, E)$ be a graph and $u \in V$ be a vertex.
  Let $T^u$ and $\chi_u$ be defined in \Cref{def:path-tree}, and $\Psi^u$ be the influence matrix of the monomer-dimer model on $T^u$, it holds that
  \begin{align*}
    \forall e \in E_u, f \in E, \quad \Psi(e, f) &= \sum_{f' \in \chi_u^{-1}(f)} \Psi^u(e, f').
  \end{align*}
\end{proposition}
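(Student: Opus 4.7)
The plan is to derive the identity from Godsil's classical path-tree theorem, which states that for every vertex $u$ of any graph $H$ the matching partition functions satisfy $Z(H \setminus u,\lambda)/Z(H,\lambda) = Z(T^u_H \setminus u,\lambda)/Z(T^u_H,\lambda)$. Combined with the fact that every marginal and conditional marginal for the monomer-dimer model on $H$ is a ratio of matching partition functions of induced subgraphs of $H$, this identity transfers all ``$u$-side'' marginals from $G$ to $T^u$ without distortion, which is precisely what is needed to pass $\Psi(e,f)$ through the path-tree construction.

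First I would rewrite $\Psi(e,f) = \Pr[G]{f \mid e} - \Pr[G]{f \mid \ol e}$ for $e = \{u,v\} \in E_u$ using the standard identity $\Pr[H]{g \in M} = \lambda \, Z(H \setminus g,\lambda)/Z(H,\lambda)$, where $H \setminus g$ denotes deletion of the two endpoints of $g$. Conditioning on $e \in M$ replaces $G$ by $G \setminus \{u,v\}$, while conditioning on $e \notin M$ replaces $G$ by $G \setminus e$ (edge deletion). Hence $\Psi(e,f)$ becomes a linear combination of ratios of matching partition functions of subgraphs of $G$ obtained by modifications localised around $e$. The analogous expansion on $T^u$ expresses each $\Psi^u(e,f')$ as a linear combination of ratios over subtrees of $T^u$.

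Next I would apply Godsil's identity to every ratio appearing in the $G$-side expansion. A subtle but important point is that the identity continues to apply after the localised modifications at $e$: the path-tree of $G \setminus e$ rooted at $u$ is precisely $T^u$ with the root-incident edge labelled $e$ removed, and the path-tree of $G \setminus \{u,v\}$ is the subforest of $T^u$ obtained by deleting $u$ together with the subtree of paths beginning with $e$. Thus each ratio on the $G$-side lifts to an analogous ratio on a subtree or subforest of $T^u$, collapsing the left-hand side to a matching-polynomial expression entirely on $T^u$.

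The final and main step is to identify, for each $f \in E$, the contributions of the individual copies $f' \in \chi_u^{-1}(f)$ in the lifted expression. By the definition of the path-tree, elements of $\chi_u^{-1}(f)$ are in bijection with self-avoiding walks in $G$ starting at $u$ and ending with the edge $f$, and on $T^u$ the symmetrised influence is a product distance matrix by \Cref{prop:product-distance}, so the lifted expression naturally factors along each such walk. The main obstacle is the bookkeeping here: one must verify that the decomposition of the lifted expression induced by Godsil's identity is indexed exactly by $\chi_u^{-1}(f)$ rather than by some coarser grouping. The cleanest route is induction on $\mathrm{dist}_G(u,f)$: the length-one case is a direct computation via the Heilmann--Lieb tree recursion, and the inductive step uses that each subtree of $T^u$ rooted at a child of $u$ is itself the path-tree of an appropriate component of $G \setminus u$ rooted at the corresponding neighbour of $u$, reducing the identity to a smaller instance and closing the induction.
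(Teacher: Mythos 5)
The paper does not actually prove this statement; it is imported verbatim from \cite{chen2021optimal} as a black box, so there is no in-paper proof against which to compare. Judged on its own terms, your proposal has a correct high-level strategy (Godsil's path-tree theorem plus induction on $\mathrm{dist}_G(u,f)$) but contains a genuine gap and a factual error in the middle step.

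The factual error is in the claimed correspondence for the conditioning on $e$ being occupied. Writing $e=\{u,v\}$ and letting $v'$ denote the child of the root of $T^u$ along the unique copy of $e$, the correct lift is
\[
\frac{Z(G\setminus\{u,v\})}{Z(G)} \;=\; \frac{Z(T^u\setminus\{u,v'\})}{Z(T^u)},
\]
obtained by chaining Godsil's theorem once at $u$ in $G$ and once at $v$ in $G\setminus u$, using that the subtree of $T^u$ below $v'$ is the path-tree $T^v_{G\setminus u}$. What you describe --- delete $u$ and delete the \emph{entire} subtree of paths beginning with $e$ --- yields the smaller forest $\bigcup_{w\neq v'} T^u_w$, which has a strictly different partition function whenever $v$ has neighbours in $G\setminus u$. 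Your version throws away the contribution of the matchings of $T^u_v\setminus v'$ and so breaks the identity already at the level of conditional marginals.

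The deeper gap is in the inductive step, which is where all the substance of Proposition 6.6 lies. You appeal to the structural fact that the subtree of $T^u$ rooted at a child of $u$ is the path-tree of a component of $G\setminus u$ rooted at the corresponding neighbour; this is true, but it does not by itself reduce $\Psi_G(e,f)$ to a smaller instance. The influence $\Psi_G(e,f)$ does not factor along walks in a general graph (that factorisation is precisely the product-distance-matrix property of \Cref{prop:product-distance}, which holds only on trees), and there is no Markov property that lets you condition out a single edge incident to $u$ and leave a product structure on $G\setminus u$. What is needed is a Weitz-style telescoping decomposition that, for each edge $e'\in E_u$ adjacent to $e$, splits the conditional marginal of $f$ according to the state of $e'$ and reattaches the pieces to the corresponding subtrees of $T^u$; it is in carrying out this decomposition --- and verifying that the resulting index set is exactly $\chi_u^{-1}(f)$ --- that the real work happens. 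As written, your inductive step asserts the conclusion of that decomposition without supplying the argument, so the proof is incomplete even after correcting the structural claim above.
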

\noindent Additionally, they established the total influence decay of the monomer-dimer model.
\begin{lemma}[\text{\cite[Proposition 6.9]{chen2021optimal}}]\label{lem:inf-decay}
  For any distribution $\mu$ of monomer-dimer model on a tree with maximum degree $\Delta$ and fugacity $\lambda>0$, the influence matrix $\Psi$ of distribution $\mu$ satisfies
  \begin{align*}
    \sum_{f:\mathrm{dist}(e,f) = k} \abs{\Psi(e,f)} \le C \tp{1-\delta}^k,
  \end{align*} 
  where $\delta = 1-\sqrt{1-\frac{2}{\sqrt{1+\lambda\Delta}+1}}$ and $C = 2$.
\end{lemma}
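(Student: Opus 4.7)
The plan is to prove this decay bound by contraction analysis of the tree recursion for the monomer-dimer model. First, I would root the tree at an endpoint of $e$ and write down the standard edge-marginal recursion: for each edge $f$, let $R_f$ denote the conditional odds of $f$ being in the matching given that its parent edges are absent. A direct calculation gives $R_f = \lambda / (1 + \sum_{g \in C(f)} R_g)$, where $C(f)$ is the set of child edges sharing the lower endpoint of $f$. By the chain rule applied to this recursion, the influence $\Psi(e, f)$ at any edge $f$ factorizes along the unique path $e = e_0, e_1, \ldots, e_k = f$ as a product of Jacobian entries $\partial R_{e_{i+1}}/\partial R_{e_i}$, so controlling the row-sum at distance $k$ reduces to controlling the sum of products of these local Jacobians over all descendants at depth $k$.

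Next, to pass from per-edge contraction to a uniform geometric decay, I would introduce a potential function $\phi$ and work in the transformed variables $y_f = \phi(R_f)$. Natural candidates are $\phi(R) = \sqrt{R}$ or $\phi(R) = \log(1+R)$; the right choice symmetrizes the Jacobian so that AM-GM can be applied children-wise. The key inequality to prove is that, in the potential variables, the sum of absolute Jacobians from a parent edge to its children is maximized, over all admissible child configurations and degrees at most $\Delta$, on the $\Delta$-regular tree at its unique positive fixed point $R^* = (-1 + \sqrt{1+4\lambda(\Delta-1)})/2$. Evaluating the Jacobian there and distributing the resulting per-vertex contraction factor $1 - 2/(\sqrt{1+\lambda\Delta}+1)$ equally between the two incident edges on the path yields the per-edge rate $1-\delta = \sqrt{1 - 2/(\sqrt{1+\lambda\Delta}+1)}$. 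The constant $C = 2$ is absorbed from the base cases $k=0,1$ and from the fact that $e$ has two endpoints, each contributing a rooted branch to the path enumeration.

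The main obstacle will be selecting the potential $\phi$ so that the AM-GM step is simultaneously tight and yields the sharp $\Delta$-regular fixed point as extremum, rather than some degenerate or unbalanced tree. A secondary subtlety is that individual internal vertices may have degree strictly less than $\Delta$, so one must verify the appropriate monotonicity, namely that the per-vertex contraction factor is increasing in the number of children (at the fixed point), thereby reducing the analysis to the $\Delta$-regular worst case. Once these structural points are settled, the claimed bound follows by induction on $k$, multiplying one factor of $1-\delta$ per edge step along the path and summing the geometric series of descendants via the same AM-GM bound applied level by level.
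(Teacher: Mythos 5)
This lemma is not proved in the paper; it is cited directly from Chen--Liu--Vigoda (their Proposition~6.9), so I am judging your proposal as a from-scratch argument. Your overall plan (exploit the tree recursion $R_f=\lambda/(1+\sum_{g\in C(f)}R_g)$, factorize the influence along paths via the product-distance structure, and obtain the per-edge rate $1-\delta$ as the square root of a per-vertex contraction) is the right shape, but several concrete steps are wrong as stated and would not yield the claimed bound.

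First, identifying the one-step influence with the raw Jacobian $\partial R_{e_{i-1}}/\partial R_{e_i}=-R_{e_{i-1}}^2/\lambda$ is imprecise; the exact identity is $\abs{\Psi(e_{i-1},e_i)}=\Pr{e_i\mid\ol{e_{i-1}}}=R_{e_{i-1}}R_{e_i}/\lambda$, which is the Jacobian only after the $\log$-potential transformation (not $\sqrt{\cdot}$ or $\log(1+\cdot)$). Second, and more seriously, your ``key inequality'' is false: the one-step child sum equals $\sum_{g\in C(e)}\abs{\Psi(e,g)}=1-R_e/\lambda$, which is \emph{not} maximized at the $\Delta$-regular fixed point $R^\ast$; it is maximized when $R_e$ is as small as possible (e.g., a vertex with $\Delta-1$ pendant child edges), where it equals $1-\tfrac{1}{1+(\Delta-1)\lambda}$. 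That quantity exceeds $1-\delta$ once $\lambda\Delta$ is moderate, so no one-step AM-GM in any potential can give the stated per-edge rate. The contraction with the correct rate only appears at the two-step scale: one computes $\sum_{h:\dist(e,h)=2}\abs{\Psi(e,h)}=1-\tfrac{R_e}{\lambda}-\tfrac{R_e}{\lambda^2}\sum_j R_{g_j}^2$, applies Cauchy--Schwarz $\sum_j R_{g_j}^2\ge (\sum_j R_{g_j})^2/(\Delta-1)$, and then optimizes over $R_e$; the worst case is $R_e/\lambda=1/\sqrt{1+\lambda(\Delta-1)}$ with equal children, giving $1-\tfrac{2}{\sqrt{1+\lambda(\Delta-1)}+1}\le 1-\tfrac{2}{\sqrt{1+\lambda\Delta}+1}$. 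This extremal $R_e$ is \emph{not} the tree fixed point $R^\ast=\bigl(-1+\sqrt{1+4\lambda(\Delta-1)}\bigr)/\bigl(2(\Delta-1)\bigr)$ (your displayed $R^\ast$ is also missing the $(\Delta-1)$ in the denominator), and evaluating the two-step sum at $R^\ast$ gives a strictly smaller value, so pinning the extremum to the fixed point would not produce the right constant. Finally, chaining two-step contractions still needs a short argument for odd $k$ and for the two branches through the endpoints of $e$; you mention the factor $C=2$ but do not account for the odd-$k$ loss. In summary: right skeleton, but the stated extremality is incorrect, the one-step contraction does not close, and the potential choice must be the $\log$ one for the Jacobian to match the influence exactly.
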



Now, we are ready to prove \Cref{lem:lambda-max-cor-Q}.

\begin{proof}[Proof of \Cref{lem:lambda-max-cor-Q}]
  According to \Cref{def:path-tree}, \Cref{prop:path-tree-inf}, and \Cref{lem:inf-decay},  
  \Cref{item:struct,item:dist}, \Cref{item:influence} and \Cref{item:DIT} of \Cref{cond:SAW-tree} hold respectively.
  Thus, \Cref{cond:SAW-tree} holds with the value of $\delta, C$ given in \Cref{lem:inf-decay}.
  Therefore, by \Cref{lem:tree-graph}, it holds that
  \begin{align*}
    \lambda_{\max}(Q \; \Cor)
    &\le \frac{4}{1-\sqrt{1-\frac{2}{\sqrt{1+\lambda\Delta}+1}}}\tp{1-\frac{2}{\sqrt{1+\lambda\Delta}+1}}^{g/2} + 1 \\
    &\le 4\tp{\sqrt{1 + \lambda\Delta} + 1} \tp{1-\frac{2}{\sqrt{1+\lambda\Delta}+1}}^{g/2} + 1,
  \end{align*}
  where in the last inequality, we use the fact that $(1 + x)^r \leq 1 + rx$, for $x \geq -1$ and $r \in [0, 1]$.
 When $g \geq \tp{\sqrt{1 + \lambda\Delta} + 1}\log\tp{\sqrt{1 + \lambda\Delta} + 1}$,
 \begin{align*}
  \lambda_{\max}\tp{Q\Cor} &\le 4 \tp{\sqrt{1+\lambda \Delta} + 1} \tp{1-\frac{2}{\sqrt{1+\lambda \Delta}+1}}^{g/2} + 1\\
  &\le 4 \exp \tp{-\frac{g}{\sqrt{1+\lambda \Delta}+1} + \log \tp{\sqrt{1+\lambda \Delta} +1}} + 1 \le 5. \qedhere
 \end{align*}
\end{proof}

In the rest part of this section, we prove~\Cref{lem:lambda-min-Q}. 

\begin{proof}[Proof of~\Cref{lem:lambda-min-Q}]
We start the proof by evaluating entries of $\Cor_u$.
By~\Cref{prop:mat-rel}, for a vertex $u \in V$, $\Cor_u(e, e) = 1$ for every $e \in E_u$.
For distinct edges $e, f \in E_u$, it holds that
\begin{align} \label{eq:matching-sym-v}
  \Cor_u(e, f) &\overset{\text{\eqref{eq:sym-inf}}}{=} \sqrt{\Pr{e}\Pr{\overline{e}}} \cdot \Inf(e, f) \cdot \sqrt{\Pr{f}\Pr{\bar{f}}}^{-1} \overset{(\star)}{=} - \sqrt{R_e R_f},
\end{align}
where $R_e = \Pr{e}/\Pr{\overline{e}}, R_f = \Pr{f} / \Pr{\bar{f}}$ are marginal ratios of $e$ and $f$, and
$(\star)$ holds by $e \overset{u}{\sim} f$ in the monomer-dimer model sense, i.e.,
\begin{align*}
  \Inf(e, f) = - \Pr{f \mid \overline{e}}
  = - \Pr{f, \overline{e}}/\Pr{\overline{e}} 
  = - \Pr{f}/\Pr{\overline{e}}.
\end{align*}
By\eqref{eq:matching-sym-v}, $\Cor_u$ is indeed $\overline{D}_u^{-1} - \sqrt{\*r_u}\sqrt{\*r_u}^\top$, where $\overline{D_u} := \-{diag}\set{\Pr{\overline{e}}}_{e\in E_u}$
and $\sqrt{\*r_u} := (\sqrt{R_e})_{e\in E_u}$.
  
Now, by~\Cref{lem:edge-boundedness}, it is sufficient for us to show that $\lambda_{\max}\tp{\Cor_u} \le \frac{2\lambda+1}{\lambda+1}=:\beta$ for all $u \in V$, which is equivalent to
\begin{align} \label{eq:Cor-target}
  \beta I - \Cor_u = \beta I - \overline{D}_u^{-1} + \sqrt{\*r_u}\sqrt{\*r_u}^\top \succeq 0.
\end{align}

  Without loss of generality, let $E_u = \set{1, 2, \cdots, d}$ with $\mu_{1} \ge \mu_{2} \ge \ldots \ge \mu_{d}$, where $\mu_i = \Pr{i}$.
  Furthermore, denote $\Pr{\ol{i}}$ by $\mu_{\overline{i}}$.
  When $d=1$, it holds that $\beta I - \Cor_u = \beta - 1 =\frac{\lambda}{\lambda+1} \geq 0$.
  Therefore, we may assume that $d \ge 2$ throughout the proof.
  If $\beta - \frac{1}{1-\mu_{1}}\ge 0$, which is equivalent to $\mu_1\le 1-\frac{1}{\beta}$,
  $\beta I - \Cor_u \succcurlyeq \beta I - \ol{D}_u^{-1}$ would be positive semidefinite automatically.
  Therefore, we may further assume that $\mu_1 > 1-\frac{1}{\beta}$.
  We claim that
  \begin{claim}\label{claim:mu-bound}
    \minor{}If $\mu_1 > 1 - \frac{1}{\beta}$, then $\sum_{e=1}^d\mu_e \leq 2\tp{1-\frac{1}{\beta}}$.
  \end{claim}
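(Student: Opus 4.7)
The plan is to unwind $\mu_e$ via the standard matching-polynomial recursion at the vertex $u$. Writing $Z$ for the matching polynomial (weighted by $\lambda^{|M|}$), set
\[
A := Z(G - u), \qquad B_i := \lambda \, Z(G \setminus \{u, v_i\}) \quad \text{for each } e_i = (u,v_i) \in E_u.
\]
Expanding $Z(G)$ according to whether the vertex $u$ is matched gives the identity $Z(G) = A + \sum_i B_i$, and the edge marginal is $\mu_{e_i} = B_i / Z(G)$. These are the only quantities I will need.

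The first key observation is the monotonicity $Z(G \setminus \{u,v_i\}) \le Z(G - u)$, which holds because every matching that avoids both $u$ and $v_i$ is in particular a matching that avoids only $u$. This yields the uniform bound $B_i \le \lambda A$ for every $i$.

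Introducing $X := \sum_i B_i / A$, we have $\sum_{e \in E_u} \mu_e = X/(1+X)$. A short algebraic rearrangement shows that the target inequality $\sum_{e \in E_u} \mu_e \le 2(1 - 1/\beta) = 2\lambda/(2\lambda+1)$ is equivalent to $X \le 2\lambda$. I will prove this by contradiction: suppose $X > 2\lambda$. The hypothesis $\mu_1 > 1 - 1/\beta = \lambda/(2\lambda+1)$ rewrites as
\[
\frac{B_1}{A} \; = \; (1+X)\,\mu_1 \; > \; \frac{\lambda(1+X)}{2\lambda+1} \; > \; \frac{\lambda(1+2\lambda)}{2\lambda+1} \; = \; \lambda,
\]
which directly contradicts the monotonicity bound $B_1 \le \lambda A$. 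Hence $X \le 2\lambda$, and the claim follows.

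There is no real obstacle here: once the parametrization $(A, B_1, \ldots, B_d, X)$ is in place, the proof reduces to combining the elementary vertex-deletion monotonicity of $Z$ with a one-line contradiction. The only things worth double-checking are the algebraic identities $\sum_i \mu_{e_i} = X/(1+X)$ and $1 - 1/\beta = \lambda/(2\lambda+1)$, both of which are immediate from $\beta = (2\lambda+1)/(\lambda+1)$.
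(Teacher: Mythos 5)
Your proof is correct and uses essentially the same key observation as the paper: the bound $B_1 \le \lambda A$, i.e.\ $Z(G\setminus\{u,v_1\}) \le Z(G-u)$, is exactly the fact the paper invokes by re-interpreting $\mu_1/(1-\sum_e\mu_e)$ as a marginal ratio (hence $\le\lambda$) in the graph with $e_2,\dots,e_d$ deleted. You unpack that ratio explicitly in terms of partition functions and finish by contradiction via the auxiliary variable $X$, whereas the paper chains the two inequalities $1-1/\beta<\mu_1\le\lambda(1-\sum_e\mu_e)$ directly, but these are the same algebra.
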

  Assuming the correctness of~\Cref{claim:mu-bound}, it holds that $\mu_2 \leq \sum_{e=1}^d\mu_e-\mu_1<2\tp{1-\frac{1}{\beta}} - \tp{1-\frac{1}{\beta}}=1-\frac{1}{\beta}$.
  Therefore, $\beta I - \overline{D}_u^{-1}$ has exactly one negative eigenvalue.
  By~\Cref{prop:weyl} (Weyl's inequality), it is clear that $\beta I - \Cor_u$ has at most one negative eigenvalue.
  Hence, it is sufficient for us to show that
  \begin{equation*}
    \-{det}\tp{\beta I - \ol{D}_u^{-1} + \sqrt{\*r_u}\sqrt{\*r_u}^\top}\overset{(\star)}{=}\-{det}\tp{\beta I - \ol{D}_u^{-1}} \tp{1 + \sum_{e=1}^d \frac{R_e}{\beta - \mu_{\ol{e}}^{-1}}} \geq 0,
  \end{equation*}
  where $(\star)$ holds by~\Cref{prop:det-lem} (the matrix determinant lemma).
  This is equivalent to
  \begin{align} \label{eq:Q-spectral-target}
    1 + \sum_{e=1}^d \frac{R_e}{\beta - \mu_{\ol{e}}^{-1}}
    &= 1 + \sum_{e=1}^d \frac{\mu_e / (1 - \mu_e)}{(1 - x)^{-1} - (1 - \mu_e)^{-1}} 
    = 1 + (1-x)\sum_{e=1}^d \frac{\mu_e }{x - \mu_e} \leq 0
  \end{align}
  %
  %
  as $\-{det}\tp{\beta I - \ol{D}_u^{-1}} \le 0$, where $x$ is defined as $1-\frac{1}{\beta}$.
  Hence, it suffices to show that
  \begin{align} \label{eq:Q-spectral-target-1}
    \sum_{e=1}^d \frac{\mu_e}{x - \mu_e} \leq \frac{1}{x-1}.
  \end{align}
  We have the following claim.
  \begin{claim} \label{claim:Q-spectral-aux}
    If $b < x < a$, and $a + b \leq 2x$, then
    \[\frac{a}{x - a} + \frac{b}{x - b} \leq \frac{a + b}{x - (a + b)}.\]
  \end{claim}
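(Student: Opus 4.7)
The plan is to clear denominators and reduce the inequality to an elementary nonnegativity. First, I would fix the sign of the joint denominator: from $b < x < a$ we have $x - a < 0$ and $x - b > 0$, and since $a > x$ together with $b \ge 0$ (which holds in the intended application, where $a$ and $b$ are marginal probabilities) forces $a + b > x$, we also have $x - (a + b) < 0$. Consequently the product $(x - a)(x - b)(x - (a + b))$ is positive, so multiplying the target inequality by this common denominator preserves its direction.

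Next, I would combine the two fractions on the left and cross-multiply. Writing
\[
\frac{a}{x - a} + \frac{b}{x - b} \;=\; \frac{(a + b)x - 2ab}{(x - a)(x - b)},
\]
the claim becomes
\[
\bigl[(a + b)x - 2ab\bigr]\bigl(x - (a + b)\bigr) \;\le\; (a + b)(x - a)(x - b).
\]
Expanding each side, the $(a + b)x^2$ terms and the $(a + b)^2 x$ terms cancel, and a short calculation yields
\[
\mathrm{RHS} - \mathrm{LHS} \;=\; ab\bigl(2x - (a + b)\bigr).
\]

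To conclude, I would invoke the two hypotheses: $ab \ge 0$ (since $a > x > 0$ and $b \ge 0$ in the application) and $2x - (a + b) \ge 0$ by assumption. Hence the difference is nonnegative and the inequality follows. The only care point is the sign bookkeeping when clearing the negative factors $x - a$ and $x - (a + b)$ from the denominator; once that is handled correctly, the collapse of $\mathrm{RHS} - \mathrm{LHS}$ to the clean shape $ab\bigl(2x - (a + b)\bigr)$ is a pleasant confirmation that the hypothesis $a + b \le 2x$ is used exactly where it is needed, and that the bound is sharp when either $a = 0$, $b = 0$, or $a + b = 2x$. I do not anticipate any substantive obstacle in this step.
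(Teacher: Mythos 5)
Your proof is correct and takes essentially the same route as the paper: both clear denominators and observe that the difference collapses to $ab(a+b-2x)$ over a denominator of known sign. You are slightly more careful than the paper in flagging that one also needs $b \ge 0$ (or at least $ab \ge 0$ and $a+b>x$) for the sign bookkeeping to go through, a hypothesis not written in the claim but guaranteed in the application since $a,b$ are sums of marginals.
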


  By~\Cref{claim:mu-bound} and our assumption that $x < \mu_1$, we have $\mu_j <x <\sum_{e=1}^{j-1} \mu_e$ for any $j\ge 2$.
  Therefore, by using~\Cref{claim:Q-spectral-aux}, it holds that $\sum_{e =1}^d \frac{\mu_e}{x-\mu_e} \le \frac{\sum_{e=1}^d \mu_e}{x-\sum_{e=1}^d\mu_e} \le \frac{1}{x-1}$, where the last inequality follows from the fact that $x < \sum_{e=1}^d \mu_e \le 1$.
  This concludes the proof.
\end{proof}


\begin{proof}[Proof of \Cref{claim:mu-bound}]
  Note that $\frac{\mu_1}{1-\sum_{e=1}^d \mu_e}$ is the marginal ratio of edge $e_1$ being chosen in an instance of the monomer-dimer model specified by graph $G=(V,E \setminus \{2,3,\ldots,d\})$ and fugacity $\lambda$.
  This indicates that $\frac{\mu_1}{1-\sum_{e=1}^d \mu_e} \leq \lambda$.
  Hence, 
  \begin{align*}
    1 - \frac{1}{\beta} < \mu_1 \le \lambda \tp{1-\sum_{e = 1}^d \mu_e},
  \end{align*}
  where the first inequality follows from our assumption on $\mu_1$. 
  Therefore, 
  \begin{align*}\sum_{e=1}^d\mu_e \leq \frac{2\lambda}{2\lambda + 1}=2\tp{1-\frac{1}{\beta}},
  \end{align*}
  where the last equation follows from the definition of $\beta$.
\end{proof}

\begin{proof}[Proof of \Cref{claim:Q-spectral-aux}]
  By a direct calculation, we have
  \begin{align*}
    \frac{a}{x-a} + \frac{b}{x-b} - \frac{a + b}{x - (a + b)} = \frac{a b (a+b-2 x)}{(a-x) (x-b) (a+b-x)} \leq 0,
  \end{align*}
  where the last inequality holds by $a + b \leq 2x$ and $b < x < a$.
\end{proof}


\subsection{Lower bound of spectral independence on graphs with parallel edges}
\label{sec:matching-SI-lb}
While we proved a constant spectral independence on graph with large girth, it is important to note that the presence of small cycles may lead to a significantly different result.
In this section, we will give a proof for \Cref{thm:matching-SI-lb}.
Specifically, we will prove that the maximum eigenvalue of influence matrix can depend on the maximum degree $\Delta$ if graph contains parallel edges, which can be seen as the case where girth is equal to $2$.


\begin{remark}
  We remark that our construction is replacing  each edge of a cycle with $\Delta/2$ parallel edges, which is exactly the $k$-transformation (with $k = \Delta/2$) on cycle defined in~\Cref{def:k-trans}.
  When the cycle is sufficiently large, the spectrum of its influence matrix is close to the influence matrix of the infinite path/cycle, whose maximum eigenvalue is $\Theta(\sqrt{\lambda})$.
  Then, our lower bound on the maximum eigenvalue can be derived from the fact that the maximum eigenvalue of the correlation matrix is preserved under the $k$-transformation (see \Cref{prop:k-trans-cor} for details).
\end{remark}

\begin{lemma} \label{lem:SI-long-cycle}
  Let $\lambda > 0$, there is a sufficiently large $n$ such that, let $C_n = ([n], E)$ be a cycle of length $n$ and $\mu$ be the Gibbs distribution of the monomer-dimer model on $C_n$ with fugacity $\lambda$.
  It holds that $\lambda_{\max}(\Inf_\mu) \geq \frac{\sqrt{\lambda}}{3}$, where $\Inf_\mu$ is the influence matrix of $\mu$.
\end{lemma}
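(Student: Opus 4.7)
The plan is to exploit the rotational symmetry of the cycle. By symmetry, the influence matrix $\Psi_\mu$ on $C_n$ is circulant, with $\Psi_\mu(e_0, e_d)$ depending only on the cyclic distance $d$, so its eigenvalues are the discrete Fourier coefficients $\sum_d \Psi_\mu(e_0, e_d) e^{2\pi i dk/n}$ for $k = 0, 1, \ldots, n-1$. Taking $n$ even and using the test vector $v \in \mathbb{R}^E$ with $v_{e_i} = (-1)^i$ (the character at frequency $\pi$) gives the Rayleigh quotient
\begin{equation*}
  \frac{v^\top \Psi_\mu v}{v^\top v} = \sum_{d=0}^{n-1} (-1)^d \Psi_\mu(e_0, e_d),
\end{equation*}
which is a lower bound on $\lambda_{\max}(\Psi_\mu)$. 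The task reduces to showing this alternating sum is at least $\sqrt{\lambda}/3$ for $n$ sufficiently large.

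Next, I would compute the influences on the infinite-path limit $\mathbb{Z}$. Conditioning on whether $e_0$ lies in the random matching decouples $\mathbb{Z}$ into two conditionally independent semi-infinite arms, so for $d \ge 1$, $\Pr{e_d \mid e_0}$ and $\Pr{e_d \mid \ol{e_0}}$ equal the $(d-1)$-th and $d$-th edge marginals in a semi-infinite monomer-dimer path. These marginals follow from diagonalizing the recursion $Z_n = Z_{n-1} + \lambda Z_{n-2}$: the $k$-th edge marginal equals $p^*\bigl(1 - (-\gamma)^k\bigr)$, where $r^* = (1+\sqrt{1+4\lambda})/2$ is the positive root, $\gamma := (r^*-1)/r^* \in (0,1)$, and $p^* := (r^*-1)/(2r^*-1)$. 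A short simplification using the identity $p^*(1+\gamma) = \gamma$ yields the clean closed form $\Psi_\infty(e_0, e_d) = (-\gamma)^{|d|}$ for all $d \in \mathbb{Z}$. Summing the geometric series then gives
\begin{equation*}
  \sum_{d \in \mathbb{Z}} (-1)^d \Psi_\infty(e_0, e_d) = \sum_{d \in \mathbb{Z}} \gamma^{|d|} = \frac{1+\gamma}{1-\gamma} = 2r^* - 1 = \sqrt{1+4\lambda} \ge 2\sqrt{\lambda}.
\end{equation*}

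The final step transfers this bound from $\mathbb{Z}$ to the finite cycle. For any fixed $d$, $\Psi_\mu(e_0, e_d)$ on $C_n$ converges to $\Psi_\infty(e_0, e_d)$ as $n \to \infty$ by local equivalence of the cycle and the line once $n/2$ exceeds the correlation length, and contributions from $d$ close to $n/2$ are geometrically small by the exponential correlation decay of \Cref{lem:inf-decay} (applied with $\Delta = 2$). Choosing $n$ large and even therefore preserves the infinite-path lower bound up to arbitrarily small error, giving $\lambda_{\max}(\Psi_\mu) \ge 2\sqrt{\lambda} - o(1) \ge \sqrt{\lambda}/3$. The main technical point is quantifying both the convergence $\Psi_\mu \to \Psi_\infty$ for fixed $d$ and the antipodal tail of the Fourier sum; both are controlled by the geometric decay rate $\gamma < 1$ with substantial slack.
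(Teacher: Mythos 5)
Your overall strategy is sound and actually delivers a slightly stronger constant than the paper ($\sqrt{1+4\lambda}\ge 2\sqrt{\lambda}$ rather than $\tfrac{\sqrt{1+4\lambda}+1}{6}$), but the route differs from the paper's in a structurally significant way, and there is one step that needs shoring up.

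The paper does \emph{not} work with the full row sum over $C_n$. Instead (Lemma~\ref{lem:cycle-path-marignal} and the proof of Lemma~\ref{lem:SI-long-cycle}) it first shows the entrywise limit $\Psi_{C_n}(e_1,e_\ell)\to(-R)^{\ell-1}$ for a \emph{fixed} $\ell$, then extracts an $\ell\times\ell$ principal minor $\Psi_{n,\ell}$ with $\ell=\ell(\lambda)$ constant, applies the interlacing theorem (\Cref{prop:interlace}) to reduce to this minor, and uses continuity of eigenvalues of fixed-size matrices to pass to the limit. Because the minor has constant size, no tail estimate is ever needed. You instead exploit the circulant structure of $\Psi_{C_n}$ and evaluate the exact eigenvalue at frequency $\pi$, which requires controlling the whole alternating row sum $\sum_{d=0}^{n-1}(-1)^d\Psi_{C_n}(e_0,e_d)$, including the antipodal entries $d\approx n/2$. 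That is a genuinely different mechanism: you pay for a cleaner spectral identity with a harder-to-justify convergence statement. Your semi-infinite-path computation of $\Psi_\infty(e_0,e_d)=(-\gamma)^{|d|}$ is also a different (and perfectly valid) derivation of the same closed form; note $\gamma=R=1-\tfrac{2}{\sqrt{1+4\lambda}+1}$.

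The gap is in how you justify both the pointwise convergence $\Psi_{C_n}(e_0,e_d)\to\Psi_\infty(e_0,e_d)$ and the smallness of the antipodal tail. You invoke \Cref{lem:inf-decay} ``with $\Delta=2$'', but that lemma is stated only for \emph{trees}, and $C_n$ is not a tree. To make your argument rigorous you would need to first apply the path-tree identity (\Cref{prop:path-tree-inf}) to write $\Psi_{C_n}(e_0,e_d)$ as a sum of two path-influences (one going each way around the cycle), and only then apply \Cref{lem:inf-decay} to each summand. This is exactly the reduction carried out in the paper's Lemma~\ref{lem:cycle-path-marignal}, so nothing is fundamentally broken, but as written your proof leans on a lemma whose hypotheses are not satisfied by the cycle. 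If you insert the path-tree step, your circulant/Fourier argument goes through and is arguably cleaner than the interlacing route.
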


\begin{proof}[Proof of \Cref{thm:matching-SI-lb}]
  Without loss of generality, we assume that $\Delta$ is even. 
  By \Cref{lem:SI-long-cycle}, we note that there is a sufficiently large $n$ such that $\lambda_{\max}(\Inf_\nu) \geq \frac{\sqrt{\Delta/2}}{3} \ge \frac{\sqrt{\Delta}}{5}$, 
  where $\nu$ is the Gibbs distribution for the monomer-dimer model on $C_n = ([n], E)$ with fugacity $\Delta/2$.

  Let $\mu:=\nu_{\Delta/2}$ be the distribution after doing $(\Delta/2)$-transformation on $\nu$.
  Therefore, $\mu$ is the Gibbs distribution for the monomer-dimer model on the graph that replaces each edge in $C_n$ with $\Delta/2$ parallel edges and has fugacity $1$, as desired.
  By \Cref{cor:inf-cor-lambda-max} and \Cref{prop:k-trans-cor}, it holds that
  \begin{align*}
    \lambda_{\max}(\Inf_{\mu}) \geq \lambda_{\max}(\cor_{\mu})
    =\lambda_{\max}(\cor_{\nu}) \geq \lambda_{\max}(\Inf_\nu) \cdot \min_e \nu_{\overline{e}}
    \overset{(\star)}{\geq} \lambda_{\max}(\Inf_\nu)/2 \geq \frac{\sqrt{\Delta}}{10}
  \end{align*}
  where $(\star)$ holds by the fact that $C_n$ is a cycle.
\end{proof}

Now, it only remains to prove \Cref{lem:SI-long-cycle}.
We first formalize the intuition that the entries in the influence matrix of $C_n$ converges to the corresponding entries in the influence matrix of infinite path/cycle in the following lemma.
Specifically, we study the influence between a pair of edges with distance $\ell - 1$ in $C_n$ when $n$ approaches infinity.

\begin{lemma}\label{lem:cycle-path-marignal}
  Let $\ell \in \^N$ be a constant, 
  $C_n=([n],E)$ be a cycle of length $n$ and $\lambda > 0$,
  and $\mu$ be the Gibbs distribution of monomer-dimer model with graph $C_n$ and parameter $\lambda$.
  It holds that 
  \begin{align*}
    \lim_{n\to \infty} \Inf_\mu(e_1, e_\ell) = (-R)^{\ell-1},
  \end{align*}
  where $R:=1 - \frac{2}{\sqrt{1 + 4\lambda} + 1}$, and we use $e_i$ to denote the edge $(i, i+1)$.
\end{lemma}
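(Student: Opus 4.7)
The plan is to compute $\Pr[\mu]{e_\ell\mid e_1}$ and $\Pr[\mu]{e_\ell\mid\overline{e_1}}$ explicitly on $C_n$ by conditioning on the status of these two edges, and then take the limit $n\to\infty$ by exploiting the dominant eigenvalue of the linear recurrence that governs the monomer-dimer partition function on a path. Let $Z(k)$ denote the partition function of the monomer-dimer model on a path with $k$ vertices, so $Z(0)=Z(1)=1$ and $Z(k)=Z(k-1)+\lambda Z(k-2)$. The characteristic polynomial $x^2=x+\lambda$ has roots $\alpha=\tfrac{1+\sqrt{1+4\lambda}}{2}$ and $\beta=\tfrac{1-\sqrt{1+4\lambda}}{2}$ with $|\beta|<\alpha$, giving the closed form $Z(k)=(\alpha^{k+1}-\beta^{k+1})/(\alpha-\beta)$. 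I will repeatedly use the identities $\alpha\beta=-\lambda$, $\alpha+\beta=1$, $\alpha-\beta=2\alpha-1=\sqrt{1+4\lambda}$, and $\lambda=\alpha(\alpha-1)$. Conditioning on whether the wrap-around edge of $C_n$ belongs to the matching yields the cycle partition function $Z^C_n=Z(n)+\lambda Z(n-2)$.

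For $3\le\ell\le n-2$, removing the endpoints of $e_1$ leaves a single path on $n-2$ vertices, and additionally removing the endpoints of $e_\ell$ splits this into two sub-paths on $\ell-3$ and $n-\ell-1$ vertices. Combined with cyclic translation-invariance $\Pr[\mu]{e_\ell}=\Pr[\mu]{e_1}=\lambda Z(n-2)/Z^C_n$, this gives
\[\Pr[\mu]{e_\ell\mid e_1}=\frac{\lambda Z(\ell-3)Z(n-\ell-1)}{Z(n-2)},\qquad \Pr[\mu]{e_\ell\mid\overline{e_1}}=\frac{\lambda Z(n-2)-\lambda^2 Z(\ell-3)Z(n-\ell-1)}{Z(n)}.\]
The adjacent case $\ell=2$ is handled separately using $\Pr[\mu]{e_1,e_2}=0$ and produces the same limit $-R$; the trivial case $\ell=1$ gives $1=(-R)^0$. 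Since $|\beta|<\alpha$, for every fixed $m$ we have $Z(n-m)/Z(n)\to\alpha^{-m}$, so after substituting $1+\lambda/\alpha^2=(2\alpha-1)/\alpha$ I get
\[\Inf_\mu(e_1,e_\ell)\ \longrightarrow\ \frac{\lambda(2\alpha-1)Z(\ell-3)}{\alpha^\ell}-\frac{\lambda}{\alpha^2}.\]

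The last step is purely algebraic. Writing $Z(\ell-3)=(\alpha^{\ell-2}-\beta^{\ell-2})/(\alpha-\beta)$ and using $2\alpha-1=\alpha-\beta$ makes the $\alpha^{\ell-2}$-contribution cancel the $-\lambda/\alpha^2$ term, leaving $-\lambda\beta^{\ell-2}/\alpha^\ell$. Substituting $\lambda=\alpha(\alpha-1)$ and $\beta=1-\alpha$ collapses this to $((1-\alpha)/\alpha)^{\ell-1}=(-R)^{\ell-1}$, since $1-1/\alpha=R$ and $(1-\alpha)/\alpha=-R$. I expect the main difficulty to be bookkeeping rather than anything conceptual: arranging the conditioning so that the two sub-paths and the cycle-versus-path partition functions appear in the correct places, and tracking parity and constants in the final collapse from $\beta^{\ell-2}$ to $(-R)^{\ell-1}$.
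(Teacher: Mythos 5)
Your proof is correct, and it takes a genuinely different route from the paper's. The paper uses the path-tree machinery already developed for the rest of the argument: it invokes \Cref{prop:path-tree-inf} to write $\Inf_{C_n}(e_1,e_\ell)=\Inf_{P_{n-1}}(e_1,e_\ell)+\Inf_{P_n}(e_1,e_{n-\ell+2})$, kills the second (long-range) term via total influence decay (\Cref{lem:inf-decay}), factors the remaining path influence into a product of nearest-neighbor influences using the product-distance-matrix property (\Cref{prop:product-distance}), and finally computes the nearest-neighbor influence as a fixed point of the ratio recursion $Z(P_{n-1})/Z(P_n)$. You instead carry out a direct transfer-matrix computation on the cycle itself: condition on $\{e_1,\overline{e_1}\}\times\{e_\ell,\overline{e_\ell}\}$, express everything through the path partition function $Z(k)$, pass to the limit via the Binet closed form $Z(k)=(\alpha^{k+1}-\beta^{k+1})/(\alpha-\beta)$ with $\alpha=\tfrac{1+\sqrt{1+4\lambda}}{2}$, and collapse the resulting expression to $(\beta/\alpha)^{\ell-1}=(-R)^{\ell-1}$. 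I checked the partition-function identities (including the subtraction giving $\Pr[\mu]{e_\ell,\overline{e_1}}$), the limit $Z(n-m)/Z(n)\to\alpha^{-m}$, the simplification $\alpha^2+\lambda=\alpha(2\alpha-1)=\alpha(\alpha-\beta)$, the cancellation leaving $-\lambda\beta^{\ell-2}/\alpha^\ell$, the identity $\lambda=-\alpha\beta$, and the final identification $\beta/\alpha=-R$; all are right, and your separate treatment of $\ell=2$ (giving $0-\lambda/\alpha^2=-R$) and the trivial $\ell=1$ case close the argument. Your route is more self-contained and elementary, avoiding the external citations to \cite{chen2021optimal} and the product-distance structure, at the cost of more explicit bookkeeping; the paper's route is more conceptual and reuses precisely the tools emphasized elsewhere in the paper, which is why they chose it, and it makes the $\ell$-dependence appear transparently as a geometric factorization rather than as an algebraic collapse.
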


Now, we are ready to prove \Cref{lem:SI-long-cycle}.

\begin{proof}[Proof of \Cref{lem:SI-long-cycle}]
  Let $\ell = \ell(\lambda)$ be a constant determined later and let $\Inf_{n,\ell}$   be the principal minor of $\Inf_\mu$ on edge set ${\{e_i\}}_{i\in[\ell]}$.
  By~\Cref{prop:interlace} (the interlacing theorem), it suffices to show that $\lambda_{\max}(\Inf_{n,\ell}) \geq \frac{\sqrt{\lambda}}{3}$ for sufficiently large $n$.


  By~\Cref{lem:cycle-path-marignal} and the definition of the influence matrix,
  \begin{align}\label{eq:limit-property}
    \forall 1 \le i,j \le \ell, \quad \lim_{n \to +\infty} \Inf_{n,\ell}(e_i,e_j) = (-R)^{\abs{i-j}} =: \Inf_\ell(i, j), 
  \end{align}
  where we denote the limit as $\Inf_\ell$.
  Let $\*x = ((-1)^i)_{i\in[\ell]}$ be a test vector.
  Note that when $\ell = \ell(\lambda)$ is chosen such that $R^{\ell-1} \leq 1/2$, it holds that 
  \begin{align*} 
    \frac{\*x^\top\Inf_{\ell} \*x}{\*x^\top \*x}
    &= \frac{1}{\ell} \sum_{i=1}^\ell \sum_{j=1}^\ell \abs{\Inf_\ell(i, j)} 
    \geq \min_i \sum_{j=1}^\ell \abs{\Inf_{\ell}(i,j)}
    \overset{~\eqref{eq:limit-property}}{\geq} \sum_{j=0}^{\ell-1} R^j = \frac{1-R^{\ell-1}}{1-R} \geq \frac{1}{2(1-R)}. 
  \end{align*}
  Therefore, according to \eqref{eq:limit-property}, for sufficiently large $n$,
  \begin{align*}
    \lambda_{\max}(\Inf_{n,\ell}) &\overset{(\star)}{\geq} \frac{2}{3} \lambda_{\max}(\Inf_\ell) \geq \frac{1}{3(1-R)} \geq \frac{\sqrt{\lambda}}{3}, 
  \end{align*}
  where $(\star)$ follows from that both $\Psi_{n,\ell}$ and $\Psi_\ell$ are square matrices of a constant size $\ell$.
\end{proof}

\begin{remark}
  As a remark, a similar phenomenon for the (symmetrized) influence
  matrix of the infinite 2-regular tree $\Inf$ has been observed by some previous work~\cite{chen2021optimal, liu2023spectral}.
  Specifically, they prove that $\Inf(e, f) = (-R)^{\-{dist}(e, f)}$, where $R = 1 - \frac{2}{\sqrt{1 + 4\lambda} + 1}$ is the marginal ratio of each edge in the infinite 2-regular tree.
  Then it holds that $\lambda_{\max}(\Inf) = \norm{\Inf}_\infty = 2\sum_{i=0}^\infty R^i - 1 = 2/(1-R) - 1 = \Theta(\sqrt{\lambda})$.
\end{remark}

\begin{proof}[Proof of \Cref{lem:cycle-path-marignal}]
  Let $C_n, P_n$ be the cycle and path of length $n$, respectively.
  We note that $C_n$ have $n$ vertices $1, \cdots, n$ and $P_n$ have $n+1$ vertices $1,\cdots,n+1$.
  Without loss of generality, we assume $\ell \leq n/2$.
  According to the construction of the path-tree starting at vertex $1$, and the fact that path-tree preserves influence (\Cref{def:path-tree} and \Cref{prop:path-tree-inf}), we have
  \begin{align*}
    \Inf_{C_n}(e_1, e_\ell) = \Inf_{P_{n-1}}(e_1, e_\ell) + \Inf_{P_n}(e_1, e_{n-\ell+2}),
  \end{align*}
  where we use $\Inf_G$ to denote the influence matrix on $G$ and $e_i$ to denote the edge $(i, i+1)$.
  Hence, it holds that
  \begin{align*}
    \lim_{n\to\infty} \Inf_{C_n}(e_1, e_\ell)
    &= \lim_{n\to\infty} \Inf_{P_{n-1}}(e_1, e_\ell) + \lim_{n\to\infty} \Inf_{P_n}(e_1, e_{n-\ell+2})\\
    (\text{\Cref{lem:inf-decay}}) &= \lim_{n\to\infty} \Inf_{P_{n-1}}(e_1, e_\ell) 
    \overset{(\star)}{=} \lim_{n\to\infty} \prod_{i=1}^{\ell-1} \Inf_{P_{n-1}}(e_i, e_{i+1}) \\
    &= \lim_{n\to\infty} \prod_{i=1}^{\ell-1} \Inf_{P_{n-i}}(e_1, e_2)
    = (\lim_{n\to\infty} \Inf_{P_n}(e_1, e_2))^{\ell-1}.
  \end{align*}
  We note that the monomer-dimer model satisfies \Cref{def:conf-ind} so that $(\star)$ holds by~\Cref{prop:product-distance}.
  So, to finish the proof, it suffices for us to show that $\lim_{n\to\infty} \Inf_{P_n}(e_1, e_2) = - R$.
  For $n \geq 3$, we use $Z(P_n)$ to denote the partition function of the monomer-dimer model specified by the path $P_n$ and fugacity $\lambda$.
  Then,
  \begin{align} \label{eq:marginal-cycle}
    -\Inf_{P_n}(e_1, e_2) = \Pr{e_2\mid \ol{e_1}} = \frac{\lambda Z(P_{n-3})}{Z(P_{n-1})} = \frac{\lambda Z(P_{n-3})}{\lambda Z(P_{n-3}) + Z(P_{n-2})} = \frac{\lambda}{\lambda + \frac{Z(P_{n-2})}{Z(P_{n-3})}}.
  \end{align}
  By recursion $Z(P_n) = Z(P_{n-1}) + \lambda Z(P_{n-2})$, it holds that
  \begin{align*}
    \frac{Z(P_{n-1})}{Z(P_n)} = \frac{Z(P_{n-1})}{Z(P_{n-1}) + \lambda Z(P_{n-2})} = \frac{1}{1 + \lambda \frac{Z(P_{n-2})}{Z(P_{n-1})}}.
  \end{align*}
  Since the function $f(x) = \frac{1}{1 + \lambda x}$ has the property $\abs{(\log \circ f \circ \exp)'(x)} = \abs{\frac{\e^x \lambda}{1 + \e^x \lambda}} < 1$ for every $x \in \^R$, it holds that $\lim_{n\to \infty} \frac{Z(P_{n-1})}{Z(P_n)} = \frac{2}{\sqrt{1 + 4\lambda} + 1}$, which is the unique positive solution of the equation $x = f(x)$.
  We finish the proof by plugging this into \eqref{eq:marginal-cycle}.
\end{proof}

\section{Spectral independence in the hardcore model}
\label{sec:hardcore-SI}

In this section, we prove the spectral independence result in~\Cref{thm:hardcore-tree}. The proof of the optimal spectral gap and mixing time are deferred to~\Cref{sec:miss-proof}.
For convenience, we restate the statement of~\Cref{thm:hardcore-tree}.

\begin{theorem}[Spectral independence for the hardcore model on trees]\label{thm:hardcore-tree-SI}
  Let $T=(V,E)$ be a tree of $n$ vertices, and $0 < \lambda <(1-\delta) \e^2$ for some $\delta \in \tp{ 0, 1/10 }$.
  The Gibbs distribution $\mu$ of the hardcore model on $T$ with fugacity $\lambda$ has the  spectral independence $\lambda_{\max}(\Inf_\mu) \leq \frac{36}{\delta^2}$.
\end{theorem}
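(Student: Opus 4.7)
Since $T$ is a tree, \Cref{prop:Cor-Q-I} gives $Q\,\Cor = I$, so \Cref{cond:approximate} holds with $\beta = 1$, and by \Cref{thm:Psi-lambda-max} it then suffices to establish \Cref{cond:boundedness} with $\alpha \ge \delta^2/36$. My plan is to apply \Cref{lem:vertex-boundedness} with $\epsilon := \delta/3$, reducing the task to proving
\begin{align*}
\sum_{v\in C(u)} \beta_v^2 \;\leq\; 1 - \tfrac{\delta}{3}
\end{align*}
for every non-root vertex $u$, together with the much weaker root condition $\sum_{v\in C(r)} \beta_v^2 \leq 3/(2\delta)$.

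The first step is to compute $\beta_v^2$ explicitly. For any edge $(u,v)$ in $T$, the hardcore constraint forces $\Pr{u,v}=0$, so \Cref{def:sym-inf-mat} gives $\Cor(u,v) = -\sqrt{R_u R_v}$ with $R_w := \Pr{w}/\Pr{\overline{w}}$, whence $\beta_v^2 = R_u R_v$. I would then parameterize via the standard tree recursion: let $x_v$ denote the marginal ratio of $v$ in the subtree $T_v$ (satisfying $x_v = \lambda/\prod_{c\in C(v)}(1+x_c)$), let $y_v$ denote the upward marginal ratio of $p_v$ in $T\setminus T_v$, and set $p_v := x_v/(1+x_v)$. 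Standard identities give $R_v = x_v/(1+y_v)$ and $R_u = y_v/(1+x_v)$, so substitution yields
\begin{align*}
\sum_{v\in C(u)}\beta_v^2 \;=\; R_u \sum_{v\in C(u)}\frac{p_v}{1+R_u-p_v},
\qquad
R_u \;=\; \frac{\lambda\prod_{v\in C(u)}(1-p_v)}{1+y_u}.
\end{align*}
Differentiating shows the right-hand side is monotonically increasing in $R_u$; since $y_u\geq 0$, the extremal case is $y_u = 0$, giving $R_u = A := \lambda\prod_v(1-p_v)$ and reducing the problem to bounding
\[F(p_1,\ldots,p_k) \;:=\; A\sum_{v=1}^{k} \frac{p_v}{1+A-p_v}\] uniformly in $p_v\in[0,1)$ and $k\in\mathbb{N}$.

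The central claim I need is $F \leq \Phi(\lambda) := c^*-1$, where $c^*$ is defined by $\lambda = (c^*-1)e^{c^*}$. The value $\Phi(\lambda)$ emerges from the uniform limit $p_v = c/k$ with $k\to\infty$: there $A \to \lambda e^{-c}$ and $F \to g(c) := c\lambda/(e^c+\lambda)$, whose maximum over $c$ is attained at the critical point $\lambda = (c^*-1)e^{c^*}$ with value $g(c^*) = c^*-1$. The hard part will be to rigorously rule out non-uniform configurations exceeding $\Phi(\lambda)$; my plan is to exploit the inequality $A \leq \lambda e^{-T}$ (with $T := \sum p_v$, from concavity of $\log(1-p)$) combined with a Lagrangian/convexity analysis, showing that for fixed $T$ the spread-out (large-$k$) limit is extremal, whence the unconstrained supremum reduces to $\max_c g(c) = \Phi(\lambda)$.

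To conclude, I set $\lambda = (1-\delta)e^2$ and write $c^* = 2-\eta$, so the defining equation becomes $(1-\eta)e^{-\eta} = 1-\delta$. Since $e^{-\delta/2} \geq 1-\delta/2$, we have $(1-\delta/2)e^{-\delta/2} \geq (1-\delta/2)^2 = 1-\delta + \delta^2/4 > 1-\delta$, and since $(1-\eta)e^{-\eta}$ is strictly decreasing on $[0,2]$ we deduce $\eta > \delta/2$; therefore $\Phi(\lambda) = 1-\eta < 1-\delta/2 \leq 1-\delta/3$. This verifies the non-root condition of \Cref{lem:vertex-boundedness}, while the root condition follows immediately from $\Phi(\lambda) \leq 1 \leq 3/(2\delta) = 1/(2\epsilon)$ for $\delta \leq 1/10$. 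The lemma then gives $\alpha \geq \epsilon^2/4 = \delta^2/36$, and \Cref{thm:Psi-lambda-max} yields $\lambda_{\max}(\Inf_\mu) \leq \beta/\alpha = 36/\delta^2$.
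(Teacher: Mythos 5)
Your high-level structure is correct and matches the paper: invoke \Cref{thm:Psi-lambda-max} with $\beta=1$ (tree case), reduce to \Cref{lem:vertex-boundedness} with $\epsilon = \delta/3$, compute $\beta_v^2 = R_u R_v$, parameterize via the tree recursion, and argue that $y_u\to 0$ is extremal by monotonicity. Your resulting target quantity $F(p_1,\dots,p_k) = A\sum_v p_v/(1+A-p_v)$ with $A=\lambda\prod_v(1-p_v)$ is, after a change of variables, exactly the quantity in the paper's \eqref{eq:marginal-bound-hardcore}. The final numerics ($\eta>\delta/2$, hence $\Phi(\lambda)<1-\delta/2\le 1-\delta/3$, then $\alpha\ge\delta^2/36$ and $\lambda_{\max}\le 36/\delta^2$) are also correct.

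However, the central claim of your optimization step --- that $F\le\Phi(\lambda)=c^*-1$ uniformly, where $\lambda=(c^*-1)\e^{c^*}$ --- is \emph{false}, and the ``spread-out is extremal'' intuition driving your Lagrangian sketch does not hold. Take $\lambda=1$ and $k=1$: then $A=\lambda(1-p)$ and
\begin{align*}
F \;=\; \frac{\lambda(1-p)\,p}{1+\lambda(1-p)-p} \;=\; \frac{\lambda(1-p)p}{(1-p)(1+\lambda)} \;=\; \frac{\lambda p}{1+\lambda},
\end{align*}
which tends to $\lambda/(1+\lambda)=1/2$ as $p\to 1$, while $\Phi(1)\approx 0.28$ (solving $(c-1)\e^c=1$). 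So a concentrated configuration beats the uniform-$k\to\infty$ limit for small and moderate $\lambda$; the spread-out limit only dominates when $\lambda$ is close to $\e^2$. Because of this, the proposed reduction ``for fixed $T=\sum p_v$ the spread-out limit is extremal'' cannot be pushed through: the crossover between concentrated and spread-out optima depends nontrivially on both $\lambda$ and the total mass. The paper avoids this pitfall by a two-step argument: \Cref{lem:max-hardcore} shows, via a pairwise exchange (whose direction flips according to the sign of $\lambda^2 P^2\e^A-1$), that for fixed $P=\prod(1-a_i)$ the optimizer is ``uniform on a subset, zero elsewhere,'' reducing to the single-parameter family $F(d,x)$; it then treats $d=1$ separately (bounding $F\le\lambda/(1+\lambda)<9/10\le 1-\zeta$) and analyzes the critical point $\hat x(d)$ for $d\ge 2$, showing the relevant threshold $G(0,d)$ decreases to $\e^2$. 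You would need to replicate both pieces: the exchange reduction and the separate treatment of the small-$d$ regime, since the unified bound $F\le\Phi(\lambda)$ you rely on is simply not true.
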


It directly follows from~\Cref{thm:Psi-lambda-max},~\Cref{lem:vertex-boundedness} and the following lemma.

\begin{lemma}\label{lem:tech-hardcore}
  Let $\mu$ be the Gibbs distribution of the hardcore model specified by a tree $T=(V,E)$ rooted at $r \in V$ of $n$ vertices, and fugacity $\lambda > 0$ with $\lambda < (1-\delta) \e^2$ for some $\delta \in (0,1/10)$. 
  For any vertex $u \in V \setminus \{r\}$, let $\beta_u:=\Cor_\mu(u,p_u)$, where $p_u$ is the parent of vertex $u$. It holds that
  \begin{align*}
    \forall u \in V,\quad \sum_{v \in C(u)} \beta_v^2 \le 1 - \frac{\delta}{3},
  \end{align*}
  where $C(u)$ is the set of children of vertex $u$.
\end{lemma}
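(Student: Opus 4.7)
The plan is to express each $\beta_v^2$ in terms of the hardcore marginal ratios, use the message-passing tree recursion to condense $\sum_{v \in C(u)} \beta_v^2$ into an explicit function of the messages entering $u$, and then verify the bound by a one-variable optimization whose critical equation dovetails with $(u-1)\e^u = \lambda$.

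First I would translate $\beta_v^2$ into marginal ratios. Because $v$ and its parent $u = p_v$ are adjacent, the hardcore constraint forces $\Pr{u, v} = 0$, so the definition of $\Cor$ gives $\beta_v = \Cor(v, u) = -\sqrt{R_u R_v}$ where $R_w := \Pr{w}/\Pr{\bar w}$. For every oriented edge $u \to v$ in $T$, I would then introduce the message $r_{uv}$, defined as the marginal ratio of $v$ in the component of $T \setminus \set{uv}$ containing $v$; the standard hardcore tree recursion (which is valid here by \Cref{def:conf-ind}) yields $R_u = \lambda \prod_{w \in N(u)}(1 + r_{uw})^{-1}$, and a short manipulation gives $R_u R_v = r_{uv} r_{vu}/[(1+r_{uv})(1+r_{vu})]$. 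Writing $x_i := r_{u v_i}$ for $v_i \in C(u)$ and $x_p := r_{u p_u}$ (with $x_p = 0$ if $u$ is the root), substituting $r_{v_i u} = \lambda/[(1+x_p)\prod_{j \neq i}(1+x_j)]$ collapses the sum to
\[
  \sum_{v \in C(u)} \beta_v^2 \;=\; \sum_{i=1}^k \frac{\lambda x_i}{B + \lambda(1+x_i)}, \qquad B := (1+x_p)\prod_{j=1}^k (1+x_j), \quad k := \abs{C(u)}.
\]

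Next I would reduce to the symmetric root case. Partial differentiation shows the right-hand side is strictly decreasing in $x_p$, so it suffices to treat $x_p = 0$; and among $(x_1,\ldots,x_k) \in [0,\lambda]^k$, a Lagrangian calculation forces each $z_i := 1+x_i$ at any interior critical point to satisfy a single quadratic equation in $z_i$, so at most two distinct values appear, and a direct exchange argument rules out the bi-symmetric case in favor of the fully symmetric one. Setting $x_i = x$ reduces the problem to the one-variable maximization of $F(z,k) = k\lambda(z-1)/(z^k + \lambda z)$, whose critical equation $z^{k-1}\bigl((k-1)z - k\bigr) = \lambda$ delivers the value $F_{\max}(k) = k - k^2/((k-1)z)$. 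Solving $F_{\max}(k) = 1-\delta/3$ for $z$ and eliminating $z$ through the critical equation produces the threshold fugacity
\[
  \lambda^*(k) \;=\; \tp{1 + \frac{1}{k-1}}^{2k-1} \tp{\frac{k-1}{k-1+\delta/3}}^k (1 - \delta/3),
\]
so the lemma is equivalent to the inequality $\lambda^*(k) \ge (1-\delta)\e^2$ for every integer $k \ge 2$ and every $\delta \in (0, 1/10]$. As $k \to \infty$, $\lambda^*(k) \to (1-\delta/3)\e^{2 - \delta/3}$, which exceeds $(1-\delta)\e^2$ by an $\Omega(\delta^2)$ margin; combined with the sharp Taylor identity $(1 + 1/(k-1))^{2k-1} = \e^2 \bigl(1 + 1/(6(k-1)^2) + O(k^{-3})\bigr)$, the finite-$k$ correction is small enough to preserve the inequality uniformly in $k \ge 2$.

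The two places I expect friction are the symmetric-extremum claim (the objective is neither globally concave nor convex in any natural parametrization of the $x_i$, so the Lagrangian-plus-exchange argument must be carried out explicitly) and the uniform-in-$k$ verification of $\lambda^*(k) \ge (1-\delta)\e^2$ in the critical regime $\lambda \to \e^2$, where the $\delta$-dependent margin and the $1/k^2$-sized finite-degree correction must be weighed against each other carefully.
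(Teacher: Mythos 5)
Your plan is essentially a re-parametrization of the paper's proof, not a different route: you work in messages $z_i = 1 + r_{uv_i}$ while the paper works in conditional occupation probabilities $a_i = \mu_{v_i}^{\overline{u}}$ with $P = \prod_j(1-a_j)$, and under $a_i = 1 - 1/z_i$, $P = 1/B$, $x = 1/z$, your objective, your univariate $F(z,k)$, your critical equation $z^{k-1}((k-1)z-k)=\lambda$, and your threshold fugacity $\lambda^*(k)$ coincide exactly with the paper's $F(d,x)$, $g(\hat x)=0$, and $G(\zeta,d)$. Both arguments (i) use the tree recursion to express $\beta_v^2$ through incoming messages, (ii) discard the parent factor, (iii) reduce to the fully symmetric extremum indexed by the number $k$ of non-zero coordinates, and (iv) verify $\lambda^*(k)\ge(1-\delta)\e^2$ uniformly in $k\ge 2$.

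The two points you flag as friction are exactly the proof's substance, and the paper handles each with a targeted simplification you should be aware of. For the symmetric-extremum step, the paper (\Cref{lem:max-hardcore}) \emph{fixes $P$ as a parameter} of the optimization; after substituting $1-a_i = \exp(-x_i)$ the objective becomes a separable sum under the single linear constraint $\sum_i x_i = -\log P$, and a two-coordinate exchange (driven by the sign of $\lambda^2 P^2\exp(x_i+x_j)-1$) shows the maximizer either merges the pair or pushes one to zero, yielding ``several equal positive values plus zeros'' with no coupled stationarity bookkeeping. Your unconstrained Lagrangian can be salvaged --- the stationarity condition does collapse to the single quadratic $z_i/(B+\lambda z_i)^2 = \text{const}$ in $z_i$ --- but you still owe the exchange argument ruling out two distinct positive values plus the boundary case where some $x_i$ vanishes; fixing $P$ is precisely what makes both painless. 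For the uniform-in-$k$ verification, the paper does not balance a limiting $\Omega(\delta)$ margin against a $\Theta(1/k^2)$ finite-degree correction. It factors $G(\zeta,d) = G(0,d)\cdot(1-\zeta)\bigl(1-\tfrac{\zeta}{d-1+\zeta}\bigr)^d$, applies Bernoulli's inequality to lower-bound the $\zeta$-dependent factor by $1-3\zeta$, and separately shows $\partial_d G(0,d)\le 0$ so that $G(0,d)\ge\lim_{d\to\infty}G(0,d)=\e^2$ for all $d\ge 2$; multiplying gives $G(\zeta,d)\ge(1-3\zeta)\e^2=(1-\delta)\e^2$ with nothing asymptotic left to estimate. (Minor note: your limiting margin is actually $\Omega(\delta)$, not merely $\Omega(\delta^2)$, though the factored argument never needs to know this.)
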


It remains to prove~\Cref{lem:tech-hardcore}, whose proof will be given in the following subsection.

\subsection{Analysis of local influences (proof of~\Cref{lem:tech-hardcore})}
  For convenience, let $\zeta := \delta / 3$.
  For any vertices $u,v \in V$, denote  $\Pr[]{u \mid \overline{v}}$ by $\mu_u^{\overline{v}}$.
  It can be verified that for any $(u,v) \in E$, $\tp{\Cor(u,v)}^2 = \mu_u^{\overline{v}} \mu_v^{\overline{u}}$.
  Therefore, it suffices to prove that
  \begin{align}\label{eq:desire-ineq}
    \sum_{v \in C(u)} \mu_v^{\overline{u}} \mu_u^{\overline{v}} \le 1-\zeta.
  \end{align}
  First, we eliminate the term $\mu^{\ol{v}}_u$. 
  By the tree recursion~\cite{weitz2006counting}, for all $v \in C(u)$, the set of children of $u$, it holds that
  \begin{align*}
    \frac{\mu_u^{\overline{v}}}{1-\mu_u^{\overline{v}}} = \lambda \prod_{w \in N(u) \setminus \{v\} } (1-\mu_w^{\overline{u}}) \le \lambda \prod_{w \in C(u) \setminus \{v\}} (1-\mu_w^{\overline{u}}) = \frac{\lambda}{1-\mu_v^{\overline{u}}} \prod_{w \in C(u)} (1-\mu_w^{\overline{u}}). 
  \end{align*}
  Therefore,
  \begin{align*}
    \mu_u^{\overline{v}} \le \frac{\lambda\prod_{w \in C(u)} (1-\mu_w^{\overline{u}})}{1-\mu_v^{\overline{u}} + \lambda\prod_{w \in C(u)} (1-\mu_w^{\overline{u}})}.
  \end{align*}
  Plug into~\eqref{eq:desire-ineq}, it suffices to prove that
  \begin{align}\label{eq:marginal-bound-hardcore}
    \sum_{v \in C(u)} \frac{\lambda \mu^{\ol{u}}_v \prod_{w \in C(u)} (1-\mu_w^{\overline{u}})}{1-\mu_v^{\overline{u}} + \lambda\prod_{w \in C(u)} (1-\mu_w^{\overline{u}})}\le 1 - \zeta.
  \end{align}  
  We will now introduce the following lemma that transforms the multi-variable maximization problem into a univariate one.
  The proof is deferred to the end of this subsection.
\begin{lemma}\label{lem:max-hardcore}
  Let $P \in (0,1)$ and $\lambda > 0$ be fixed parameters. 
  The following achieves the maximum value when $a_1 = a_2 = \ldots = a_i$, $a_{i+1}=a_{i+2}= \ldots = a_n = 0$ for some $1 \le i \le n$:
  \begin{align*}
      \sum_{i=1}^n \frac{a_i}{1-a_i+\lambda P} \quad \text{subject to $a_i \in [0,1-P]$ and $P=\prod_{i=1}^n (1-a_i)$.}
  \end{align*}
\end{lemma}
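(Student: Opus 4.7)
The plan is a two-variable smoothing argument. First I would substitute $b_i := 1 - a_i \in [P, 1]$, which converts the constraint to $\prod_i b_i = P$ and transforms the summand via $\frac{a_i}{1 - a_i + \lambda P} = \frac{1 + \lambda P}{b_i + \lambda P} - 1$. Setting $Q := \lambda P$, the original optimization becomes equivalent to maximizing $F(b) := \sum_{i=1}^n \frac{1}{b_i + Q}$ over the compact set $\set{b \in [P,1]^n : \prod_i b_i = P}$, so I will reason directly about $F$, which is a cleaner object since the variable $b_i$ appears only in its own summand.

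Next, for any pair of indices $i \ne j$, freeze the other coordinates and let $c := b_i b_j$. The univariate restriction $F_{ij}(x) := \frac{1}{x+Q} + \frac{1}{c/x + Q}$ has derivative $F_{ij}'(x) = -\frac{1}{(x+Q)^2} + \frac{c}{(c+Qx)^2}$, whose only positive zero is $x = \sqrt{c}$. A sign check at $x \to 0^+$ and $x \to \infty$ then reveals that $x = \sqrt{c}$ is a strict local maximum of $F_{ij}$ when $c < Q^2$, a strict local minimum when $c > Q^2$, and that $F_{ij}$ is constant when $c = Q^2$. Consequently, at a global maximizer, any pair $b_i, b_j$ both in the open interval $(P, 1)$ with $b_i \ne b_j$ can be strictly improved: either by sliding along $b_i b_j = c$ to $b_i = b_j = \sqrt{c}$ when $c \le Q^2$, or by pushing $b_i$ and $b_j$ apart along $b_i b_j = c$ until one reaches a boundary value in $\{P, 1\}$ when $c > Q^2$.

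Hence at the maximizer all strictly interior coordinates coincide at some common $b^\star \in (P, 1)$. Combined with $b_k \le 1$ and $\prod_k b_k = P$, any coordinate at the lower boundary $b_k = P$ is forced to be accompanied by all other $b_k = 1$, which is the trivial $i = 1$ case of the claim. Otherwise, some $i$ coordinates take the common value $b^\star = P^{1/i}$ and the remaining $n - i$ coordinates equal $1$, which is exactly the claimed extremal form after reverting to $a_j = 1 - b_j$. The main obstacle I anticipate is the careful sign analysis of $F_{ij}'$ in the $c > Q^2$ regime, where I must verify that the feasible slide along the product curve actually terminates with one coordinate in $\{P, 1\}$ rather than being blocked earlier by $b_i, b_j \in [P, 1]$; the degenerate case $c = Q^2$ is handled by simply selecting the representative maximizer matching the claimed two-value form, since $F_{ij}$ is then constant along the exchange curve.
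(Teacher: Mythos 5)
Your proof is correct and follows essentially the same approach as the paper: a pairwise exchange (smoothing) argument on the constrained optimization, dichotomized by whether the interior critical point of the two-variable exchange curve is a maximum or a minimum, which forces a global maximizer to take the two-value form claimed. Your substitution $b_i = 1-a_i$ with the multiplicative constraint $\prod_i b_i = P$ is the exponential mirror of the paper's $x_i = -\log(1-a_i)$ with $\sum_i x_i = -\log P$, and the identity $\frac{a_i}{1-a_i+\lambda P} = \frac{1+\lambda P}{b_i+\lambda P}-1$ you use is exactly the simplification implicit in the paper's $f$. As a side note, your case analysis is the correct one; the paper's write-up appears to have the conclusions of its two cases inadvertently swapped (when $\lambda^2 P^2 e^A \ge 1$, i.e.\ $c \le Q^2$ in your notation, the univariate restriction is maximized at the midpoint $x=A/2$, corresponding to $b_i=b_j=\sqrt{c}$, not at the endpoints), though this slip is harmless for the lemma because either branch of the dichotomy furnishes the required contradiction.
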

By~\Cref{lem:tech-hardcore}, the LHS of~\eqref{eq:marginal-bound-hardcore} can be bounded by
\begin{align*}
  \sum_{v \in C(u)} \frac{\lambda \mu^{\ol{u}}_v \prod_{w \in C(u)} (1-\mu_w^{\overline{u}})}{1-\mu_v^{\overline{u}} + \lambda\prod_{w \in C(u)} (1-\mu_w^{\overline{u}})}
  \le \sup_{\substack{x \in (0,1)\\d \ge 1}} \frac{d \lambda x (1-x)^d}{1-x + \lambda (1-x)^d} =  \sup_{\substack{x \in (0,1)\\d \ge 1}} \frac{d \lambda (1-x) x^d}{x + \lambda x^d}.
\end{align*} 

It only remains to prove $F(d,x) := \frac{d\lambda (1-x) x^d}{x+\lambda x^d} \le 1-\zeta$ for all $x \in (0,1)$ and $d \ge 1$.
When $d = 1$, $F(d,x) = \frac{\lambda(1-x)}{1+\lambda} \le \frac{\lambda}{1+\lambda} < \frac{9}{10} \leq 1-\zeta$, since $\lambda \leq \e^2$ and $\zeta = \frac{\delta}{3} \leq \frac{1}{10}$.
Therefore, we may further assume that $d \ge 2$.

    For a fixed $d \geq 2$, we first investigate the value of $\sup_{x\in (0, 1)} F(d, x)$.
    Note that
    \begin{align*}
        \partial_x \log F =  \frac{d (1-x) - 1 - \lambda x^d}{(1-x) \left(\lambda  x^d+x\right)}.
    \end{align*}
    The sign of $\partial_x \log F$ is determined by $g(x) := d (1-x) - 1 - \lambda x^d$ as $x \in (0,1)$.
    Furthermore, $g(0) = d - 1 \geq 0$, $g(1) = -1 - \lambda < 0$, and $g'(x) = -d \lambda  x^{d-1}-d < 0$.
    Therefore, the equation $g(x) = 0$ has a unique solution $\hat{x}$ in $(0, 1)$ and $\sup_{x\in (0, 1)} F(d, x) = F(d, \hat{x})$.

    Since $g(\hat{x}) = 0$, $\lambda$ is uniquely determined by $\hat{x}$ and $d$ as
    \begin{align} \label{eq:lambda-sub}
        \lambda = \lambda(\hat{x}) = \frac{d(1 - \hat{x}) - 1}{\hat{x}^d}.
    \end{align}
    Plugging \eqref{eq:lambda-sub} into $F(d,x)$, we have
    \begin{align*}
        F(d, \hat{x}) = \frac{d (d (1-\hat{x})-1)}{d-1}.
    \end{align*}
    As $F(d,\hat{x})$ is monotone decreasing in $\hat{x}$, to make sure $F(d, \hat{x}) \leq 1-\zeta$, it suffices to show 
    \begin{align} \label{eq:hat-x-lb}
        \hat{x} \geq \frac{(d-1) (d+\zeta -1)}{d^2}. 
    \end{align}
    Note that the function $\lambda(\hat{x})$ in \eqref{eq:lambda-sub} is monotonically decreasing in $\hat{x}$.
    So, in order to have \eqref{eq:hat-x-lb}, we only need to make sure that
    \begin{align} \label{eq:lambda-ub}
        \lambda \leq \lambda\tp{\frac{(d-1) (d+\zeta -1)}{d^2}} = \frac{(d-1) (1-\zeta) \left(\frac{(d-1) (d+\zeta -1)}{d^2}\right)^{-d}}{d} =: G(\zeta, d).
    \end{align}
    Since $\lambda < (1 - \delta)\e^2$ and $\zeta = \delta/3$, we have $\lambda < (1 - 3\zeta)\e^2$.
    Together with~\eqref{eq:lambda-ub}, it suffices to show that $G(\zeta, d) \ge (1 - 3\zeta)\e^2$ for all $\zeta \in (0, 1/3)$ and $d \geq 2$.
    
    We note that
    \begin{align}
      G(\zeta, d)
      \nonumber &= G(0, d) \cdot (1 - \zeta) \tp{1 - \frac{\zeta}{d - 1 + \zeta}}^d \\
      \label{eq:G-lb} &\overset{(\star)}{\geq} G(0, d) \cdot (1 - \zeta) \tp{1 - \frac{d\zeta}{d-1+\zeta}} 
      \geq G(0, d) \cdot (1 - 3\zeta),
    \end{align}
    where the last inequality holds by $d \geq 2$ and $(\star)$ holds by the Bernoulli inequality (i.e., $(1 + x)^r \geq 1 + rx$, for $x \geq -1$ and $r \geq 1$) and the fact that $\frac{d \zeta}{d-1} \leq 1$.

    Note that $\lim_{d \to \infty} G(0, d) = \e^2$.
    Together with \eqref{eq:G-lb}, in order to show that $G(\zeta, d) \geq (1 - 3\zeta) \e^2$, it is sufficient for us to show that $\partial_d G(0, d) \leq 0$.
    By a straightforward calculation, the sign of $\partial_d G(0, d)$ is determined by
    \begin{align*}
      1-2 d+2 (d-1) d \log \left(\frac{d}{d-1}\right)
      \leq 1 - 2d + 2 \sqrt{d (d - 1)} \leq 0,
    \end{align*}
    where the in the first inequality, we use the fact that $\log (1 + x) \leq \frac{x}{\sqrt{x+1}}$ holds for $x \geq 0$, and the last inequality holds by the AM-GM inequality.
    This concludes the proof of~\Cref{lem:tech-hardcore}.
Finally, we prove~\Cref{lem:max-hardcore} to conclude this subsection.
\begin{proof}[Proof of~\Cref{lem:max-hardcore}]
  Let $x_1,x_2,\ldots,x_n \ge 0$ be variables such that $1 - a_i = \exp(-x_i)$ for all $1 \le i \le n$. Therefore, our goal is to maximize $\sum_{i=1}^n \frac{1-\exp(-x_i)}{\exp(-x_i) + \lambda P}$ subject to $x_i \ge 0$ for all $1 \le i \le n$ and $\sum_{i=1}^n x_i = -\log P$. 
  Note that the constraints form a closed region. Therefore, the maximum value can be achieved at a set of points $S \subseteq \mathbb{R}_{\ge 0}^n$. 
  It suffices to prove that there exists a point $\*x \in S$ with the form $x_1=x_2=\ldots=x_i$ and $x_{i+1}=x_{i+2}=\ldots=x_n=0$ for some $1 \le i \le n$.
  
  If not, let $\*x^\star \in S$ be a point with the least non-zero elements. Without loss of generality, we may assume that $x^\star_1 \ge x^\star_2 \ge \ldots \ge x^\star_n$. 
  Thus, there exists two distinct elements $x^\star_i>0$ and $x^\star_j>0$ with $x^\star_i \neq x^\star_j$. Let $A = x^\star_i+x^\star_j$, we first examine the following function defined on interval $[0,A]$:
  \begin{align*}
    f(x) = \frac{1-\exp(-x)}{\exp(-x) + \lambda P} + \frac{1-\exp(-(A-x))}{\exp(-(A-x)) + \lambda P}.
  \end{align*}
  By a straightforward calculation, the derivative of $f$ is given by 
  \begin{align*}
    f'(x) = -\frac{\exp(x) (\exp(2x) - \exp(A)) (1 + \lambda P) (-1 + \lambda^2 P^2 \exp(A) )}{(\exp(x) + \lambda P \exp(A))^2 (1 + \lambda P \exp(x))^2}
  \end{align*}
  Based on the sign of $\lambda^2 P^2 \exp(A) - 1$, there are two cases to consider.

  \textit{Case 1: $\lambda^2 P^2 \exp(A) \ge 1$.} In this case, the maximum value of $f(x)$ achieves at $x = 0$ or $x=A$. Therefore, the value of $\sum_{i=1}^n \frac{1-\exp(-x^\star_i)}{\exp(-x^\star_i) + \lambda P}$ does not decrease when $x^\star_i$ is set to $A$ and $x^\star_j$ is set to $0$. However, by doing this, the number of non-negative elements decreases by $1$, violating the least non-zero elements assumption.
  
  \textit{Case 2: $\lambda^2 P^2 \exp(A) < 1$.} In this case, the maximum value of $f(x)$ only achieves at $x = \frac{A}{2}$. Therefore, the value of $\sum_{i=1}^n \frac{1-\exp(-x^\star_i)}{\exp(-x^\star_i) + \lambda P}$ strictly increases when $x^\star_i$ and $x^\star_j$ are both set to $\frac{A}{2}$, contradicting with our assumption on maximality.
\end{proof}

\subsection{Unboundedness of spectral independence when $\lambda$ is large}
\label{sec:unbounded-hardcore}
In this subsection, we will prove \Cref{thm:unbounded-hardcore}.
%
Before proving this theorem, we first introduce a lemma in~\cite{restrepo2014phase} that guarantees the existence of a tree $T=(V,E)$ rooted at $r$ that the probability of $r$ being occupied is close to the solution of equation $x = \lambda (1-x)^{d}$.
\begin{lemma}[\text{\cite[Section 5]{restrepo2014phase}}]\label{lem:construct-fixed-point}
  Let $\lambda > 0, d \ge 3$ be constants. For any $\delta \in (0,1)$, there exists a hardcore system specified by tree $T_\delta=(V_\delta, E_\delta)$ rooted at $r$ and fugacity $\lambda$ such that
  \begin{align*}
    |\mu_r - x^\star| < \delta,
  \end{align*}
  where $\mu_r$ is the probability of $r$ being occupied, and $x^\star$ is the unique solution of $x = \lambda (1-x)^{d}$ when $x \in (0,1)$.
\end{lemma}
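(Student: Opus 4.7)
The plan is to realize the target root marginal by a tree in which every internal vertex has $d-1$ children (so that the total degree equals $d$) and the leaf-to-root recursion is governed by the one-step map
\[
  f(x) \;:=\; \frac{\lambda(1-x)^{d-1}}{1+\lambda(1-x)^{d-1}}.
\]
A direct calculation shows that $f(x^\star)=x^\star$ is equivalent to $x^\star = \lambda(1-x^\star)^d$, so $x^\star$ is exactly a fixed point of $f$. I would split into two cases according to the stability of $x^\star$ under $f$.

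In the stable regime $|f'(x^\star)|\le 1$, I would take $T_\delta$ to be the complete $(d-1)$-ary tree of sufficiently large height $h$; by the tree recursion, the root marginal equals the $h$-fold iterate of $f$ applied to the leaf marginal $\lambda/(1+\lambda)$, and under the stability assumption this iteration converges to $x^\star$, so any $h$ large enough drives the root marginal within $\delta$ of $x^\star$.

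The interesting case is the unstable regime $|f'(x^\star)|>1$, where the iteration instead converges to a $2$-cycle $\{y_-,y_+\}$ of $f$ with $y_-<x^\star<y_+$ and $f(y_-)=y_+,\ f(y_+)=y_-$. Here I would first construct two families of ``building-block'' subtrees $T^{(+)}_h$ and $T^{(-)}_h$ of large depth whose root marginals approximate $y_+$ and $y_-$ respectively (possible because $f\circ f$ is contracting near the stable $2$-cycle, and even/odd depths are driven to opposite points). Let the root $r$ of $T_\delta$ have $d-1$ children, $k$ of them roots of $T^{(+)}$-blocks and $d-1-k$ of $T^{(-)}$-blocks. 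The tree recursion then yields
\[
  \mu_r \;\approx\; \frac{\lambda(1-y_+)^k(1-y_-)^{d-1-k}}{1+\lambda(1-y_+)^k(1-y_-)^{d-1-k}},
\]
and as $k$ varies over $\{0,1,\dots,d-1\}$ this produces $d$ discrete values interpolating monotonically between $f(y_-)=y_+$ (at $k=0$) and $f(y_+)=y_-$ (at $k=d-1$), so that $x^\star$ is bracketed. To shrink any residual gap below $\delta$, I would further refine one distinguished child subtree recursively, replacing some of its deepest vertices with additional mixtures of $T^{(+)}$- and $T^{(-)}$-blocks; an induction using the continuity of the tree recursion in its boundary values shows that the set of attainable $\mu_r$ becomes dense in a subinterval containing $x^\star$, so a finite refinement suffices.

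The main obstacle is precisely the unstable regime: because $x^\star$ is a repelling fixed point of $f$, no finite tree with a single uniform boundary condition can land at $x^\star$ directly. The delicate step is to certify that the recursive-refinement scheme sweeps out a set of attainable root marginals that is actually dense near $x^\star$; this reduces to quantitative contraction estimates on $f\circ f$ in a neighborhood of $y_\pm$, which ensure that small boundary perturbations deep in a block translate to small but nonzero perturbations at the root, letting the construction approach $x^\star$ to within any prescribed $\delta$.
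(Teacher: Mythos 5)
The paper itself does not prove this lemma --- it is imported verbatim from \cite{restrepo2014phase} --- so your attempt stands on its own. Your stable case is fine (for the hard-core recursion, $|f'(x^\star)|\le 1$ is indeed equivalent to $f\circ f$ having a unique fixed point, so the even and odd iterates started from $\lambda/(1+\lambda)$ both converge to $x^\star$), and in the unstable case your bracketing step is correct: with $k$ children of type $T^{(+)}$ and $d-1-k$ of type $T^{(-)}$ the root marginal moves monotonically from $\approx f(y_-)=y_+$ down to $\approx f(y_+)=y_-$, so $x^\star$ is straddled. The genuine gap is the ``refinement to density'' step, which is the heart of the unstable case and is only asserted. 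Worse, the mechanism you invoke --- contraction of $f\circ f$ near the attracting $2$-cycle, so that modifying the deepest vertices produces ``small but nonzero'' perturbations at the root --- works against you: a boundary modification at depth $h$ is damped by roughly $\bigl(f'(y_+)f'(y_-)\bigr)^{h/2}$, i.e.\ it is exponentially small, so deep modifications can never sweep across the constant-size gaps between consecutive values of $k$. Bridging those gaps forces modifications at shallow (indeed all) depths, and then the claim that the attainable root marginals are dense in an interval around $x^\star$ becomes a covering statement for an iterated-function-system-type attractor (the branch maps $t\mapsto \lambda c(1-t)/(1+\lambda c(1-t))$ over admissible products $c$), which requires a quantitative overlap argument you do not give and which is not obviously true as stated. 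As written, the induction also has no decreasing quantity and is circular: to tune the distinguished child's marginal inside $(y_-,y_+)$ you need precisely the density statement you are trying to prove, one level down.

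There is a much simpler route that avoids the dynamics and the case analysis entirely, exploiting that the lemma places no constraint on the degrees of $T_\delta$. A star with $m$ leaves has root marginal $q_m=\lambda(1+\lambda)^{-m}/\bigl(1+\lambda(1+\lambda)^{-m}\bigr)\to 0$ as $m\to\infty$; a root with $k$ such stars as children has marginal $\lambda(1-q_m)^k/\bigl(1+\lambda(1-q_m)^k\bigr)$. Since the geometric sequence $\{(1-q_m)^k\}_{k\ge 0}$ forms a multiplicative $(1-q_m)$-net of $(0,1]$, for every target $P\in(0,1)$ some choice of $k$ makes the root marginal lie within $\lambda q_m$ of $\lambda P/(1+\lambda P)$. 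Taking $P=(1-x^\star)^{d-1}$, for which $\lambda P/(1+\lambda P)=x^\star$ by $x^\star=\lambda(1-x^\star)^d$, and then $m$ large enough, yields $|\mu_r-x^\star|<\delta$. If you want to keep your dynamical construction, compare it with the explicit one in Section~5 of \cite{restrepo2014phase}; otherwise the two-parameter star-of-stars argument above closes the gap completely.
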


\begin{proof}[Proof of~\Cref{thm:unbounded-hardcore}]
  Our construction of tree $T=(V,E)$ is a complete $3$-regular tree rooted at $r$ of height $H=\left\lceil C\right\rceil$ with all its leaves substituted by $T_\delta$ for sufficiently small $\delta = \delta(C) > 0$.
  The parameters $\delta$ is chosen so that for any vertex $u \in V$ with $\mathrm{dist}(u,r)<H$ and $v \in N(u)$, the marginal probability $\mu_u^{\overline{v}}$
  satisfies that $|\mu_u^{\overline{v}} - x^\star| < \alpha$, where $x^\star$ is the solution of $x = \lambda (1-x)^3$, and $\alpha$ is a constant to be determined that only relies on $\lambda$.
  This follows from~\Cref{lem:construct-fixed-point}, the construction of $T$, and the fact that the number of pairs $u,v$ satisfying above constraints is bounded by a constant $2^{H+4}$.
  
  By a similar calculation in the proof of~\Cref{lem:vertex-boundedness}, the quadratic form of the (approximate) inverse $Q$ of $\Cor$ satisfies
  \begin{align*}
    \*x^T Q \*x = \sum_{u \in V \setminus \{r\}} \frac{1}{1-\beta_u^2} \tp{\beta_u x_{p_u} - x_u}^2 + x_r^2,
  \end{align*}
  where $r$ is the root of $T$, $p_u$ is the parent of $u$, and $\beta_u = \Cor(u,p_u) = \sqrt{\mu_u^{\overline{p_u}} \mu_{p_u}^{\overline{u}}}$.
  By~\Cref{prop:Cor-Q-I}, to show $\lambda_{\max}(\Psi) = \tp{\lambda_{\min}(Q)}^{-1} \ge C$, it suffices to prove that there exists an assignment $\*x \in \mathbb{R}^V$ satisfying
  \begin{align}\label{eq:target-lower-hardcore}
    \sum_{u \in V \setminus \{r\}} \frac{1}{1-\beta_u^2} \tp{\beta_u x_{p_u} - x_u}^2 + x_r^2 \le \frac{1}{C} \sum_{u \in V} x_u^2.
  \end{align}
  We simply assign values as follows:
  \begin{align}\label{eq:assignment-hardcore}
    x_u = 
    \begin{cases}
     1 & \text{$u$ is the root,}\\
     \beta_u x_{p_u} & \text{otherwise.}
    \end{cases}
  \end{align}
  It can be verified that the LHS of~\eqref{eq:target-lower-hardcore} is simply $x_r^2$. Therefore, it suffices to show that 
  \begin{align}\label{eq:target-lower-hardcore-2}
    \sum_{u \in V} x_u^2 \ge C x_r^2.
  \end{align}
  For all $0 \le h < H$, inductively, we will prove that
  \begin{align*}
    \sum_{\substack{u \in V\\ \mathrm{dist}(u,r) = h}} x_u^2 \ge x_r^2.
  \end{align*} 
  By the choice of $H$, equation~\eqref{eq:target-lower-hardcore-2} follows immediately.
  The induction basis, $h=0$, holds trivially. 
  For any $1 \le h < H$, suppose the claim holds for all $h'$ smaller than $h$. 
  By the assignment of vector $\*x$, 
  \begin{align*}
    \sum_{\substack{u \in V\\ \mathrm{dist}(u,r) = h}} x_u^2
    = \sum_{\substack{v \in V\\ \mathrm{dist}(v,r) = h-1}} \tp{\sum_{u \in C(v)} \mu_u^{\overline{v}} \mu_v^{\overline{u}}} x_v^2.
  \end{align*}
  Therefore, by induction hypothesis, it only remains to show that $\sum_{v \in C(u)} \mu_u^{\overline{v}} \mu_v^{\overline{u}} \ge 1$. 
  By our previous construction of the tree and assumption on marginal probability, $\mu_u^{\overline{v}} \mu_v^{\overline{u}} \ge (x^\star - \alpha)^2$ and $|C(u)|\ge 2$. Therefore, it suffices to show that $2(x^\star-\alpha)^2 \ge 1$, i.e., $x^\star \ge \alpha + \frac{1}{\sqrt{2}}$.
  Recall $x^\star$ is the unique solution of $x = \lambda (1-x)^3$. Therefore, it suffices to show that 
  \begin{align*}
      \alpha + \frac{1}{\sqrt{2}} \le \lambda \tp{1-\frac{1}{\sqrt{2}} - \alpha}^3.
  \end{align*}
  When $\lambda > \tp{1-\frac{1}{\sqrt{2}}}^{-3} \cdot \frac{1}{\sqrt{2}} \approx 28.14$, there must exist a sufficiently small $\alpha$ satisfying the constraint.
\end{proof}

\section{Fast mixing of Glauber dynamics}\label{sec:miss-proof}

In this section, we will prove the spectral gap result of \Cref{thm:matching-tree} and \Cref{thm:hardcore-tree}.
The proof relies on several results on the mixing time or spectral gap of Glauber dynamics via spectral independence.
Before introducing these results, we first introduce several definitions.

\begin{definition}[tilted distribution~\cite{chen2021rapid}]
  Let $\mu$ be a distribution over $2^U$ and $\theta > 0$. The tilted distribution $\theta * \mu$ is given by
  \begin{align*}
    \forall S \subseteq U,\quad (\theta * \mu)(S) \propto \mu(S) \theta^{\abs{S}}.
  \end{align*}
\end{definition}

\begin{definition}[marginal bound~\cite{chen2021optimal}]
  Let $\mu$ be a distribution over $2^U$ and $b > 0$ be a real number. The distribution $\mu$ is $b$-marginally bounded if $\min\set{\mu_u,1-\mu_u} \ge b$ for all $u \in U$, where $\mu_u$ denotes the probability $\Pr[\mu]{u}$.
\end{definition}

We introduce several results based on spectral independence.
\begin{lemma}[\cite{chen2021optimal}]\label{lem:optimal-mixing}
  Let $\mu$ be a Gibbs distribution of a spin system specified by graph $G=(V,E)$ with maximum degree $\Delta$. 
  If $\mu$ is $\eta$-spectrally independent and $b$-marginally bounded under all pinnings, then the mixing time of Glauber dynamics on $\mu$ is bounded by $\tp{\frac{\Delta}{b}}^{O(\eta/b^2 + 1)} n \log n$.
\end{lemma}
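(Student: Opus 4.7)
The plan is to apply the $k$-transformation plus local-to-global framework that has become standard in the spectral independence literature. Viewing $\mu$ as a homogeneous distribution $\nu$ on $\binom{V \sqcup \ol{V}}{n}$ via $S \mapsto S \cup \set{\ol{v} : v \notin S}$, spectral independence of $\mu$ under all pinnings translates to spectral independence of every link of $\nu$ at every level. The Alev--Lau local-to-global theorem then gives the down--up walk on level $n$ a spectral gap at least $\frac{1}{n}\prod_{k=2}^{n}(1 - \eta/k) = n^{-\Theta(\eta)}$, which already yields polynomial mixing but is too weak for the claimed $O(n \log n)$ bound.

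To upgrade the gap to the right order, I would use the $k$-blowup with $k \asymp \Delta/b$. By \Cref{prop:k-trans-cor}, $\mu_k$ has the same correlation spectrum and is therefore still $\eta$-spectrally independent, while each marginal shrinks to $\Theta(b/k)$. In this small-marginal regime, the strong contraction of every single-site conditional walk lets one convert the local spectral-independence bounds into a modified log-Sobolev inequality for the down--up walk on $\nu_k$ with constant $\Omega(1/n)$, following the entropy-factorization argument of \cite{chen2021optimal}; the blowup price gets concentrated in a $(\Delta/b)^{O(\eta/b^2)}$ prefactor that tracks how small the marginals must be made.

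A standard block-dynamics comparison then transfers this MLSI from the down--up walk on $\nu_k$ back to the Glauber dynamics on $\mu$, losing a further factor of roughly $\Delta/b$; finally the conversion $T_{\mathrm{mix}} \le \rho^{-1}\log(1/\mu_{\min})$ together with $\log(1/\mu_{\min}) = O(n \log(\Delta/b))$ completes the bound. The main obstacle is the upgrade from spectral gap to MLSI after the blowup: without the marginal lower bound $b$, the entropy-factorization step breaks and one is stuck with a polynomial gap $n^{-\Theta(\eta)}$ rather than a near-linear mixing time; it is precisely the interplay between $\eta$ and $b$ in this step that produces the exponent $O(\eta/b^2+1)$.
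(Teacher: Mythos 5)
The paper does not prove this lemma; it is a black-box citation of \cite{chen2021optimal}. So the comparison is really between your reconstruction and the actual proof in that reference.

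Your reconstruction diverges from \cite{chen2021optimal}'s actual argument. That paper proves the lemma by directly establishing uniform block factorization of entropy for $\mu$ from the $\eta$-spectral-independence and $b$-marginal-boundedness hypotheses, via a local-to-global argument on the simplicial-complex lift of $\mu$; it does not use the $k$-transformation at all. The $k$-transformation/blowup was introduced in later works (\cite{chen2021rapid,anari2021entropicII,chen2022optimal}) to remove the $\Delta$-dependence by passing to the field dynamics, which is a different chain of reductions aimed at a different goal.

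More importantly, your argument is circular. The step you describe as ``convert the local spectral-independence bounds into a modified log-Sobolev inequality $\ldots$ following the entropy-factorization argument of \cite{chen2021optimal}'' \emph{is} the content of the lemma: turning $\eta$-SI plus $b$-marginal-boundedness into an MLSI with the $(\Delta/b)^{O(\eta/b^2+1)}$ prefactor is exactly what has to be proven. You acknowledge this step is ``the main obstacle,'' but offer no argument for it beyond citing the very theorem at issue, so the proposal reduces to restating the statement.

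The $k$-blowup is also counterproductive for this particular route. Marginal boundedness is a two-sided condition, $\min\set{\mu_u, 1-\mu_u} \ge b$, and the blowup drives each $\Pr{(i,t)}$ down to $\Theta(b/k)$, \emph{worsening} the marginal lower bound to roughly $b^2/\Delta$ at your choice $k \asymp \Delta/b$. If you then tried to apply \cite{chen2021optimal}'s factorization to $\mu_k$ you would get a \emph{worse} constant, not a better one. The small-marginal regime is genuinely useful in the field-dynamics framework (see \Cref{lem:boost-SI}), where one simultaneously scales the fugacity and compares to the field dynamics rather than Glauber dynamics on $\mu_k$ directly, but that machinery is neither needed for nor equivalent to the bounded-$\Delta$ statement you are trying to prove. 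A correct reconstruction would stay with $\mu$ and prove, from $\eta$-SI and $b$-marginal-boundedness, an entropy contraction of the form $1 - \Theta(1/(n-k))$ at each level of the down-up walk with the $(\Delta/b)^{O(\eta/b^2)}$ overhead coming from relating entropy to variance using the marginal bound; chaining these yields $\theta n$-uniform block factorization and then the claimed MLSI.
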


\begin{lemma}[\cite{chen2021rapid, chen2022localization}]\label{lem:boost-SI}
  Let $\mu$ be a Gibbs distribution on spin system on graph $G=(V,E)$ with maximum degree $\Delta$. If $(\lambda * \mu)$ is $\eta$-spectrally independent under all pinnings and $\lambda \in (0,1)$, then for any $\theta \in (0,1)$,
  \begin{align*}
    \lambda^{\mathrm{GD}}_{\mathrm{gap}}(\mu) \ge \theta^{O(\eta)} \lambda^{\mathrm{GD}}_{\mathrm{min-gap}}(\theta * \mu),
  \end{align*}
  where $\lambda^{\mathrm{GD}}_{\mathrm{gap}}(\mu)$ is the spectral gap of Glauber dynamics on $\mu$, and $\lambda^{\mathrm{GD}}_{\mathrm{min-gap}}(\theta * \mu)$ is the minimum spectral gap of $\theta * \mu$ over all possible pinnings.
\end{lemma}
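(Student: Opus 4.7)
The plan is to use the \emph{field dynamics} framework of \cite{chen2021rapid} together with comparison arguments from the localization schemes of \cite{chen2022localization}. The idea is to interpolate between Glauber dynamics on $\mu$ and Glauber dynamics on $\theta*\mu$ via an intermediate chain $P^{\mathrm{FD}}_\theta$ whose spectral gap is controlled directly by the spectral independence of $\theta*\mu$, and whose single ``big'' step can be simulated by running Glauber dynamics on a pinning of $\theta*\mu$. The target bound
\[
\lambda^{\mathrm{GD}}_{\mathrm{gap}}(\mu) \;\ge\; \theta^{O(\eta)}\cdot \lambda^{\mathrm{GD}}_{\mathrm{min\mbox{-}gap}}(\theta*\mu)
\]
will then pop out as the product of these two estimates.

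First, I would introduce the field dynamics $P^{\mathrm{FD}}_\theta$ on $\mu$ at step parameter $\theta$: from current $X\in 2^V$, independently mark each unoccupied element $u\notin X$ with probability $1-\theta$ and freeze it to $0$; letting $T\supseteq X$ denote the set of remaining unfrozen elements together with $X$, resample $X'\sim \mu$ conditioned on $X'\subseteq T$. A short calculation gives reversibility with respect to $\mu$, and a reweighting by $\theta^{|X|}$ shows that the resampling step is exactly a draw from a pinning of $\theta*\mu$ that fixes all frozen elements to $0$. This reformulation is what allows $\theta*\mu$ rather than $\mu$ itself to drive the analysis.

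Second, using the assumed $\eta$-spectral independence of $\theta*\mu$ at all pinnings together with the local-to-global theorems for high-dimensional expanders (e.g.\ \cite{anari2020spectral, alev2020improved}), I would show that the spectral gap of $P^{\mathrm{FD}}_\theta$ is at least $\theta^{O(\eta)}$. The bound arises because $P^{\mathrm{FD}}_\theta$ is, up to the tilting reweighting above, a down-up walk on the weighted simplicial complex associated to $\theta*\mu$; the cost of the tilting (the ratio of maximum to minimum link weights under restrictions) contributes the $\theta^{O(\eta)}$ factor, while the spectral independence hypothesis supplies the required local expansion inside every link.

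Third, since the expensive step of $P^{\mathrm{FD}}_\theta$ is exactly a draw from a pinning of $\theta*\mu$, I would simulate it by running Glauber dynamics on $\theta*\mu$ under that pinning. A standard decomposition/comparison of Markov chains (as carried out in \cite{chen2022localization}) then yields
\[
\lambda^{\mathrm{GD}}_{\mathrm{gap}}(\mu) \;\ge\; \lambda_{\mathrm{gap}}(P^{\mathrm{FD}}_\theta)\cdot \lambda^{\mathrm{GD}}_{\mathrm{min\mbox{-}gap}}(\theta*\mu),
\]
and combining with the estimate from Step 2 completes the proof. The main obstacle is Step 2: translating spectral independence of $\theta*\mu$ into the explicit $\theta^{O(\eta)}$ gap for $P^{\mathrm{FD}}_\theta$. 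This requires carefully tracking how the reweighting from $\mu$ to $\theta*\mu$ propagates through the local-to-global contraction inside the high-dimensional expander associated to $\theta*\mu$, and it is the technical heart of \cite{chen2021rapid} that I would invoke here rather than reprove from scratch.
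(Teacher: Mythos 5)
The paper does not actually prove this lemma: after stating it, the authors only remark that a modified log-Sobolev version appears as Theorem~60 of \cite{chen2022localization} under entropic independence, and that ``the spectral gap version stated here can follow from the same abstract framework.'' So there is no detailed in-paper proof to compare against; what can be compared is whether your sketch is a faithful account of that framework. Your three-step architecture --- introduce the field dynamics, bound its gap by $\theta^{O(\eta)}$ via spectral independence of the tilted measure, then transfer to Glauber on $\mu$ by a decomposition/comparison --- is indeed the structure of \cite{chen2021rapid,chen2022localization}, and the third step (simulating the ``big'' resampling step by Glauber on a pinning of a tilted distribution) is exactly how the two chains are linked there.

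However, there is a concrete error in your description of the field dynamics that propagates through the argument. The chain you write down --- from $X$, freeze each $u\notin X$ to $0$ with probability $1-\theta$ to get $T\supseteq X$, then resample $X'\sim \mu$ conditioned on $X'\subseteq T$ --- is \emph{not} reversible with respect to $\mu$. A direct detailed-balance check gives
\[
\mu(X)\,P(X,X') \;=\; \mu(X)\,\mu(X')\,\theta^{-|X|}\,\sum_{T\supseteq X\cup X'}\frac{\theta^{|T|}(1-\theta)^{|V|-|T|}}{\sum_{Y\subseteq T}\mu(Y)},
\]
so the ratio $\mu(X)P(X,X')\big/\mu(X')P(X',X)=\theta^{|X'|-|X|}$, and the stationary measure of your chain is $\theta*\mu$, not $\mu$. (You can also see this on a single-vertex example with $\mu$ uniform on $\{\emptyset,\{1\}\}$ and $\theta=\tfrac12$: the stationary measure of the chain is $(\tfrac23,\tfrac13)$, which is $\theta*\mu$.) If you instead want a chain reversible with respect to $\mu$ with this up-step, the resampling must be from a pinning of $\theta^{-1}*\mu$, i.e.\ weighted by $\theta^{-|X'|}$, which is the \emph{opposite} direction to your ``reweighting by $\theta^{|X|}$'' remark. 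This is not a cosmetic issue: the comparison step of the argument (your Step~3) requires the resampling step of the field dynamics to be an exact draw from the same tilted pinnings that appear on the right-hand side of the lemma. As written, your resampling distribution, your claimed stationary distribution, and the lemma's $\theta*\mu$ do not match up, so the comparison inequality you conclude with does not follow from the chain you defined. To repair the sketch you need to state the field dynamics with the correct pairing of up-step probability, resampling distribution, and stationary measure --- in particular, take care with whether the relevant tilted measure in the resampling is $\theta*\mu$ or $\theta^{-1}*\mu$, and make the reversibility claim match.
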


\begin{remark}
In~\cite[Theorem 60]{chen2022localization}, a modified log-Sobolev constant version of~\Cref{lem:boost-SI} was proved, assuming entropic independence of distribution $(\lambda * \mu)$ under all pinnings. 
  The spectral gap version stated here can follow from the same abstract frmework.
\end{remark}

For the hardcore model, it has been shown that the Glauber dynamics on tree has an optimal spectral gap when fugacity $\lambda$ is small enough.

\begin{lemma}[\cite{efthymiou2023optimal}]\label{lem:SI-hardcore-small}
  Let $\mu$ be a Gibbs distribution for hardcore model on tree with fugacity $\lambda < 1.1$. 
  The spectral gap of Glauber dynamics is at least $\frac{1}{Cn}$ for some constant $C$.
\end{lemma}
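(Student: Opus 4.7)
Since this lemma is quoted from prior work, my plan is to reproduce its conclusion using tools already available in this paper. The strategy is a two-stage reduction: first from spectral gap to spectral independence under all pinnings, and then, via tilting, to a base case at very small effective fugacity where the spectral gap of the Glauber dynamics on trees is classically understood.

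First, I would verify that $\mu$ and all its pinnings are $O(1)$-spectrally independent whenever $\lambda < 1.1$. Since $1.1 < (1-\delta)\e^2$ for any sufficiently small $\delta$ (taking $\delta = 0.1$ already gives $(1-\delta)\e^2 \approx 6.65$), \Cref{thm:hardcore-tree-SI} immediately yields $\lambda_{\max}(\Inf_\mu) \le 36/\delta^2 = O(1)$. The bound transfers to arbitrary pinnings because pinning a subset of vertices of the hardcore distribution on a tree yields a hardcore distribution on a forest of smaller trees, to which \Cref{thm:hardcore-tree-SI} again applies.

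Second, I would boost the constant spectral independence to a linear spectral gap through the field-dynamics bound \Cref{lem:boost-SI}. Pick $\theta \in (0,1)$ small enough that the tilted fugacity $\theta\lambda$ lies strictly below a threshold at which the hardcore Glauber dynamics on arbitrary trees is already known to enjoy $\Omega(n^{-1})$ spectral gap, as furnished at small fugacity by the tree-specific results of \cite{martinelli2003ising,martinelli2004fast,Berger2005Glauber}. Since $\theta * \mu$ is itself the hardcore distribution at fugacity $\theta\lambda$, the first step endows it with constant spectral independence under all pinnings. Then \Cref{lem:boost-SI} yields
\[
  \lambda^{\mathrm{GD}}_{\mathrm{gap}}(\mu) \;\ge\; \theta^{O(1)}\,\lambda^{\mathrm{GD}}_{\mathrm{min-gap}}(\theta * \mu) \;\ge\; \theta^{O(1)} \cdot \Omega(n^{-1}) \;=\; \Omega(n^{-1}),
\]
with the hidden constants depending only on $\lambda$.

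The main obstacle is the base case itself: establishing an $\Omega(n^{-1})$ spectral gap for the hardcore Glauber dynamics on an arbitrary tree of unbounded maximum degree at sufficiently small fugacity. For bounded degree, classical $2$-spin mixing results close the argument at once, but uniformly in $\Delta$ one needs an argument that does not degrade with vertex degree. The natural route is a level-by-level induction on the tree in the spirit of \cite{martinelli2003ising,martinelli2004fast}, propagating a Poincar\'e-type inequality upward from the leaves, using that at tiny fugacity the conditional distributions at every subtree are close to product form and hence have excellent local mixing; an alternative is a canonical-paths construction whose congestion is controlled by a ratio of partition functions that remains uniformly bounded at small fugacity, again independently of $\Delta$.
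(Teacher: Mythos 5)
This lemma is not proved in the paper; it is quoted verbatim from \cite{efthymiou2023optimal} and used as a black box. So there is no internal proof to compare your attempt against, but your reconstruction has a genuine gap that you yourself flag and do not close.

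Your Step~2 is precisely the move the paper makes in \Cref{sec:miss-proof}: apply \Cref{lem:boost-SI} to transfer constant spectral independence into an $\Omega(n^{-1})$ spectral gap, given a base case at smaller tilted fugacity. But the paper's base case for that argument \emph{is} \Cref{lem:SI-hardcore-small} itself. Substituting ``classical tree results'' for the base case does not resolve the issue: the references \cite{martinelli2003ising,martinelli2004fast,Berger2005Glauber} give mixing results for complete $b$-ary trees or for bounded maximum degree, and the whole difficulty here is an $\Omega(n^{-1})$ spectral gap \emph{uniform in $\Delta$} on arbitrary trees. You correctly identify this as ``the main obstacle,'' but sketching two possible routes without carrying either out leaves the proof incomplete, and without that base case the boosting layer is essentially circular.

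What actually closes such a base case is an argument like the one the paper supplies for the monomer-dimer analogue, \Cref{lem:easy-matching} / \Cref{lem:easy-matching-2}: a direct proof of $C$-approximate tensorization of variance by induction on the tree, peeling a star at a deepest internal vertex and tracking how the local variances $\mu[\Var[e]{f}]$ recombine (this is also the technique in \cite{efthymiou2023optimal}). Once approximate tensorization holds with constant $C$, \Cref{prop:tensor-sg} immediately gives the $\Omega(n^{-1})$ gap, and one never needs the boosting step at all for the $\lambda < 1.1$ range. Your Step~1 (constant spectral independence under all pinnings via \Cref{thm:hardcore-tree-SI}, transferring to forests) is correct but is then not what is needed: a constant bound on $\lambda_{\max}(\Inf_\mu)$ alone does not yield $\Omega(n^{-1})$ gap for unbounded $\Delta$ without either an entropic/tensorization input or the field-dynamics boost, and the boost needs the very base case at issue.

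In short: the approach you outline is architecturally the same as the paper's surrounding argument, which is exactly why it cannot stand on its own — the lemma it is trying to establish is the irreducible ingredient of that architecture. To make the proposal into a proof you would need to spell out a hardcore analogue of \Cref{lem:easy-matching-2}: a leaf-peeling induction proving $\Var{f} \le C \sum_{v\in V}\mu[\Var[v]{f}]$ with $C$ independent of $n$ and $\Delta$ for $\lambda$ small.
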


By a similar argument, we can show that the Glauber dynamics for monomer-dimer model on trees also has an optimal spectral gap when fugacity $\lambda$ is small enough.

\begin{lemma}\label{lem:easy-matching}
  Let $\mu$ be a Gibbs distribution of the monomer-dimer model on an arbitrary tree with fugacity $\lambda \le 0.1$. The spectral gap of Glauber dynamics is at least $\frac{1}{Cn}$ for some constant $C$. 
\end{lemma}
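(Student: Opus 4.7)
My plan is to adapt the classical tree-recursion argument of Martinelli--Sinclair--Weitz~\cite{martinelli2003ising,martinelli2004fast} to the monomer-dimer model at small fugacity. The key observation is that at $\lambda \le 0.1$ every marginal ratio $R_e = \mu_e/\mu_{\bar e}$ is bounded above by $\lambda$ under any boundary condition, so strong spatial mixing (SSM) holds with a contraction rate depending only on $\lambda$, not on the maximum degree. This degree-independence is exactly what makes the argument survive on unbounded-degree trees, and it is what distinguishes the regime $\lambda \le 0.1$ from larger fugacities where one really needs the more delicate approximate-inverse machinery of \Cref{sec:proof-outline}.

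First I would establish SSM. For the monomer-dimer model on a rooted tree the recursion for $R_e$ at edge $e=(u,v)$ has the form $R_e = \lambda\big/\bigl((1+\sum_{f}R_f)(1+\sum_{g}R_g)\bigr)$, where $f$ and $g$ range over sibling edges at the two endpoints of $e$, with child subtree contributions pushed into the $R_f, R_g$. Since every $R_f \le \lambda \le 0.1$, differentiating the recursion and bounding the Jacobian in $\ell_1$ yields a contraction of some rate $r = r(\lambda) < 1$ uniformly in the number of siblings, and hence in the degree. Iterating this contraction along a path of length $d$ produces the standard SSM estimate $|\mathrm{Pr}[f\mid\sigma] - \mathrm{Pr}[f\mid\tau]| \le C\, r^{\mathrm{dist}(e_0,f)}$ for boundary conditions $\sigma,\tau$ differing only on a single edge $e_0$. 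Second, I would feed this into the standard block-dynamics recursion: choose a centroid vertex $v^*$ whose removal partitions the tree into subtrees of size at most $2n/3$, group them into two blocks $A$ and $B$, and consider the two-block dynamics alternately resampling each block given the other. SSM implies that this block dynamics has spectral gap $\Omega_\lambda(1)$ by a disagreement-percolation argument; the standard factorization comparison with single-site Glauber then yields
\[
\mathrm{gap}(\text{Glauber on }T) \;\ge\; \Omega_\lambda(1)\cdot \min\{\mathrm{gap}(\text{Glauber on }A),\,\mathrm{gap}(\text{Glauber on }B)\},
\]
and the recursion $g(n) \ge c\cdot g(2n/3)$ solves to $g(n) \ge 1/(Cn)$, completing the argument.

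The main obstacle I expect is the SSM Jacobian analysis itself. Unlike the hardcore recursion, where each node contributes a single product of child terms, the monomer-dimer recursion pools sibling contributions from both endpoints of $e$, so a naive estimate risks losing a factor proportional to the degree. The saving is precisely that $R_f \le \lambda \le 0.1$ uniformly, so each sum $\sum R_f$ is absorbed into the denominator $1 + \sum R_f$ in a way that dampens rather than amplifies perturbations; a careful $\ell_1$-to-$\ell_1$ bound on the Jacobian in the spirit of Weitz's~\cite{weitz2006counting} self-avoiding walk tree analysis should then close the estimate with a degree-independent contraction rate whenever $\lambda \le 0.1$.
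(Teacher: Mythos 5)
Your proposal takes a genuinely different route from the paper: you propose strong spatial mixing plus a centroid-based block-dynamics recursion (in the spirit of Martinelli--Sinclair--Weitz), while the paper proves the result directly via approximate tensorization of variance (\Cref{lem:easy-matching-2}), by induction on the number of edges, peeling off one vertex whose children are all leaves at a time, using the law of total variance (\Cref{eq:total-variance}) together with explicit local-variance comparisons (\Cref{prop:var-decompose}). The paper's route produces an explicit uniform constant $C\le 6(1+\lambda)$ in the tensorization inequality $\Var{f}\le C\sum_e\mu[\Var[e]{f}]$, which by \Cref{prop:tensor-sg} immediately yields a spectral gap of $\Omega(1/n)$.

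However, your argument as written has a concrete gap in the final step. You assert that the recursion $g(n)\ge c\cdot g(2n/3)$ with $c=\Omega_\lambda(1)<1$ ``solves to $g(n)\ge 1/(Cn)$''. It does not: iterating this recursion for $\log_{3/2}n$ levels gives $g(n)\ge c^{\log_{3/2}n}=n^{\log_{3/2}c}$, which is $\Omega(n^{-1})$ only when $c\ge 2/3$, and nothing in your sketch controls the implied constant that precisely. So while this scheme would give a polynomial bound on the gap, it does not, as stated, give the asymptotically optimal $\Omega(n^{-1})$ claimed in the lemma. The way to fix this is to replace the centroid recursion (which loses a constant factor per level of a depth-$\Theta(\log n)$ recursion tree) with a variance-factorization inequality that is additive across vertices and has a uniform constant --- which is exactly the form of the paper's inductive invariant. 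Additionally, you should beware that the monomer-dimer model is an edge model, so the two blocks created by removing the centroid vertex $v^*$ are coupled precisely through the matching constraint at $v^*$; if $v^*$ has unbounded degree, the claim that SSM yields an $\Omega_\lambda(1)$ gap for the two-block dynamics by a disagreement-percolation argument cannot be taken off the shelf from the vertex-model literature and would need to be argued. Finally, a smaller issue: the stated form of the marginal-ratio recursion $R_e=\lambda/\bigl((1+\sum_f R_f)(1+\sum_g R_g)\bigr)$ is not quite the usual rooted-tree recursion for matchings, which for an edge $e=(u,v)$ with $u$ the parent conditions only on the subtree below $v$ and so has only a single pooled sum in the denominator; the point-to-point decay bound $\abs{\Psi(e_0,f)}\le\lambda^{\mathrm{dist}(e_0,f)}$ you want is true, but follows more cleanly from the product-distance-matrix structure (\Cref{prop:product-distance}) together with the fact that the adjacent-edge influence has magnitude $\mu_{g'}^{\overline{g}}\le\lambda/(1+\lambda)$.
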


The proof of~\Cref{lem:easy-matching} is deferred to~\Cref{sec:sg-matching}.
We are ready to prove the spectral gap result of~\Cref{thm:matching-tree} and~\Cref{thm:hardcore-tree}.
\begin{proof}[Proof of the spectral gap result of~\Cref{thm:matching-tree} and~\Cref{thm:hardcore-tree}]
  The optimal spectral gap for monomer-dimer model on tree with constant fugacity $\lambda>0$ follows from~\Cref{lem:easy-matching},~\Cref{lem:boost-SI} and~\Cref{thm:matching-SI-large-girth} by taking $\theta = \min\set{1,\frac{1}{10\lambda}}$ in~\Cref{lem:boost-SI}.
  
  Similarly, the optimal spectral gap for hardcore model on tree with $\lambda < (1-\delta) \e^2$ follows from~\Cref{lem:SI-hardcore-small},~\Cref{lem:boost-SI} and the spectral independence result of~\Cref{thm:hardcore-tree} by taking $\theta = \e^{-2}$ in~\Cref{lem:boost-SI}.
  
  Note that for each vertex $u \in V$, 
  \begin{align*}
    \frac{\lambda}{1+\lambda} \tp{\frac{1}{1+\lambda}}^\Delta \le \mu_u \le \frac{\lambda}{1+\lambda}.
  \end{align*}
  When $\lambda\le \frac{1}{2\Delta}$, it is known that Glauber dynamics mixes in $O(n \log n)$. When$\frac{1}{2\Delta} < \lambda < \e^2$, $\min \set{\mu_u,1-\mu_u} \ge \exp(-10\Delta)$. Therefore, by~\Cref{lem:optimal-mixing}, the Glauber dynamics mixes in $O_{\Delta,\delta}(n \log n)$ when maximum degree $\Delta$ is constant.
\end{proof}

\subsection{Optimal spectral gap for monomer-dimer model on tree}\label{sec:sg-matching}

Before delving into the proof of~\Cref{lem:easy-matching}, we introduce the approximate tensorization of variance.

\begin{definition}[local variance]
  Let $\mu$ be a distribution over $2^U$. For any subset $S \subseteq U$ and function $f:2^U \to \mathbb{R}$, 
  $\mu_S(f)$ is a function supported on $2^U$ and defined as
  \begin{align}\label{eq:def-mu-S}
    \forall W \subseteq U, \quad [\mu_S(f)](W)
    = \E[R \sim \mu]{f(R) \mid R \cap \overline{S} = W \cap \overline{S}},
  \end{align}
  where $\overline{S}$ is the complement of $S$. Furthermore, define the local variance $\Var[S]{f}$ as follows:
  \begin{align}\label{eq:def-var-S}
    \Var[S]{f} = \mu_S(f^2) - \tp{\mu_S(f)}^2.
  \end{align}
  When $S = U$, $\mu_S(f)$ and $\Var[S]{f}$ is a constant, and we omit the script in this case. 
  If $S = \{v\}$, we will write $\mu_v(f)$ and $\Var[v]{f}$ instead for simplicity.
\end{definition}
\begin{definition}[approximate tensorization of variance~\cite{caputo2015approximate}]
  Let $\mu$ be a distribution over $2^U$. The distribution satisfies $C$-approximate tensorization of variance for some constant $C > 0$, if for all $f:2^U \to \mathbb{R}$,
  \begin{align*}
    \Var[]{f} \le C \sum_{u \in U} \mu[\Var[u]{f}].
  \end{align*}
\end{definition}

It is known that the approximate tensorization of variance relates closely to the spectral gap of Glauber dynamics.
\begin{proposition}[\cite{caputo2015approximate}]\label{prop:tensor-sg}
  Let $\mu$ be a distribution over $2^U$. If the distribution satisfies $C$-approximate tensorization of variance for some constant $C>0$, then the spectral gap of Glauber dynamics on $\mu$ is at least $\frac{1}{C|U|}$.
\end{proposition}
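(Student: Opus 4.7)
The plan is to use the variational (Dirichlet form) characterization of the spectral gap of a reversible chain, identify that Dirichlet form explicitly for the Glauber dynamics in terms of the local variances $\Var[u]{f}$, and then invoke the $C$-approximate tensorization hypothesis directly.

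First, I would recall that since $\mu$ is reversible for the Glauber dynamics $P$, its spectral gap admits the Rayleigh quotient characterization
\begin{equation*}
  \lambda^{\mathrm{GD}}_{\mathrm{gap}}(\mu) \;=\; \inf_{f \,:\, \Var[\mu]{f} > 0} \frac{\+E_P(f, f)}{\Var[\mu]{f}},
\end{equation*}
where $\+E_P(f, f) := \tfrac{1}{2}\sum_{X, Y \subseteq U} \mu(X)\, P(X, Y)\,(f(X) - f(Y))^2$ is the usual Dirichlet form.

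Next, I would compute $\+E_P$ explicitly. Since one step of the Glauber dynamics picks $u \in U$ uniformly at random and then resamples the value at $u$ from the conditional distribution of $\mu$ given the configuration on $U \setminus \set{u}$, we may decompose $P = \frac{1}{\abs{U}} \sum_{u \in U} P_u$, where $P_u$ is the heat-bath kernel at site $u$. Only pairs $(X, Y)$ agreeing on $U \setminus \set{u}$ have positive transition probability under $P_u$, and along each fibre $\set{R \subseteq U : R \cap (U \setminus \set{u}) = w}$ the kernel resamples from $\mu$ conditioned on that fibre. Grouping the Dirichlet sum by fibres and using the elementary identity $\sum_{X, Y} \nu(X)\nu(Y)(f(X) - f(Y))^2 = 2\Var[\nu]{f}$ for any distribution $\nu$, direct bookkeeping (and the definitions~\eqref{eq:def-mu-S} and~\eqref{eq:def-var-S}) gives
\begin{equation*}
  \+E_{P_u}(f, f) \;=\; \E[\mu]{\Var[u]{f}}, \qquad \text{hence} \qquad \+E_P(f, f) \;=\; \frac{1}{\abs{U}} \sum_{u \in U} \E[\mu]{\Var[u]{f}}.
\end{equation*}

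Finally, the $C$-approximate tensorization hypothesis asserts exactly that $\Var[\mu]{f} \leq C \sum_{u \in U} \E[\mu]{\Var[u]{f}} = C\,\abs{U} \cdot \+E_P(f, f)$ for every $f : 2^U \to \^R$. Substituting this into the Rayleigh quotient above yields $\+E_P(f,f)/\Var[\mu]{f} \geq \frac{1}{C\abs{U}}$ for every non-constant $f$, and taking the infimum gives $\lambda^{\mathrm{GD}}_{\mathrm{gap}}(\mu) \geq \frac{1}{C\abs{U}}$, as desired. I do not anticipate a genuine obstacle; the whole proof is bookkeeping. The one identification worth verifying carefully is $\+E_{P_u}(f,f) = \E[\mu]{\Var[u]{f}}$, where the factor $\tfrac{1}{2}$ in the Dirichlet form combines with the factor $2$ from the fibre-wise variance expansion, and where the uniform site selection is what produces the overall factor $1/\abs{U}$ in the final bound.
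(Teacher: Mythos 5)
Your proposal is correct. The paper does not prove this proposition at all---it is quoted directly from \cite{caputo2015approximate}---and your argument is precisely the standard derivation behind that citation: the variational characterization of the gap for the reversible heat-bath chain, the decomposition $P = \frac{1}{|U|}\sum_u P_u$ with the fibre-wise identity $\mathcal{E}_{P_u}(f,f) = \oE_\mu\left[\Var[u]{f}\right]$ (where the $\tfrac12$ in the Dirichlet form cancels the $2$ from the two-point variance expansion, as you note), and then the tensorization inequality applied to the Rayleigh quotient. The one point worth stating explicitly if you write this up is which notion of spectral gap is meant: your bound controls $1-\lambda_2$; since each $P_u$ is a conditional-expectation (hence positive semidefinite) operator, $P$ itself is positive semidefinite, so $1-\lambda_2$ coincides with the absolute spectral gap and no laziness argument is needed.
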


Finally, we introduce several properties of the local variance.
\begin{proposition}[\cite{efthymiou2023optimal}]\label{eq:smaller}
  Let $\mu$ be a Gibbs distribution over $2^E$ on graph $G=(V,E)$. For edge subset $S,T \subseteq E$ with $\mathrm{dist}(S,T) \ge 2$, it holds that
  \begin{align*}
    \mu[\Var[S]{\mu_T[f]}] \le \mu[\Var[S]{f}].
  \end{align*}
\end{proposition}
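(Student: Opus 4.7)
The plan is to reduce the inequality to a Jensen-type estimate on an independent product space, using the Markov property of the Gibbs distribution.

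\textbf{Markov property and collapse of conditioning.} Let $R := E \setminus (S \cup T)$. Since $\mathrm{dist}(S, T) \ge 2$ forces $S \cap T = \emptyset$, we have a disjoint decomposition $E = S \sqcup T \sqcup R$; write a configuration as $X = (X_S, X_T, X_R)$. The first observation is that because no edge of $S$ shares an endpoint with any edge of $T$, the Gibbs distribution satisfies $X_S \perp X_T \mid X_R$: any local interaction in the Hamiltonian coupling an edge of $S$ with one of $T$ would have to act on a common vertex, which is ruled out by $\mathrm{dist}(S, T) \ge 2$. Using this, the conditional distribution of $X_S$ given $(X_T, X_R)$ coincides with that given $X_R$, so $\mathrm{Var}[\,\cdot\mid X_T, X_R] = \mathrm{Var}[\,\cdot\mid X_R]$. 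Moreover, $\mu_T f$ depends only on $(X_S, X_R)$ and $\mu_T f(X_S, X_R) = \mathbb{E}_{X_T \mid X_R}[f(X_S, X_T, X_R)]$ (treating $X_S$ as a parameter). Consequently,
\[
  \mu[\Var[S]{\mu_T f}] = \mathbb{E}_{X_R}\!\bigl[\mathrm{Var}_{X_S \mid X_R}[\mu_T f(X_S, X_R)]\bigr],
\]
\[
  \mu[\Var[S]{f}] = \mathbb{E}_{X_R}\!\bigl[\mathbb{E}_{X_T \mid X_R}[\mathrm{Var}_{X_S \mid X_R}[f(X_S, X_T, X_R)]]\bigr].
\]

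\textbf{Pointwise Jensen on an independent product.} It now suffices to prove, pointwise in $X_R$,
\[
  \mathrm{Var}_{X_S \mid X_R}\!\bigl[\mathbb{E}_{X_T \mid X_R}[f]\bigr] \;\le\; \mathbb{E}_{X_T \mid X_R}\!\bigl[\mathrm{Var}_{X_S \mid X_R}[f]\bigr].
\]
Since $X_S$ and $X_T$ are independent given $X_R$, this is the standard ``variance of a mean is at most mean of variances'' inequality on a product space. I would close via an ANOVA decomposition: for fixed $X_R$, write $f = \alpha + g_S(X_S) + g_T(X_T) + h(X_S, X_T)$ with $\mathbb{E} g_S = \mathbb{E} g_T = 0$ and $\mathbb{E}[h \mid X_S] = \mathbb{E}[h \mid X_T] = 0$. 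A brief calculation, using the cross-term identity $\mathbb{E}[g_S h] = \mathbb{E}_{X_S}\bigl[g_S(X_S)\,\mathbb{E}_{X_T}[h(X_S, X_T)]\bigr] = 0$, yields LHS $= \mathbb{E}[g_S^2]$ and RHS $= \mathbb{E}[g_S^2] + \mathbb{E}[h^2]$, so the inequality holds with slack $\mathbb{E}[h^2] \ge 0$.

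The only substantive step is verifying the Markov property for the edge-model Gibbs distribution at hand. For the monomer--dimer model this is immediate, since the only coupling is the pairwise ``at most one matched edge per vertex'' constraint, which acts only on line-graph-adjacent edges and is therefore entirely absorbed into $R$ once $\mathrm{dist}(S, T) \ge 2$. The remaining bookkeeping is mechanical.
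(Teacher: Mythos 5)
Your proof is correct. Note first that the paper does not supply its own argument for this proposition — it is imported verbatim from~\cite{efthymiou2023optimal} — so there is no in-paper proof to compare against; what follows assesses your argument on its own terms. The chain of reasoning is sound: an edge-model Gibbs measure factorizes as $\mu(x)\propto\prod_{v\in V}\phi_v(x_{E_v})$, and $\mathrm{dist}(S,T)\ge 2$ means no vertex is incident both to an edge of $S$ and to an edge of $T$, so each factor $\phi_v$ depends on at most one of the blocks $x_S,x_T$. This gives $\mu(x)\propto F(x_S,x_R)\,G(x_T,x_R)$ and hence $X_S,X_T$ conditionally independent given $X_R$; importantly, this factorization argument (rather than a generic Markov-property citation) is the right one here, since the monomer--dimer measure has hard constraints and is not strictly positive, but factorization still implies the conditional independence. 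The collapse of both $\mu[\Var[S]{f}]$ and $\mu[\Var[S]{\mu_T[f]}]$ to $X_R$-averages over the conditionally independent product, and the pointwise ANOVA computation — in particular the cross-term cancellation $\mathbb{E}[g_S h]=0$ — are all correct and close the argument.

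A slicker way to package the second half is operator-theoretic: $\mu_S$ and $\mu_T$ are $L^2(\mu)$-orthogonal projections, the conditional independence you establish implies they commute (indeed $\mu_S\mu_T=\mu_T\mu_S=\mu_{S\cup T}$), and $\mu[\Var[S]{g}]=\norm{(I-\mu_S)g}_{L^2(\mu)}^2$. Then
\begin{align*}
\norm{(I-\mu_S)\mu_T f}_{L^2(\mu)}=\norm{\mu_T(I-\mu_S)f}_{L^2(\mu)}\le\norm{(I-\mu_S)f}_{L^2(\mu)}
\end{align*}
immediately, because $\mu_T$ is a contraction. Both routes rest on the same Markov/factorization fact; yours is more hands-on and elementary, while the projection argument generalizes verbatim to any pair of commuting conditional-expectation operators.
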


\begin{proposition}[law of total variance]\label{eq:total-variance}
  Let $\mu$ be a Gibbs distribution over $2^E$ on graph $G = (V, E)$.
  For any $f: \Omega \to \^R$, and $S\subseteq E$, it holds that
  \begin{align*}
    \Var{f} &= \mu[\Var[S]{f}] + \Var{\mu_S[f]}.
  \end{align*}
\end{proposition}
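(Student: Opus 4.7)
The plan is to prove the law of total variance by straightforward algebraic manipulation starting from the definitions given in equations \eqref{eq:def-mu-S} and \eqref{eq:def-var-S}, using the tower property of conditional expectations as the main tool. The key observation is that $\mu_S(\cdot)$ is a conditional expectation operator, so it satisfies $\mu(\mu_S(g)) = \mu(g)$ for any function $g : 2^E \to \mathbb{R}$.

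First I would unfold the left-hand side as $\Var{f} = \mu(f^2) - (\mu(f))^2$. Then I would apply the tower property to the first term to get $\mu(f^2) = \mu(\mu_S(f^2))$. Next, I would rewrite $\mu_S(f^2)$ using the definition of local variance \eqref{eq:def-var-S}, namely $\mu_S(f^2) = \Var[S]{f} + (\mu_S(f))^2$, which yields
\begin{align*}
\mu(f^2) = \mu[\Var[S]{f}] + \mu[(\mu_S(f))^2].
\end{align*}
The second term on the right I would then recognize as $\mu[(\mu_S(f))^2] = \Var{\mu_S(f)} + (\mu(\mu_S(f)))^2 = \Var{\mu_S(f)} + (\mu(f))^2$, again by the tower property applied to $\mu_S(f)$.

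Putting the pieces together gives
\begin{align*}
\Var{f} = \mu(f^2) - (\mu(f))^2 = \mu[\Var[S]{f}] + \Var{\mu_S(f)} + (\mu(f))^2 - (\mu(f))^2,
\end{align*}
which simplifies to the desired identity. There is no real obstacle here: the only subtle point is to verify that the operator $\mu_S$ defined in \eqref{eq:def-mu-S} truly is a conditional expectation and hence obeys the tower property, but this is immediate from the definition, since $\mu_S(g)(W)$ depends only on $W \cap \overline{S}$ and equals the $\mu$-conditional expectation of $g$ given $R \cap \overline{S} = W \cap \overline{S}$. Thus $\mu(\mu_S(g)) = \mu(g)$ by averaging over $W \cap \overline{S} \sim \mu$.
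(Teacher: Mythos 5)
Your proof is correct: the paper states this proposition as the standard law of total variance and gives no proof of its own, and your derivation via the tower property $\mu[\mu_S(g)] = \mu[g]$ applied to $f^2$ and to $\mu_S(f)$ is exactly the standard argument, with each step matching the definitions in \eqref{eq:def-mu-S} and \eqref{eq:def-var-S}. The only (harmless) caveat is that $\mu_S(f)(W)$ is only meaningful for $W$ in the support of $\mu$, which suffices since all outer expectations are taken with respect to $\mu$.
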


By~\Cref{prop:tensor-sg}, to prove~\Cref{lem:easy-matching}, it suffices to show the following lemma.
\begin{lemma}\label{lem:easy-matching-2}
  Let $\mu$ be a Gibbs distribution for monomer-dimer model on tree $T=(V,E)$ rooted at a degree $1$ vertex $r \in V$ with fugacity $\*\lambda \in \mathbb{R}_{>0}^{E}$.
  If $\lambda_e \le 0.1$ for all $e \in E$, then for all $f \in \mathbb{R}^E$,
  \begin{align*}
    \Var[]{f} \le \sum_{e \in E} F_{T,e}(\lambda_e) \mu[\Var[e]{f}],
  \end{align*}
  where $F_{T,e}(x)$ is defined as
  \begin{align}\label{eq:def-F}
    F_{T,e}(x) = 
    \begin{cases}
      3(1+x) & \text{one of the endpoints in $e$ is a leaf,}\\
      6(1+x) & \text{otherwise.} 
    \end{cases}
  \end{align}
\end{lemma}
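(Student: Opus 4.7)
The plan is to prove \Cref{lem:easy-matching-2} by strong induction on the number of edges $|E|$. The base case $|E|=1$ is immediate: the tree has only the edge $e_0$, so $\Var{f}=\mu[\Var[e_0]{f}]$, and the coefficient $F_{T,e_0}(\lambda_{e_0})=3(1+\lambda_{e_0})\ge 1$ suffices.

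For the inductive step, let $e_0=(r,u)$ be the edge incident to the degree-1 root, and let $w_1,\dots,w_k$ be the other neighbors of $u$, with associated edges $f_i=(u,w_i)$ and subtrees $T_i$ rooted at $w_i$. First I would apply the law of total variance (\Cref{eq:total-variance}) with $S=\{e_0\}$:
\[
\Var{f} \;=\; \mu[\Var[e_0]{f}] \;+\; \Var{\mu_{e_0}[f]}.
\]
The first term is absorbed by the $e_0$ contribution of the target sum, since $F_{T,e_0}(\lambda_{e_0})\ge 1$ (the edge $e_0$ is leaf-incident because $r$ is a leaf). The essential work is to bound the residual variance $\Var{\mu_{e_0}[f]}$.

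A direct computation shows that the marginal $\mu_{-e_0}$ of $\mu$ on $E\setminus\{e_0\}$ is again a monomer-dimer distribution, living on the tree $T-r$, with modified fugacity $\tilde\lambda_{f_i}=\lambda_{f_i}/(1+\lambda_{e_0})\le 0.1$ on the edges $f_i$ and $\tilde\lambda_e=\lambda_e$ on all other edges. Hence $\Var{\mu_{e_0}[f]}$ can be analyzed as a variance of the function $g:=\mu_{e_0}[f]$ under a smaller monomer-dimer model. When $k=1$, the tree $T-r$ has $u$ as a degree-1 root and the induction hypothesis applies directly to $\mu_{-e_0}$ and $g$. For $k\ge 2$, a further decomposition is needed: apply the law of total variance inside $\Var{\mu_{e_0}[f]}$ with $S'=\{f_1,\dots,f_k\}$, bound the resulting ``star'' variance at $u$ by direct tensorization using the at-most-one-edge constraint on the $f_i$'s, and then invoke induction on each subtree $\tilde{T}_i=(\{u\}\cup V(T_i),\{f_i\}\cup E(T_i))$, which does have $u$ as its degree-1 root. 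Throughout, \Cref{eq:smaller} is used to transfer local variances of $g$ under $\mu_{-e_0}$ back to local variances of $f$ under $\mu$, since $g$ is obtained from $f$ by averaging over the distant variable $e_0$.

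The main obstacle lies in the case $k\ge 2$: the subtrees $\tilde{T}_i$ share the vertex $u$, and the at-most-one-matching constraint at $u$ couples them so that the distribution does not factor cleanly across subtrees. The doubling in the coefficient $F_{T,e}(\lambda_e)=6(1+\lambda_e)$ for internal edges (versus $3(1+\lambda_e)$ for leaf-incident edges) is precisely what lets the induction close: each internal edge is effectively charged twice---once when it plays the root-side role in the recursion from one endpoint, and once from the other---while a leaf-incident edge is charged only once, because one of its endpoints is a dead end. The extra $(1+\lambda_e)$ factor absorbs the loss from the fugacity perturbation $\tilde\lambda_{f_i}=\lambda_{f_i}/(1+\lambda_{e_0})$ and from passing between $\mu_{-e_0}$ and $\mu$, and the standing hypothesis $\lambda_e\le 0.1$ ensures that the accumulated constant remains below the threshold $3$ (respectively $6$) at every level of the recursion.
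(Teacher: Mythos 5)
Your approach strips the tree from the \emph{root} (removing the root edge $e_0=(r,u)$ via one application of the law of total variance, then recursing into the subtrees hanging off $u$), whereas the paper strips from the \emph{leaves}: it finds a deepest internal vertex $u$ whose children are all leaves, removes those leaf edges $H$, and recurses on the single remaining tree $T^\star$ with $u$ now a leaf. This is not a cosmetic difference, and I believe your version has a genuine gap in the case $k\ge 2$.

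After you remove $e_0$, the remaining graph $T-r$ splits at $u$ into the subtrees $T_1,\dots,T_k$. You acknowledge that the at-most-one-edge constraint at $u$ couples them, but your resolution --- ``apply the law of total variance with $S'=\{f_1,\dots,f_k\}$, bound the star variance, then invoke induction on each subtree $\tilde T_i$'' --- does not close the gap. After the second LTV the residual is $\Var[\mu_{-e_0}]{\mu_{-e_0,S'}[g]}$, a variance of a function of $\bigcup_i E(T_i)$ under the marginal of $\mu_{-e_0}$ on $\bigcup_i E(T_i)$. That marginal does not factor as a product over $i$: its weight is proportional to $\bigl(\prod_i\lambda(M_i)\bigr)\cdot\bigl(1+\lambda_{e_0}+\sum_{i:\,w_i\notin M_i}\lambda_{f_i}\bigr)$, and the parenthetical factor couples the subtrees. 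So the induction hypothesis, which addresses a single monomer-dimer tree, does not apply separately to each $\tilde T_i$, and no further decomposition of the residual variance into per-subtree contributions is justified. The paper sidesteps precisely this by never disconnecting the tree: peeling the leaf cluster $H$ at a deepest internal vertex keeps a single connected tree $T^\star$ rooted at $r$.

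A second, more local, problem: your plan to ``use \Cref{eq:smaller} to transfer local variances of $g$ back to local variances of $f$'' fails for the edges $f_i$, which are adjacent to $e_0$ ($\mathrm{dist}(e_0,f_i)=1<2$), so \Cref{eq:smaller} gives nothing there. Comparing $\mu_{-e_0}[\Var[f_i]{\mu_{e_0}[f]}]$ with $\mu[\Var[f_i]{f}]$ is exactly the nontrivial content of the paper's \Cref{prop:var-decompose} (specifically \eqref{eq:decompose-e}), and an analogous explicit computation would be needed in your root-stripping scheme. Without both that computation and a valid resolution of the subtree coupling, the $k\ge 2$ step is not a proof, and the coefficient bookkeeping (why exactly $6(1+\lambda_e)$ on internal edges, $3(1+\lambda_e)$ on leaf edges, and where the $\lambda_e\le 0.1$ threshold enters) cannot be verified. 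I would suggest adopting the paper's leaf-stripping induction, which avoids disconnection entirely, and then proving the two quantitative variance inequalities of \Cref{prop:var-decompose}.
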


\begin{proof}
The proof of~\Cref{lem:easy-matching-2} follows from the method in~\cite{efthymiou2023optimal}.
  We will prove by induction on the size of tree $T=(V,E)$ rooted at a degree $1$ vertex $r$.
  The induction basis, where $|E|=1$, holds trivially. For any tree $T$ with size $|E| > 1$, there exists a vertex $u \in V \setminus \{r\}$ such that $|N(u)| > 1$ and all children of $u$ are leaves, as the degree of root $r$ is $1$.
  Let $H=\{h_1,h_2,\ldots,h_d\}$ be the edges connecting $u$ and its children, and $e$ be the edge connecting $u$ and its parent.
  We define the tree $T^\star=(V^\star,E^\star)$ as tree $T$ by removing all children of $u$, and $\*\lambda^\star \in \mathbb{R}_{>0}^{E^\star}$ as follows:
  \begin{align}\label{eq:lambda-star}
    \forall g \in E^\star,\quad  \lambda^\star_g = 
    \begin{cases}
      \frac{\lambda_e}{1+\lambda_H} & g = e,\\
      \lambda_g & \text{otherwise,}
    \end{cases}
  \end{align}
  where $\lambda_H = \sum_{i=1}^d \lambda_{h_i}$ for simplicity of notation.
  Finally, let $\mu^\star$ be the Gibbs distribution for monomer-dimer model on tree $T^\star$ with fugacity $\*\lambda^\star$.
  It can be verified that $\mu^\star$ is the marginal distribution of $\mu$ on $E \setminus H$, i.e.,
  \begin{align}\label{eq:rel-mu-star}
    \forall S \subseteq E^\star, \quad \mu^\star(S) = \mu(S) + \sum_{i=1}^d \mu(S \cup \{h_i\}) =
    \begin{cases}
      \mu(S)& e \in S,\\
      \tp{1+\lambda_H} \mu(S) & \text{otherwise.}
    \end{cases}
  \end{align} 
  For any function $f:2^E \to \mathbb{R}$, define $f^\star:2^{E^\star} \to \mathbb{R}$ as follows:
  \begin{align}
    \nonumber \forall S \subseteq E^\star, \quad f^\star(S) = \mu_H[f](S) &= \frac{\mu(S) f(S) + \sum_{i=1}^d \mu(S \cup \{h_i\}) f(S \cup \{h_i\})}{\mu(S) + \sum_{i=1}^d \mu(S \cup \{h_i\})}\\
    \label{eq:rel-f-star} &= 
    \begin{cases}
      f(S) & e \in S,\\
      \frac{f(S) + \sum_{i=1}^d \lambda_{h_i} f(S \cup \{h_i\})}{1+\lambda_H} & \text{otherwise.}
    \end{cases} 
  \end{align} 
  
  By~\Cref{eq:total-variance} and the induction hypothesis, it holds that
  \begin{align*}
    \Var[\mu]{f} &= \mu[\Var[H]{f}] + \Var[]{\mu_H(f)}\\
    &=\mu[\Var[H]{f}] + \Var[]{f^\star}\\
    &\le \mu[\Var[H]{f}] + \sum_{g \in E^\star} F_{T^\star,g}(\lambda^\star_g) \mu^\star[\Var[g]{f^\star}]\\
    &\le \mu[\Var[H]{f}] + \sum_{g \in E^\star \setminus \{e\}} F_{T,g}(\lambda_g) \mu[\Var[g]{f}] + F_{T^\star,e}(\lambda^\star_e) \mu^\star[\Var[e]{f^\star}],
  \end{align*}
  where the last inequality follows from~\Cref{eq:smaller} that $\mu^\star[\Var[g]{f^\star}] = \mu[\Var[g]{\mu_H[f]}] \le \mu[\Var[g]{f}]$ and $F_{T^\star,g} = F_{T,g}$ for all edge $g \in E^\star \setminus \{e\}$.
  Therefore, it suffices to show that
  \begin{align}\label{eq:require-ineq}
    \mu[\Var[H]{f}] + F_{T^\star,e}(\lambda_e^\star) \mu^\star[\Var[e]{f^\star}] \le F_{T,e}(\lambda_e) \mu[\Var[e]{f}] + \sum_{i=1}^d F_{T,h_i}(\lambda_{h_i})\mu[\Var[h_i]{f}]. 
  \end{align} 

  In order to prove \eqref{eq:require-ineq}, we have the following observations.
  \begin{proposition} \label{prop:var-decompose}
    For every $f : \Omega \to \^R$, it holds that
    \begin{align}
      \label{eq:decompose-H} \mu[\Var[H]{f}]
      &\leq \sum_{i=1}^d 2(1 + \lambda_{h_i}) \mu[\Var[h_i]{f}] \\
      \label{eq:decompose-e} \mu^\star[\Var[e]{f^\star}]
      &\leq \frac{2}{1 + \lambda^\star_e} \tp{(1 + \lambda_e)\mu[\Var[e]{f}] + \sum_{i=1}^d \lambda_e\mu[\Var[h_i]{f}]}
    \end{align}
  \end{proposition}
  \Cref{prop:var-decompose} is obtained by a straightforward calculation, and its proof is given at the last part of this section.
  According to \Cref{prop:var-decompose}, we have
  \begin{align*}
    \mu[\Var[H]{f}] &+ F_{T^\star, e}(\lambda^\star_e) \mu^\star[\Var[e]{f^\star}] \\
    &\leq F_{T^\star, e}(\lambda^\star_e) \frac{2(1 + \lambda_e)}{1 + \lambda^\star_e} \mu[\Var[e]{f}] + \sum_{i=1}^d \tp{F_{T^\star, e}(\lambda^\star_e) \frac{2\lambda_e}{1 + \lambda^\star_e} + 2(1 + \lambda_{h_i})} \mu[\Var[h_i]{f}] \\
    &\overset{\eqref{eq:def-F}}{=} 6(1 + \lambda_e) \mu[\Var[e]{f}] + \sum_{i=1}^d(6\lambda_e + 2(1 + \lambda_{h_i})) \mu[\Var[h_i]{f}]\\
    &\overset{\lambda_e \leq 0.1}{\leq} 6(1 + \lambda_e) \mu[\Var[e]{f}] + \sum_{i=1}^d 3(1 + \lambda_{h_i}) \mu[\Var[h_i]{f}]\\
    &\overset{\eqref{eq:def-F}}{=} F_{T,e}(\lambda_e) \mu[\Var[e]{f}] + \sum_{i=1}^d F_{T,h_i}(\lambda_{h_i})\mu[\Var[h_i]{f}].
  \end{align*}
  This proves \eqref{eq:require-ineq} and finishes the proof.
 \end{proof}


  Now we only left to prove \Cref{prop:var-decompose}, which is indeed \eqref{eq:decompose-H} and \eqref{eq:decompose-e}.
  In order to do so, we will calculate $\mu[\Var[H]{f}]$, $\mu^\star[\Var[e]{f^\star}]$, $\mu[\Var[h_i]{f}]$, $\mu[\Var[e]{f}]$, respectively.
  Let $\+I_T$ be the set of matchings in $T$. By a straightforward calculation,
  \begin{align}
    \nonumber \mu[\Var[e]{f}] &= \sum_{S \subseteq E \setminus H\setminus \set{e}} \frac{\mu(S)\mu(S \cup \{e\})}{\mu(S) + \mu(S \cup \{e\})} \tp{f(S) - f(S \cup \{e\})}^2\\
    \label{eq:var-e} &=\frac{\lambda_e}{1+\lambda_e}\sum_{\substack{S\subseteq E\setminus H\setminus \set{e}\\ S \cup \{e\} \in \+I_T}} \mu(S) \tp{f(S) - f(S \cup \set{e})}^2, \\
    \nonumber \mu[\Var[h_i]{f}] &= \frac{\lambda_{h_i}}{1+\lambda_{h_i}} \sum_{S\subseteq E\setminus H\setminus \set{e}} \mu(S) \tp{f(S) - f(S \cup \{h_i\} )}^2 \\
    \label{eq:var-hi}&\ge \frac{\lambda_{h_i}}{1+\lambda_{h_i}} \sum_{\substack{S\subseteq E\setminus H\setminus \set{e}\\ S \cup \{e\} \in \+I_T}} \mu(S) \tp{f(S) - f(S \cup \{h_i\} )}^2,
  \end{align}

  \begin{proof}[Proof of \eqref{eq:decompose-H}]
    For simplicity, let $\{h_0\} = \emptyset$ and $\lambda_{h_0} = 1$. By definition,
    \begin{align*}
      \mu[\Var[H]{f}] &= \frac{1}{2} \sum_{S\subseteq E^\star\setminus\set{e}} \mu^\star(S) \sum_{i=0}^d \sum_{j=0}^d \frac{\mu(S \cup \{h_i\})}{\mu^\star(S)} \frac{\mu(S \cup \{h_j\})}{\mu^\star(S)} \tp{f(S \cup \{h_i\}) - f(S \cup \{h_j\})}^2\\
                      &\overset{\eqref{eq:rel-mu-star}}{=} \frac{1}{2(1+\lambda_H)} \sum_{S\subseteq E\setminus H\setminus \set{e}} \mu(S) \sum_{0 \le i,j \le d} \lambda_{h_i} \lambda_{h_j} \tp{f(S \cup \{h_i\}) - f(S \cup \{h_j\})}^2\\
    &\overset{(*)}{\le} 2 \sum_{S\subseteq E\setminus H\setminus \set{e}} \mu(S) \sum_{i=1}^d \lambda_{h_i} \tp{f(S \cup \{h_i\}) - f(S)}^2\\
    &= \sum_{i=1}^d 2(1+\lambda_{h_i}) \frac{\lambda_{h_i}}{1+\lambda_{h_i}} \sum_{S\subseteq E\setminus H\setminus \set{e}} \mu(S) \tp{f(S) - f(S \cup \{h_i\} )}^2\\
    &\overset{(+)}{=} \sum_{i=1}^d 2(1+\lambda_{h_i}) \mu[\Var[h_i]{f}],
  \end{align*}
  where $(+)$ follows from \eqref{eq:var-hi} and $(*)$ follows from that
  \begin{align*}
    \tp{f(S \cup \{h_i\}) - f(S \cup \{h_j\})}^2 &\le 2 \tp{f(S \cup \{h_i\}) - f(S)}^2 + 2 \tp{f(S \cup \{h_j\}) - f(S)}^2. \qedhere
  \end{align*}
\end{proof}
\begin{proof}[Proof of \eqref{eq:decompose-e}]
  Similar to the proof of \eqref{eq:decompose-H}, we have
  \begin{align}
    \nonumber 
    \mu^\star[\Var[e]{f^\star}]
    &= \frac{\lambda^\star_e}{1+\lambda^\star_e}\sum_{\substack{S\subseteq E^\star\setminus \set{e}\\ S \cup \{e\} \in \+I_{T^\star} }} \mu^\star(S) \tp{f^\star(S) - f^\star(S \cup \set{e})}^2\\
    \label{eq:decompose-e-mid}
    {\scriptsize \text{by \eqref{eq:lambda-star}\eqref{eq:rel-mu-star}\eqref{eq:rel-f-star}}}
    &=\frac{\lambda_e}{1 + \lambda_e^\star} \sum_{\substack{S\subseteq E\setminus H\setminus \set{e}\\ S \cup \{e\} \in \+I_T}} \mu(S) \tp{\sum_{i=1}^d\frac{\lambda_{h_i}}{1 + \lambda_H}(f(S\cup h_i) - f(S)) + f(S) - f(S\cup \set{e})}^2.
  \end{align}
  By applying Cauchy's inequality twice, we have
  \begin{align}
    \nonumber 
    &\tp{\sum_{i=1}^d\frac{\lambda_{h_i}}{1 + \lambda_H}(f(S\cup \set{h_i}) - f(S)) + f(S) - f(S\cup \set{e})}^2 \\
    \nonumber 
    &\quad \leq 2 \tp{\sum_{i=1}^d\frac{\lambda_{h_i}}{1 + \lambda_H}(f(S\cup \set{h_i}) - f(S))}^2 + 2\tp{f(S) - f(S \cup \set{e})}^2 \\
    \label{eq:decompose-e-cauchy}
    &\quad \leq 2 \sum_{i=1}^d\frac{\lambda_{h_i}}{1 + \lambda_H}(f(S\cup \set{h_i}) - f(S))^2 + 2\tp{f(S) - f(S \cup \set{e})}^2,
  \end{align}
  where the last inequality follows from Cauchy's inequality and the fact that $\sum_{i=1}^d \frac{\lambda_{h_i}}{1 + \lambda_H} \leq 1$.

  Combining \eqref{eq:decompose-e-mid} and \eqref{eq:decompose-e-cauchy}, we have
  \begin{align}
    \mu^\star[\Var[e]{f^\star}]
    \nonumber 
    &\leq \frac{2\lambda_e}{1 + \lambda^\star_e} \sum_{\substack{S\subseteq E\setminus H\setminus \set{e}\\ S \cup \{e\} \in \+I_T}} \mu(S) (f(S) - f(S\cup \set{e}))^2 \\
    \label{eq:decompose-e-final}
    &\quad + \frac{2\lambda_e}{1 + \lambda^\star_e} \sum_{i=1}^d \frac{\lambda_{h_i}}{1 + \lambda_{h_i}} \sum_{\substack{S\subseteq E\setminus H\setminus \set{e}\\ S \cup \{e\} \in \+I_T}} \mu(S) (f(S \cup \set{h_i}) - f(S))^2,
  \end{align}
  where we use the fact that $\lambda_{h_i} \leq \lambda_H$.
  Finally, we note that \eqref{eq:decompose-e} could be proved by combining \eqref{eq:decompose-e-final}, \eqref{eq:var-e}, and \eqref{eq:var-hi}.
\end{proof}

\bibliographystyle{alpha}
\bibliography{refs}

\newcommand{\etalchar}[1]{$^{#1}$}
\begin{thebibliography}{CLMM23}

\bibitem[AASV21]{anari2021fractionally}
Yeganeh Alimohammadi, Nima Anari, Kirankumar Shiragur, and Thuy{-}Duong Vuong.
\newblock Fractionally log-concave and sector-stable polynomials: counting planar matchings and more.
\newblock In {\em {STOC}}, pages 433--446, 2021.

\bibitem[ABTV23]{anari2023parallel}
Nima Anari, Callum Burgess, Kevin Tian, and Thuy{-}Duong Vuong.
\newblock Quadratic speedups in parallel sampling from determinantal distributions.
\newblock In {\em {SPAA}}, pages 367--377, 2023.

\bibitem[AJK{\etalchar{+}}21]{anari2021entropicII}
Nima Anari, Vishesh Jain, Frederic Koehler, Huy~Tuan Pham, and Thuy{-}Duong Vuong.
\newblock Entropic independence {II:} optimal sampling and concentration via restricted modified log-{S}obolev inequalities.
\newblock {\em CoRR}, abs/2111.03247, 2021.

\bibitem[AJK{\etalchar{+}}22]{anari2022entropic}
Nima Anari, Vishesh Jain, Frederic Koehler, Huy~Tuan Pham, and Thuy{-}Duong Vuong.
\newblock Entropic independence: optimal mixing of down-up random walks.
\newblock In {\em {STOC}}, pages 1418--1430. {ACM}, 2022.

\bibitem[AJK{\etalchar{+}}24]{anari2023universality}
Nima Anari, Vishesh Jain, Frederic Koehler, Huy~Tuan Pham, and Thuy-Duong Vuong.
\newblock Universality of spectral independence with applications to fast mixing in spin glasses.
\newblock In {\em {SODA}}, pages 5029--5056, 2024.

\bibitem[AL20]{alev2020improved}
Vedat~Levi Alev and Lap~Chi Lau.
\newblock Improved analysis of higher order random walks and applications.
\newblock In {\em STOC}, pages 1198--1211, 2020.

\bibitem[ALO20]{anari2020spectral}
Nima Anari, Kuikui Liu, and Shayan {Oveis Gharan}.
\newblock Spectral independence in high-dimensional expanders and applications to the hardcore model.
\newblock In {\em {FOCS}}, pages 1319--1330, 2020.

\bibitem[ALO21]{abdolazimi2021matrix}
Dorna Abdolazimi, Kuikui Liu, and Shayan {Oveis Gharan}.
\newblock A matrix trickle-down theorem on simplicial complexes and applications to sampling colorings.
\newblock In {\em {FOCS}}, pages 161--172, 2021.

\bibitem[ALOV19]{anari2019logconcave}
Nima Anari, Kuikui Liu, Shayan {Oveis Gharan}, and Cynthia Vinzant.
\newblock Log-concave polynomials {II}: {H}igh-dimensional walks and an {FPRAS} for counting bases of a matroid.
\newblock In {\em {STOC}}, pages 1--12, 2019.

\bibitem[ALV22]{anari2022optimal}
Nima Anari, Yang~P. Liu, and Thuy{-}Duong Vuong.
\newblock Optimal sublinear sampling of spanning trees and determinantal point processes via average-case entropic independence.
\newblock In {\em {FOCS}}, pages 123--134, 2022.

\bibitem[BCC{\etalchar{+}}22]{blanca2022mixing}
Antonio Blanca, Pietro Caputo, Zongchen Chen, Daniel Parisi, Daniel Stefankovic, and Eric Vigoda.
\newblock On mixing of {M}arkov chains: Coupling, spectral independence, and entropy factorization.
\newblock In {\em {SODA}}, pages 3670--3692, 2022.

\bibitem[BGK{\etalchar{+}}07]{bayati2007simple}
Mohsen Bayati, David Gamarnik, Dimitriy Katz, Chandra Nair, and Prasad Tetali.
\newblock Simple deterministic approximation algorithms for counting matchings.
\newblock In {\em {STOC}}, pages 122--127, 2007.

\bibitem[BKMP05]{Berger2005Glauber}
Noam Berger, Claire Kenyon, Elchanan Mossel, and Yuval Peres.
\newblock Glauber dynamics on trees and hyperbolic graphs.
\newblock {\em Probab. Theory Related Fields}, 131(3):311--340, 2005.

\bibitem[BS13]{bapat2013product}
R.~B. Bapat and S.~Sivasubramanian.
\newblock Product distance matrix of a graph and squared distance matrix of a tree.
\newblock {\em Appl. Anal. Discrete Math.}, 7(2):285--301, 2013.

\bibitem[BST10]{bhatnagar2010}
Nayantara Bhatnagar, Allan Sly, and Prasad Tetali.
\newblock Reconstruction threshold for the hardcore model.
\newblock In Maria~J. Serna, Ronen Shaltiel, Klaus Jansen, and Jos{\'{e}} D.~P. Rolim, editors, {\em {APPROX/RANDOM}}, pages 434--447, 2010.

\bibitem[CE22]{chen2022localization}
Yuansi Chen and Ronen Eldan.
\newblock Localization schemes: {A} framework for proving mixing bounds for {M}arkov chains (extended abstract).
\newblock In {\em {FOCS}}, pages 110--122, 2022.

\bibitem[CFYZ21]{chen2021rapid}
Xiaoyu Chen, Weiming Feng, Yitong Yin, and Xinyuan Zhang.
\newblock Rapid mixing of {G}lauber dynamics via spectral independence for all degrees.
\newblock In {\em {FOCS}}, pages 137--148, 2021.

\bibitem[CFYZ22]{chen2022optimal}
Xiaoyu Chen, Weiming Feng, Yitong Yin, and Xinyuan Zhang.
\newblock Optimal mixing for two-state anti-ferromagnetic spin systems.
\newblock In {\em {FOCS}}, pages 588--599, 2022.

\bibitem[CG24]{chen2023matching}
Zongchen Chen and Yuzhou Gu.
\newblock Fast sampling of {$b$}-matchings and {$b$}-edge covers.
\newblock In {\em {SODA}}, pages 4972--4987, 2024.

\bibitem[CGM19]{cryan2019modified}
Mary Cryan, Heng Guo, and Giorgos Mousa.
\newblock Modified log-{S}obolev inequalities for strongly log-concave distributions.
\newblock In {\em {FOCS}}, pages 1358--1370, 2019.

\bibitem[Che24]{chen2023combinatorial}
Zongchen Chen.
\newblock Combinatorial approach for factorization of variance and entropy in spin systems.
\newblock In {\em {SODA}}, pages 4988--5012, 2024.

\bibitem[CLMM23]{chen2023strong}
Zongchen Chen, Kuikui Liu, Nitya Mani, and Ankur Moitra.
\newblock Strong spatial mixing for colorings on trees and its algorithmic applications.
\newblock In {\em {FOCS}}, pages 810--845, 2023.

\bibitem[CLV20]{chen2020rapid}
Zongchen Chen, Kuikui Liu, and Eric Vigoda.
\newblock Rapid mixing of {G}lauber dynamics up to uniqueness via contraction.
\newblock In {\em {FOCS}}, pages 1307--1318, 2020.

\bibitem[CLV21a]{chen2021optimal}
Zongchen Chen, Kuikui Liu, and Eric Vigoda.
\newblock Optimal mixing of {G}lauber dynamics: entropy factorization via high-dimensional expansion.
\newblock In {\em {STOC}}, pages 1537--1550, 2021.

\bibitem[CLV21b]{chen2022holant}
Zongchen Chen, Kuikui Liu, and Eric Vigoda.
\newblock Spectral independence via stability and applications to {H}olant-type problems.
\newblock In {\em {FOCS}}, pages 149--160, 2021.

\bibitem[CLY23]{chen2023uniqueness}
Xiaoyu Chen, Jingcheng Liu, and Yitong Yin.
\newblock Uniqueness and rapid mixing in the bipartite hardcore model (extended abstract).
\newblock In {\em {FOCS}}, pages 1991--2005, 2023.

\bibitem[CMT15]{caputo2015approximate}
Pietro Caputo, Georg Menz, and Prasad Tetali.
\newblock Approximate tensorization of entropy at high temperature.
\newblock In {\em Annales de la Facult{\'e} des sciences de Toulouse: Math{\'e}matiques}, volume~24, pages 691--716, 2015.

\bibitem[CZ23]{chen2023nearlinear}
Xiaoyu Chen and Xinyuan Zhang.
\newblock A near-linear time sampler for the {I}sing model with external field.
\newblock In Nikhil Bansal and Viswanath Nagarajan, editors, {\em {SODA}}, pages 4478--4503, 2023.

\bibitem[DHP20]{delcourt2020glauber}
Michelle Delcourt, Marc Heinrich, and Guillem Perarnau.
\newblock The {G}lauber dynamics for edge-colorings of trees.
\newblock {\em Random Structures Algorithms}, 57(4):1050--1076, 2020.

\bibitem[DK17]{dinur2017high}
Irit Dinur and Tali Kaufman.
\newblock High dimensional expanders imply agreement expanders.
\newblock In Chris Umans, editor, {\em {FOCS}}, pages 974--985, 2017.

\bibitem[EF23]{eppstein2023rapid}
David Eppstein and Daniel Frishberg.
\newblock Rapid mixing for the hardcore {G}lauber dynamics and other {M}arkov chains in bounded-treewidth graphs.
\newblock In {\em {ISAAC}}, pages 30:1--30:13, 2023.

\bibitem[EHSV23]{efthymiou2023optimal}
Charilaos Efthymiou, Thomas~P. Hayes, Daniel Stefankovic, and Eric Vigoda.
\newblock Optimal mixing via tensorization for random independent sets on arbitrary trees.
\newblock In {\em {APPROX/RANDOM}}, pages 33:1--33:16, 2023.

\bibitem[FGYZ22]{feng2022rapid}
Weiming Feng, Heng Guo, Yitong Yin, and Chihao Zhang.
\newblock Rapid mixing from spectral independence beyond the boolean domain.
\newblock {\em {ACM} Trans. Algorithms}, 18(3):28:1--28:32, 2022.

\bibitem[GJK10]{goldberg2010mixing}
Leslie~Ann Goldberg, Mark Jerrum, and Marek Karpinski.
\newblock The mixing time of {G}lauber dynamics for coloring regular trees.
\newblock {\em Random Structures Algorithms}, 36(4):464--476, 2010.

\bibitem[God81]{godsil1981matchings}
Chris~D. Godsil.
\newblock Matchings and walks in graphs.
\newblock {\em Journal of Graph Theory}, 5(3):285--297, 1981.

\bibitem[God93]{godsil1993algebratic}
Chris~D. Godsil.
\newblock {\em Algebraic combinatorics}.
\newblock Chapman and Hall mathematics series. Chapman and Hall, 1993.

\bibitem[GSV16]{galanisSV16}
Andreas Galanis, Daniel Stefankovic, and Eric Vigoda.
\newblock Inapproximability of the partition function for the antiferromagnetic {I}sing and hard-core models.
\newblock {\em Comb. Probab. Comput.}, 25(4):500--559, 2016.

\bibitem[HL72]{heilmann1972theory}
Ole~J Heilmann and Elliott~H Lieb.
\newblock Theory of monomer-dimer systems.
\newblock {\em Communications in mathematical Physics}, 25(3):190--232, 1972.

\bibitem[Jer03]{jerrum2003counting}
Mark Jerrum.
\newblock {\em Counting, sampling and integrating: algorithms and complexity}.
\newblock Lectures in Mathematics ETH Z\"{u}rich. Birkh\"{a}user Verlag, Basel, 2003.

\bibitem[JS89]{jerrum1989approximating}
Mark Jerrum and Alistair Sinclair.
\newblock Approximating the permanent.
\newblock {\em SIAM J. Comput.}, 18(6):1149--1178, 1989.

\bibitem[JS96]{jerrum1996markov}
Mark Jerrum and Alistair Sinclair.
\newblock {\em The Markov chain Monte Carlo method: an approach to approximate counting and integration}, page 482–520.
\newblock PWS Publishing Co., USA, 1996.

\bibitem[KM17]{kaufman2017high}
Tali Kaufman and David Mass.
\newblock High dimensional random walks and colorful expansion.
\newblock In {\em {ITCS}}, pages 4:1--4:27, 2017.

\bibitem[KO20]{kaufman2020high}
Tali Kaufman and Izhar Oppenheim.
\newblock High order random walks: Beyond spectral gap.
\newblock {\em Comb.}, 40(2):245--281, 2020.

\bibitem[Liu21]{liu2021coupling}
Kuikui Liu.
\newblock From coupling to spectral independence and blackbox comparison with the down-up walk.
\newblock In Mary Wootters and Laura Sanit{\`{a}}, editors, {\em {APPROX/RANDOM}}, pages 32:1--32:21, 2021.

\bibitem[Liu23]{liu2023spectral}
Kuikui Liu.
\newblock {\em Spectral Independence a New Tool to Analyze Markov Chains}.
\newblock PhD thesis, University of Washington, 2023.

\bibitem[LLY13]{li2013correlation}
Liang Li, Pinyan Lu, and Yitong Yin.
\newblock Correlation decay up to uniqueness in spin systems.
\newblock In {\em {SODA}}, pages 67--84, 2013.

\bibitem[LMP09]{lucier2009glauber}
Brendan Lucier, Michael Molloy, and Yuval Peres.
\newblock The {G}lauber dynamics for colourings of bounded degree trees.
\newblock In {\em Approximation, randomization, and combinatorial optimization}, volume 5687 of {\em Lecture Notes in Comput. Sci.}, pages 631--645. Springer, Berlin, 2009.

\bibitem[Mar03]{martin2003reconstruction}
James~B. Martin.
\newblock Reconstruction thresholds on regular trees.
\newblock In {\em Discrete Random Walks, DRW'03, Paris, France, September 1-5, 2003}, volume~{AC} of {\em {DMTCS} Proceedings}, pages 191--204, 2003.

\bibitem[MSW03]{martinelli2003ising}
Fabio Martinelli, Alistair Sinclair, and Dror Weitz.
\newblock The {I}sing model on trees: Boundary conditions and mixing time.
\newblock In {\em {FOCS}}, pages 628--639, 2003.

\bibitem[MSW04]{martinelli2004fast}
Fabio Martinelli, Alistair Sinclair, and Dror Weitz.
\newblock Fast mixing for independent sets, colorings and other models on trees.
\newblock In {\em {SODA}}, pages 456--465, 2004.

\bibitem[Opp18]{oppenheim2018local}
Izhar Oppenheim.
\newblock Local spectral expansion approach to high dimensional expanders part {I:} descent of spectral gaps.
\newblock {\em Discret. Comput. Geom.}, 59(2):293--330, 2018.

\bibitem[RSV{\etalchar{+}}14]{restrepo2014phase}
Ricardo Restrepo, Daniel Stefankovic, Juan~Carlos Vera, Eric Vigoda, and Linji Yang.
\newblock Phase transition for {G}lauber dynamics for independent sets on regular trees.
\newblock {\em {SIAM} J. Discret. Math.}, 28(2):835--861, 2014.

\bibitem[Sin88]{sinclair1988randomised}
Alistair Sinclair.
\newblock {\em Randomised algorithms for counting and generating combinatorial structures}.
\newblock PhD thesis, 1988.
\newblock AAID-85048.

\bibitem[Sly10]{sly2010computational}
Allan Sly.
\newblock Computational transition at the uniqueness threshold.
\newblock In {\em {FOCS}}, pages 287--296. IEEE, 2010.

\bibitem[SS12]{sly2012computational}
Allan Sly and Nike Sun.
\newblock The computational hardness of counting in two-spin models on d-regular graphs.
\newblock In {\em FOCS}, pages 361--369, 2012.

\bibitem[SZ17]{sly2017glauber}
Allan Sly and Yumeng Zhang.
\newblock The {G}lauber dynamics of colorings on trees is rapidly mixing throughout the nonreconstruction regime.
\newblock {\em Ann. Appl. Probab.}, 27(5):2646--2674, 2017.

\bibitem[TVVY12]{tetali2012phase}
Prasad Tetali, Juan~C. Vera, Eric Vigoda, and Linji Yang.
\newblock Phase transition for the mixing time of the {G}lauber dynamics for coloring regular trees.
\newblock {\em Ann. Appl. Probab.}, 22(6):2210--2239, 2012.

\bibitem[Wei06]{weitz2006counting}
Dror Weitz.
\newblock Counting independent sets up to the tree threshold.
\newblock In {\em {STOC}}, pages 140--149, 2006.

\bibitem[WZZ23]{wang2023sampling}
Yulin Wang, Chihao Zhang, and Zihan Zhang.
\newblock Sampling proper colorings on line graphs using (1+o(1)){\(\Delta\)} colors.
\newblock {\em CoRR}, abs/2307.08080, 2023.

\end{thebibliography}

\end{document}